\newtheorem{theorem}{Theorem}[section]
\theoremstyle{remark}
\newtheorem{remark}{Remark}
\theoremstyle{definition}
\newcommand{\etal}{\textit{et al}.}
\newcommand{\ie}{\textit{i}.\textit{e}.}
\newcommand{\eg}{\textit{e}.\textit{g}.}
\newcommand{\iid}{i.i.d.\xspace}
\newcommand{\revision}{}
\begin{document}
%
\title{Image Restoration Learning via Noisy Supervision in the Fourier Domain}
%
%
%

\author{Haosen~Liu,
        Jiahao~Liu,
        Shan~Tan,~
        and~Edmund~Y.~Lam,~\IEEEmembership{Fellow,~IEEE}
\thanks{Haosen Liu and Edmund Y. Lam are with the Department of Electrical and Electronic Engineering, The University of Hong Kong, Hong Kong SAR, China (e-mail: hsliu@eee.hku.hk, elam@eee.hku.hk).}
\thanks{Jiahao Liu and Shan Tan are with the Key Laboratory of Image Processing and Intelligent Control, Ministry of Education; School of Artificial Intelligence and Automation, Huazhong University of Science and Technology, Wuhan, China (e-mail: jiahaoliu\_aia@hust.edu.cn, shantan@hust.edu.cn).}
\thanks{Corresponding author: Edmund Y. Lam.}
}

%
%

\markboth{Submitted to IEEE TRANSACTIONS ON PATTERN ANALYSIS AND MACHINE INTELLIGENCE}{}
%



\IEEEtitleabstractindextext{%
\begin{abstract}
Noisy supervision refers to supervising image restoration learning with noisy targets. It can alleviate the data collection burden and enhance the practical applicability of deep learning techniques. However, existing methods suffer from two key drawbacks. Firstly, they are ineffective in handling spatially correlated noise commonly observed in practical applications such as low-light imaging and remote sensing. Secondly, they rely on pixel-wise loss functions that only provide limited supervision information. This work addresses these challenges by leveraging the Fourier domain. We highlight that the Fourier coefficients of spatially correlated noise exhibit sparsity and independence, making them easier to handle. Additionally, Fourier coefficients contain global information, enabling more significant supervision. Motivated by these insights, we propose to establish noisy supervision in the Fourier domain. We first prove that Fourier coefficients of a wide range of noise converge in distribution to the Gaussian distribution. Exploiting this statistical property, we establish the equivalence between using noisy targets and clean targets in the Fourier domain. This leads to a unified learning framework applicable to various image restoration tasks, diverse network architectures, and different noise models. Extensive experiments validate the outstanding performance of this framework in terms of both quantitative indices and perceptual quality.
\end{abstract}

\begin{IEEEkeywords}
Weakly supervised learning, noisy supervision, image restoration, Fourier domain.
\end{IEEEkeywords}
}

\maketitle

%
\IEEEpeerreviewmaketitle

\section{Introduction}
%
%
%
%
\IEEEPARstart{I}{mage} restoration aims to provide an accurate estimation~$\hat{z}$ of a high-quality latent image $z$ from its degraded low-quality measurement $x$. Traditional image restoration methods rely on hand-designed image prior models. Some representative techniques include the variational methods~\cite{TV}, sparse coding~\cite{K-SVD}, collaborative filtering~\cite{BM3D}, and the low-rank methods~\cite{SAIST, WNNM}. These methods often have limited performance as their models struggle to effectively capture the complex properties of real-world images. In recent years, deep learning methods have gained increasing attention in image restoration tasks due to their larger model capacity. They have demonstrated significant improvements over traditional model-based methods in areas such as image denoising~\cite{DnCNN, RED}, super-resolution~\cite{SRCNN, DAN}, and deblurring~\cite{DeepDeblur}. However, the outstanding performance of deep learning methods is typically based on supervised learning, which requires access to a set of high-quality images as training targets. Unfortunately, obtaining such images in practice is often challenging or even impossible, limiting the practical applicability of these methods.

In imaging systems, a fundamental trade-off exists between signal-to-noise ratio (SNR) and spatiotemporal resolution. When images are captured at high spatial resolutions or ultra-high speeds, the limited amount of incident light received at each pixel leads to low SNR. This issue becomes even more pronounced in low-light conditions. Consequently, when collecting data, one is faced with a choice between obtaining high-resolution but noisy images or clean but low-resolution images. In many situations, it is generally impractical to collect high-resolution and clean images as training targets. In this scenario, some approaches propose post-processing the captured high-resolution but noisy images to generate pseudo-clean targets. These methods either estimate the latent clean image from multiple (\eg, 150) noisy measurements of the same scene~\cite{SIDD} or apply denoising techniques such as BM3D~\cite{BM3D} to remove noise in each noisy observation~\cite{RealBlur}. However, both strategies have drawbacks compared to using noisy images directly as targets. The former approach requires the scene to be static so that the repetitive measurements is feasible, not applicable to dynamic scenes. The latter approach introduces unwanted artifacts and may inadvertently smooth out image details, as our experiments will demonstrate.

This work aims to develop efficient supervision for image restoration learning with noisy images as targets. A pioneering work in this field is the Noise2Noise~(N2N)~\cite{N2N}, which establishes a fundamental statistical equivalence, \ie, if the noise $n$ in the noisy target $y$ is zero-mean and independent of the input $x$, then
\begin{equation}    \label{eqn: N2N}
    \mathbb{E}\left[\|f_\theta(x) - y\|_2^2\right] = \|f_\theta(x) - z\|_2^2 + const,
\end{equation}
where $\|\cdot\|_2$ denotes the $\ell_2$ norm, $f_\theta(\cdot)$ denotes the network with parameters $\theta$, $z$ is the latent clean image, $x = g(z)$ is the image degraded by the degradation model $g(\cdot)$, and $y = z + n$ is the image corrupted by the zero-mean noise $n$. Note that the independence between $n$ and $x$ can naturally hold if the degraded image $x$ and the noisy image $y$ are two independent observations. Following N2N, several variants have been proposed by using this equivalence along with additional assumptions such as the spatial independence of noise (Noise2Void~\cite{N2V}, Noise2Self~\cite{N2S}, Self2Self~\cite{S2S}, \revision{Neighbor2Neighbor~\cite{Nr2Nr}}) and the prior knowledge of the noise model (Noiser2Noise~\cite{Noiser2Noise}, Recorrupted2Recorrupted~\cite{R2R}). 

\begin{figure}[!t]
\centering
\includegraphics[width=\linewidth]{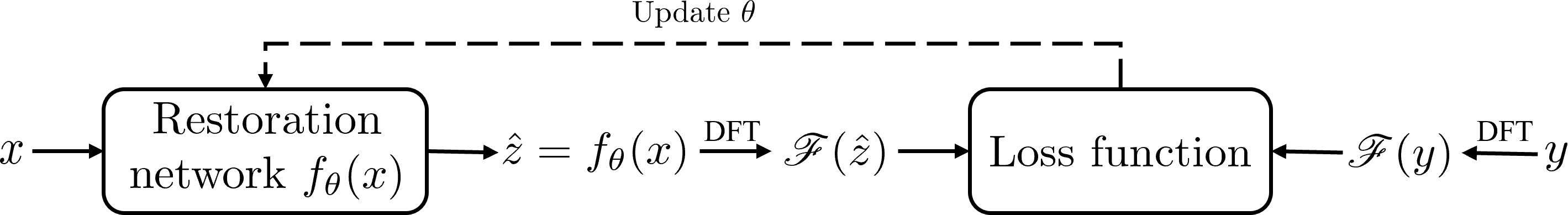}
\caption{The IR-NSF framework overview. We propose to learn a restoration network through supervision from noisy targets $y$ in the Fourier domain.}
\label{fig: overview}
\end{figure}

These existing methods suffer from two common drawbacks. Firstly, they are ineffective when dealing with spatially correlated noise that is widely observed in practical applications such as low-light vision~\cite{Low-Light} and remote sensing~\cite{Remote-Sensing}. As noted in \cite{N2N}, the performance of N2N deteriorates as the spatial correlation of noise increases. This issue becomes even more pronounced for its variants, as their assumptions may be violated. For correlated noise, the spatial independence of noise no longer holds, and it becomes more challenging to determine the noise model. Secondly, the loss function in (\ref{eqn: N2N}) penalizes each local pixel individually, capturing only limited information and failing to provide global supervision for image restoration learning. Consequently, even if the statistical equivalence given by (\ref{eqn: N2N}) holds, N2N and its variants would still exhibit inferior performance.

To tackle these drawbacks, we introduce the novel concept of establishing noisy supervision in the Fourier domain, driven by two key observations. Firstly, spatially correlated noise often exhibits sparsity and independence in the Fourier domain. The most representative example of such noise is the stripe noise that holds a strong spatial correlation along each row or the entire image. It is commonly observed in low-light vision~\cite{Low-Light}, remote sensing~\cite{Remote-Sensing}, infrared imaging~\cite{Infrared}, and microscopy imaging~\cite{Microscopy}. Other examples include the stationary correlated noise in X-ray imaging~\cite{X-ray} and various structure noise mentioned by~\cite{VSNR, SN-BM3D}. Secondly, each Fourier coefficient is calculated based on the entire image, encompassing global information and offering more significant supervision. When clean targets are available, using a frequency-based loss can yield better results compared to a local pixel-based loss in various tasks such as image super-resolution \cite{Fourier-Perceptual, SUDF-DDL}, image deblurring \cite{MIMO}, and image synthesis \cite{FFL}. This highlights the benefits of utilizing the Fourier domain for image restoration.

To lay the foundation for establishing Fourier-based noisy supervision, we present several key results. Firstly, similar to using the central limit theorem (CLT) in the analysis of discrete cosine transform coefficients in image compression~\cite{DCT}, this work conducts a comprehensive analysis of noise properties in the Fourier domain, showing that the Fourier coefficients of many types of noise converge in distribution to the Gaussian distribution. This enables a unified treatment of different noise models under a common framework. Secondly, we investigate the impact of noise on a Fourier-based loss and establish the statistical equivalence between using noisy targets and clean targets for such loss. Leveraging this equivalence, we propose an efficient learning framework, termed image restoration learning via noisy supervision in the Fourier domain (IR-NSF), as illustrated in Fig.~\ref{fig: overview}. To validate its effectiveness, we apply it to multiple image restoration tasks, diverse network architectures, and different noise models. In all cases, IR-NSF consistently demonstrates outstanding performance in terms of both quantitative metrics and perceptual quality.

Furthermore, it is important to highlight the broader impacts of this work in fields such as data collection and unsupervised denoising. Our experiments suggest that collecting noisy targets not only relieves the burden of data collection but also leads to superior results compared to generating pseudo-clean targets. Moreover, the newly established equivalence enables the development of unsupervised denoising techniques for spatially correlated noise, exemplified by an unsupervised stripe removal designed to address the stripe noise.

In summary, our main contributions are:
\begin{itemize}
    \item \emph{Advantage of the Fourier Domain}: We highlight the benefits of establishing noisy supervision in the Fourier domain. We demonstrate that spatially-correlated noise can exhibit sparsity and independence in the Fourier domain, and Fourier coefficients provide more significant supervision.
    
    \item \emph{Consistent Distribution of Fourier Coefficients}: We show that Fourier coefficients of many types of noise converge in distribution to the Gaussian distribution. This enables us to analyze different noise models within a unified framework, enhancing the understanding of noise characteristics.
    
    \item \emph{Fundamental Statistical Equivalence}: We establish a fundamental statistical equivalence between using noisy targets and clean targets \revision{under the assumption that noise in the noisy target is zero-mean, independent of the input in the training pair, and has Fourier coefficients that follow Gaussian distributions}. It forms the basis for an efficient Fourier-based learning framework that enhances the practical applicability of diverse networks on various image restoration tasks.

    \item \emph{Broad Impacts}: \revision{Our analysis of a beam-splitting data collection system designed for dynamic scenes demonstrates that omitting the denoising step during data collection and employing the noisy supervision can yield superior image restoration results.} Additionally, we develop an effective unsupervised stripe removal by leveraging the proposed Fourier-based statistical equivalence to address the stripe noise.
\end{itemize}


\section{Related Work}
This section will provide a brief introduction to works that are highly related to this study.

\subsection{Noise Model}    \label{sec: noise_model}
The most widely adopted noise model is the additive, independent, and homogeneous Gaussian distribution~\cite{Gaussian}. However, it is insufficient for effectively modeling the signal-dependent noise such as the Poisson-Gaussian noise. To address this limitation, Foi~\etal~proposed a heteroscedastic Gaussian model~\cite{Poi-Gau}. The error introduced by approximating the Poisson noise with the Gaussian noise becomes smaller than $1\%$ when the photo flux is more than 5 photons per pixel~\cite{Poi-Gau-Accuracy}. While the heteroscedastic Gaussian model is more realistic than the homogeneous one, it still fails to account for structured noise commonly observed in various applications. For example, recent studies have demonstrated the presence of periodic noise and row noise in images captured under extremely low-light conditions~\cite{Low-Light}. Similarly, periodic stripes have been observed in whisk-room imaging systems, while non-periodic stripes occur in push-broom imaging for remote sensing~\cite{Remote-Sensing}. Notably, \cite{CMOS} emphasizes the importance of reducing stripe noise for the CMOS image sensor, as stripe-wise noise tends to be significantly more visible than pixel-wise noise. Moreover, noise affecting projections in X-ray medical imaging is commonly assumed to be stationary and correlated~\cite{X-ray}. To describe these types of noise, Fehrenbach~\etal~\cite{VSNR} and M{\"a}kinen~\etal~\cite{SN-BM3D} proposed a simple yet effective model, \ie,
\begin{equation}    \label{eqn: NM} 
    n = h * \eta,
\end{equation}
where $h$ denotes a correlation kernel, $\eta$ denotes zero-mean \iid~noise, and $*$ denotes the convolution operator. By setting different $h$ and $\eta$, this model can represent both the uncorrelated and correlated noise~\cite{VSNR, SN-BM3D}. 

In this work, we will show that although the different types of noise mentioned above follow various distributions in the spatial domain, their Fourier coefficients all approximately follow the Gaussian distribution. This enables a unified analysis of these noise types under a common framework. Additionally, we will demonstrate that spatially correlated noise can exhibit sparsity and independence in the Fourier domain, highlighting the potential advantages of utilizing it for developing noisy supervision methods.

\subsection{Noisy Supervision for Image Restoration}
Existing methods in this field are all developed in the spatial domain, with a primary focus on the denoising task. 

Some representatives are the N2N and its variants. They are all established based on the statistical equivalence given by (\ref{eqn: N2N}), while differing in the way of constructing input-target pairs $(x, y)$. 
The N2N collects training pairs from two independent measurements of the same scene. Its variants aim at unsupervised denoising and generate input-target pairs from a single noisy image. For instance, Noise2Void~\cite{N2V}, Noise2Self~\cite{N2S}, Self2Self~\cite{S2S}, and \revision{Neighbor2Neighbor~\cite{Nr2Nr}} assume spatial independence of the noise and propose dividing each noisy image into two parts, where one serves as the input and the other as the target. On the other hand, variants such as Noiser2Noise~\cite{Noiser2Noise} and Recorrupted2Recorrupted~\cite{R2R} assume the given noise model and its parameters, generating input-target pairs by further adding noise to the captured image. However, due to the limitations of their assumptions, these variants cannot effectively handle spatially-correlated noise.

Additionally, there are other types of methods explored in this field. For instance, SDA-SURE~\cite{SURE} develops noisy supervision based on Stein's unbiased risk estimator but requires noise to be \iid~Gaussian. The AmbientGAN~\cite{AmbientGAN} and noise-robust GAN (NR-GAN)~\cite{NR-GAN} make the learning process of a generative adversarial network (GAN) robust to corruption. The SS-GMM~\cite{SS-GMM} learns the image prior with the Gaussian mixture model (GMM) from only a single noisy image. However, these methods rely on assumptions, such as prior knowledge of noise model~\cite{AmbientGAN}, rotation-invariance~\cite{NR-GAN}, and \iid~Gaussian~\cite{SS-GMM}, which are not applicable to general spatially-correlated noise. Furthermore, their performance has not been validated on the diverse image restoration tasks. Another approach, the deep image prior (DIP)~\cite{DIP}, utilizes the early-stopping strategy to regularize the generator training. But how to determine the stopping point remains unclear, undermining its practical performance. Also, it requires training a generator for each degraded image, making it extremely time-consuming.

In this work, we propose that the Fourier domain is more suitable for developing noisy supervision, and we establish a Fourier-based statistical equivalence that not only paves the way for a novel learning framework applicable to diverse image restoration tasks and noise models but also enhances the potential practical applicability of the aforementioned methods, opening up new avenues for their improvement.

\subsection{Fourier-based Supervision}
The Fourier transform has long been a fundamental tool in image processing and has recently been integrated into deep learning designs. Chen~\etal~\cite{Recombination} improved the generalization behavior of convolutional neural networks (CNN) for image classification by utilizing the amplitude-phase recombination strategy. This approach was based on the observation that humans rely heavily on phase information for recognition. In contrast, Jiang~\etal~emphasized the necessity of both amplitude and phase for image reconstruction~\cite{FFL}. They proposed a novel method that measures the frequency distance between real and fake images to supervise learning for image reconstruction and synthesis. Similarly, Fuoli~\etal ~\cite{Fourier-Perceptual} demonstrated the effectiveness of developing supervision using Fourier amplitude and phase in achieving superior perceptual results for image super-resolution. Furthermore, Fourier-based supervision has also shown promising results in the image deblurring task~\cite{MIMO}. However, these approaches typically require clean targets, which poses challenges in practical scenarios.

In this work, our goal is to develop efficient Fourier-based supervision directly from noisy targets, which are more readily available. By addressing this challenge, we aim to bridge the gap between the advancements made by these existing methods and their practical applicability.

\section{Method}
\revision{In this section, we aim to develop a statistical equivalence between noisy and clean targets by assuming that noise in the noisy target is zero-mean, independent of the input in the training pair, and has Fourier coefficients that follow Gaussian distributions. Initially, in Sec.~3.2, we conduct a noise analysis in the Fourier domain to illustrate how the assumption concerning the Fourier coefficients of noise is mild and applicable to a diverse range of noise types. Subsequently, in Sec.~3.3, we present the proof and discussions regarding this statistical equivalence.}

\subsection{Notation and Definition}
In this work, the latent clean image and the noisy image are denoted by $z$ and $y$, respectively. The target noise is defined as $n = y - z$. 
They are denoted by 2-dimensional variables of the size $U \times V$. We use $[u,v]$ to denote the position indices in the spatial domain, \eg, the noise $n$ at the position $[u,v]$ is denoted by $n[u,v]$. The Fourier transform adopted in this work refers to the 2-dimensional discrete Fourier transform (2D-DFT). For a 2-dimensional variable such as the noise $n$, its Fourier transform is denoted by $\mathscr{F}(n)$ and is calculated as 
\begin{equation}    \label{eqn: DFT}
   \mathscr{F}(n)[k,l] = \frac{1}{UV} \sum\limits_{u=0}^{U-1} \sum\limits_{v=0}^{V-1} n[u,v] e^{-j2\pi \left(\frac{ku}{U}+\frac{lv}{V}\right)},
\end{equation}
where $[k,l]$ denote the position indices in the Fourier domain. The real and imaginary coefficients of $\mathscr{F}(n)$ are denoted by $a(n)$ and $b(n)$, respectively. They are calculated as 
\begin{small}
\begin{equation}    \label{eqn: real}
   a(n)[k,l] = \frac{1}{UV} \sum\limits_{u=0}^{U-1} \sum\limits_{v=0}^{V-1} n[u,v] \cos\left(2\pi \left(\frac{ku}{U}+\frac{lv}{V}\right)\right),
\end{equation}
\end{small}
and
\begin{small}
\begin{equation}    \label{eqn: imag}
   b(n)[k,l] = \frac{-1}{UV} \sum\limits_{u=0}^{U-1} \sum\limits_{v=0}^{V-1} n[u,v] \sin\left(2\pi \left(\frac{ku}{U}+\frac{lv}{V}\right)\right).
\end{equation}
\end{small}\revision{These equations demonstrate that the Fourier coefficients of an variable are either the average or a weighted sum of its values across all positions, thereby containing the global information of this variable.
} 

\subsection{Noise Analysis in the Fourier Domain} \label{sec: noisy_analysis}
\subsubsection{Spatially-independent noise}
Our analysis begins by considering the Poisson-Gaussian noise. For a Poisson distribution $\mathcal{P}(\lambda)$, it can be approximated by a Gaussian distribution if $\lambda$ is sufficiently large. In the field of image processing, $\lambda$ refers to the expected number of photons detected by the image sensor. It is observed in~\cite{Poi-Gau-Accuracy} that the error of approximating Poisson distribution with Gaussian distribution is less than $1\%$ when the photon flux is more than 5 photons per pixel (\ie, $\lambda \geq 5$). As such, even under the low-light conditions~\cite{Low-Light}, the Poisson-Gaussian noise can be approximately described by the heteroscedastic Gaussian model~\cite{Poi-Gau}, \ie,
\begin{equation}
    n[u,v] \sim \mathcal{N}\big(0, \alpha z[u,v] + \beta\big),
\end{equation}
where $\alpha$ and $\beta$ are noise parameters. For noise following such a model, it is relatively intuitive that each of its Fourier coefficients follows a Gaussian distribution since the linear combination of independent Gaussian variables is still a Gaussian variable.

Moreover, we observe that such a property is not only possessed by the Poisson-Gaussian noise that can be approximated by the heteroscedastic Gaussian model, but by any type of \iid~noise whose distribution might look very different from the Gaussian distribution. More specifically, we have the following theorem.
\begin{theorem} \label{theorem: CLT-DFT-iid}
    Suppose $n$ is \iid~noise with bounded cumulants. As $UV$ approaches infinity, each Fourier coefficient of $n$ will (i) converge in distribution to a Gaussian distribution and (ii) be independent with other coefficients.
\end{theorem}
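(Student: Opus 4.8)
The plan is to prove both claims via a multivariate central limit theorem applied to the real and imaginary Fourier coefficients, which are linear functionals of the \iid{} noise samples. First I would fix a finite set of Fourier indices and stack the corresponding real and imaginary coefficients $a(n)[k,l]$ and $b(n)[k,l]$ from \eqref{eqn: real}--\eqref{eqn: imag} into a single random vector. Each entry is of the form $\frac{1}{UV}\sum_{u,v} n[u,v]\, c_{u,v}$, where $c_{u,v}$ is a bounded cosine or sine weight. I would normalize by multiplying by $\sqrt{UV}$, so that each coordinate has variance of order $1$ (using that the $n[u,v]$ are \iid{} with finite variance, and that $\sum_{u,v} c_{u,v}^2 \approx UV/2$ by orthogonality of the discrete trigonometric system away from the degenerate indices $k\in\{0,U/2\}$, $l\in\{0,V/2\}$). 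The goal then reduces to showing this normalized vector converges in distribution to a multivariate Gaussian whose covariance matrix is diagonal.

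The convergence itself I would obtain from the Lyapunov (or Lindeberg--Feller) central limit theorem for triangular arrays of independent summands. The summands $\frac{1}{\sqrt{UV}} n[u,v]\, c_{u,v}$ are independent across $(u,v)$, uniformly bounded in the weights, and have bounded higher moments by the hypothesis of bounded cumulants; the Lyapunov ratio $\sum_{u,v} \mathbb{E}|n[u,v] c_{u,v}/\sqrt{UV}|^{3} \big/ \big(\sum_{u,v}\operatorname{Var}(\cdot)\big)^{3/2}$ is $O((UV)^{-1/2})\to 0$, so the CLT applies coordinatewise; the Cram\'er--Wold device upgrades this to joint convergence of the vector. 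To identify the limiting covariance, I would compute $\operatorname{Cov}$ between any two coefficients: since $\mathbb{E}[n[u,v]n[u',v']] = \sigma^2 \delta_{(u,v),(u',v')}$, each cross-covariance collapses to $\frac{\sigma^2}{(UV)^2}\sum_{u,v} c_{u,v} c'_{u,v}$, and the orthogonality relations for sums of products of sines and cosines over the DFT grid make these vanish for distinct Fourier indices and also make the real and imaginary parts at the same index uncorrelated. A diagonal Gaussian covariance means the jointly Gaussian limit has independent components, which gives claim (ii); claim (i) is the marginal statement.

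The main obstacle I anticipate is not the CLT machinery but the bookkeeping of the trigonometric orthogonality identities over the two-dimensional DFT grid, including the exceptional ``self-conjugate'' frequencies (the DC term and Nyquist frequencies when $U$ or $V$ is even), where a coefficient may be purely real, have doubled variance, or be deterministically zero; these cases need to be stated as caveats or excluded, since there the ``converge to a genuine non-degenerate Gaussian and be independent'' claim must be read carefully. A secondary technical point is justifying that ``bounded cumulants'' delivers exactly the uniform moment control the Lyapunov condition needs — here I would simply note that a bounded third (or $2+\delta$) absolute moment suffices and follows from bounded cumulants together with finite variance. Once these are handled, assembling the coordinatewise CLT, the Cram\'er--Wold reduction, and the covariance computation into the stated two-part conclusion is routine.
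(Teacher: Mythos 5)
Your proposal is correct and structurally parallel to the paper's proof: both write each coefficient as a triangular-array weighted sum of the i.i.d.\ samples with $\sum_{u,v} c_{u,v}^2 = UV/2$, both invoke a CLT for such arrays, and both obtain claim (ii) by combining joint Gaussianity of linear combinations (the Cram\'er--Wold idea) with the DFT orthogonality relations, so that uncorrelated jointly Gaussian coordinates are independent. The one substantive difference is the CLT engine. The paper works with the cumulant generating function: using $\kappa_m[c_1X_1+c_2X_2]=c_1^m\kappa_m[X_1]+c_2^m\kappa_m[X_2]$, it shows every cumulant of order $m>2$ of the normalized coefficient is $O\big((UV)^{1-m/2}\big)\to 0$, hence $K_{\tilde X}(t)\to t^2/2$; this is precisely why the hypothesis is phrased as ``bounded cumulants.'' Your Lyapunov/Lindeberg route needs only a uniform $2+\delta$ absolute moment, so it is the more economical argument under weaker hypotheses, and it sidesteps two points the paper leaves implicit: that the moment generating function exists at all, and that the limit may be interchanged with the infinite cumulant series. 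Your explicit caveat about the self-conjugate frequencies is also a genuine improvement: the paper treats only $k=l=0$, where it declares $b(n)[0,0]=0$ a degenerate Gaussian, and is silent about the Nyquist rows and columns (when $U$ or $V$ is even), where the variance normalization $UV/2$ and the non-degeneracy of the limit require exactly the care you describe. Both routes deliver the theorem; yours is the more standard and slightly more general one, the paper's makes the role of the cumulant hypothesis transparent.
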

\begin{proof}
    The proof is provided in the Appendix~\ref{proof: CLT-DFT-iid}.
\end{proof}


\begin{figure}[t] \centering
    \makebox[0.02\textwidth]{}
    \makebox[0.22\textwidth]{\scriptsize \textbf{\quad \quad Spatial domain}}
    \makebox[0.22\textwidth]{\scriptsize \textbf{\quad \quad Fourier domain}}
    \vspace{.5em}
    \\
    \raisebox{0.63\height}{\makebox[0.02\textwidth]{\rotatebox{90}{\makecell{\scriptsize \textbf{Gaussian}}}}}
    \includegraphics[width=0.22\textwidth]{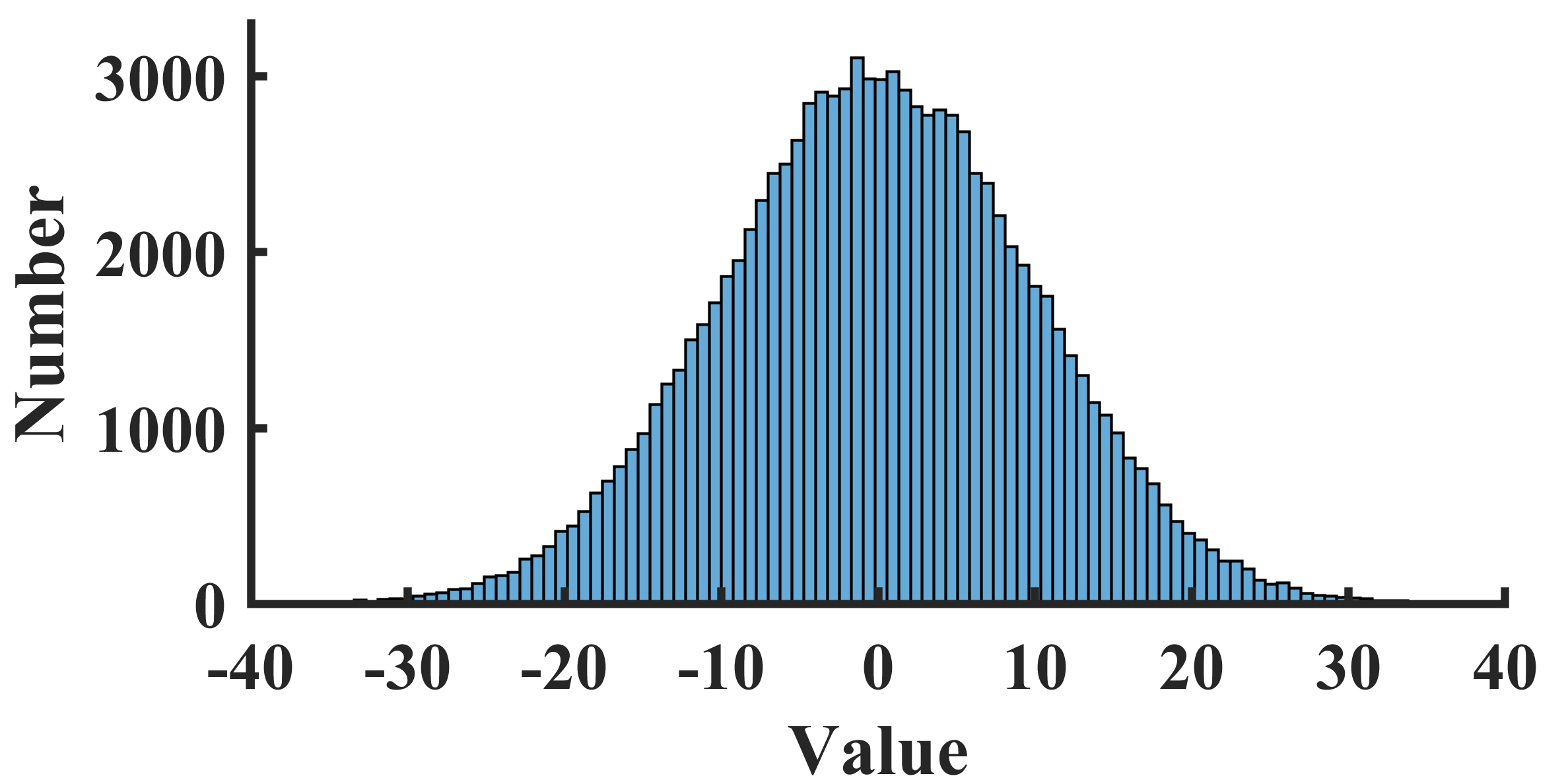}
    \includegraphics[width=0.22\textwidth]{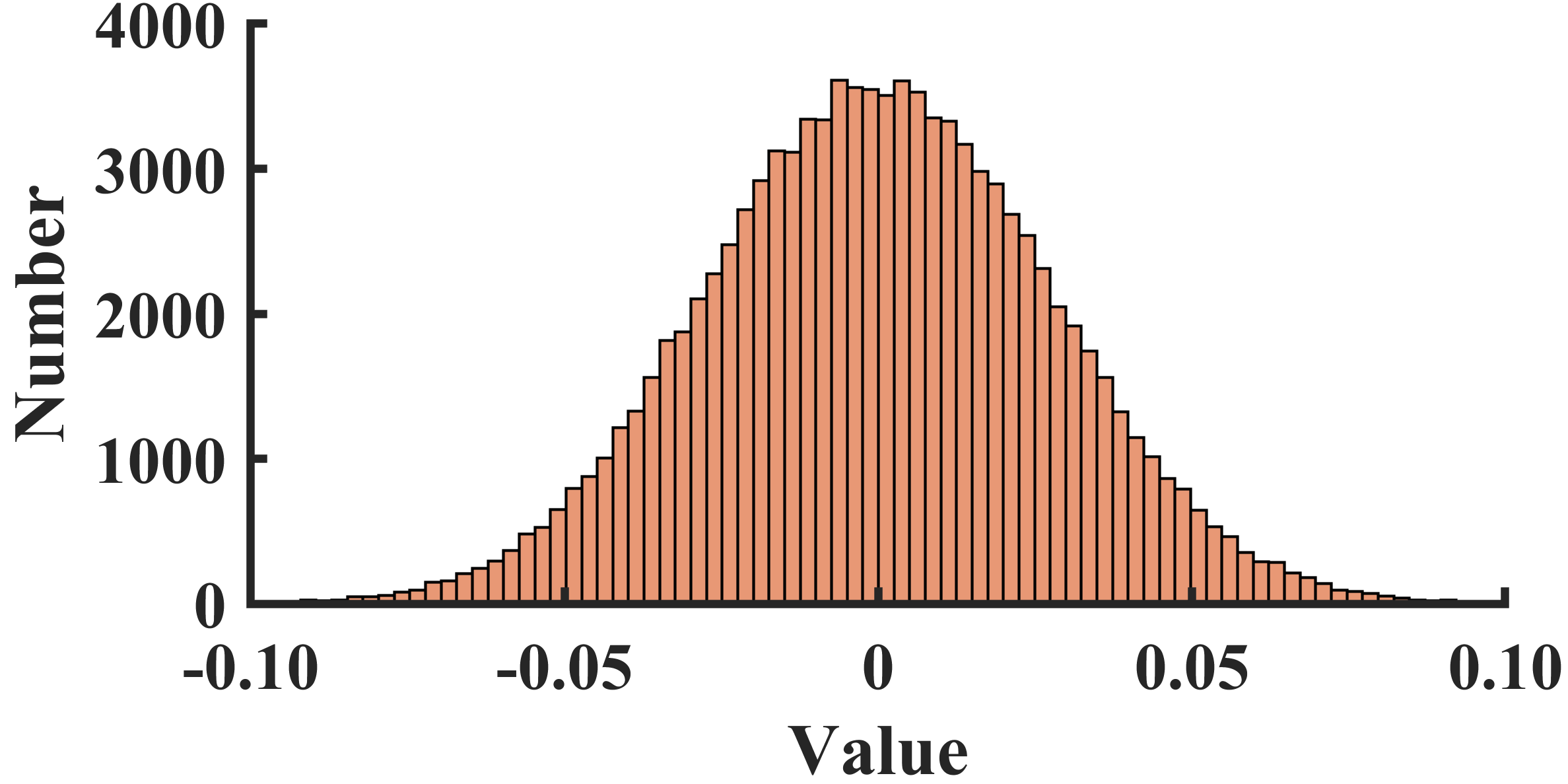}
    \\
    \raisebox{0.85\height}{\makebox[0.02\textwidth]{\rotatebox{90}{\makecell{\scriptsize \textbf{Poisson}}}}}
    \includegraphics[width=0.22\textwidth]{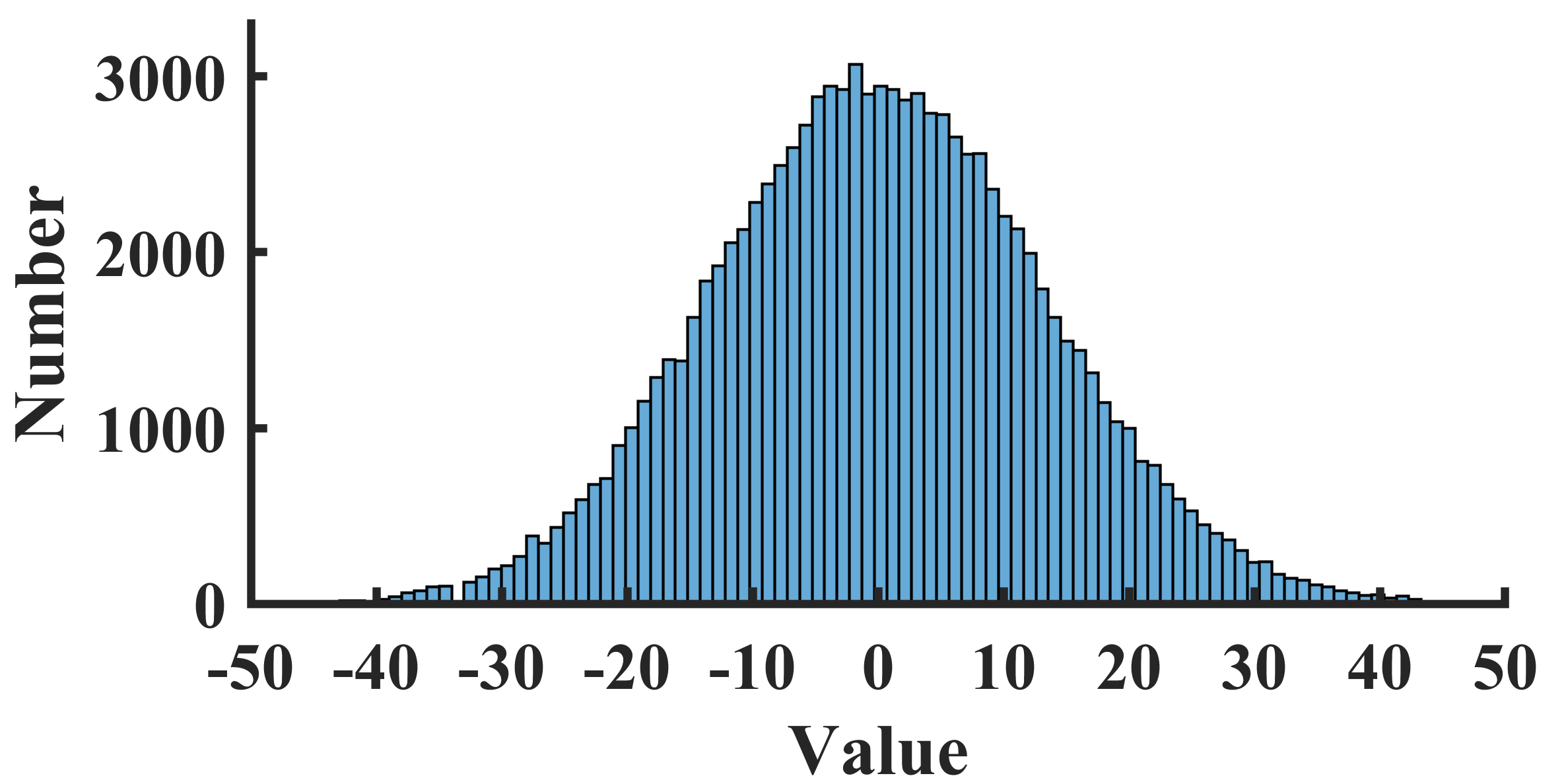}
    \includegraphics[width=0.22\textwidth]{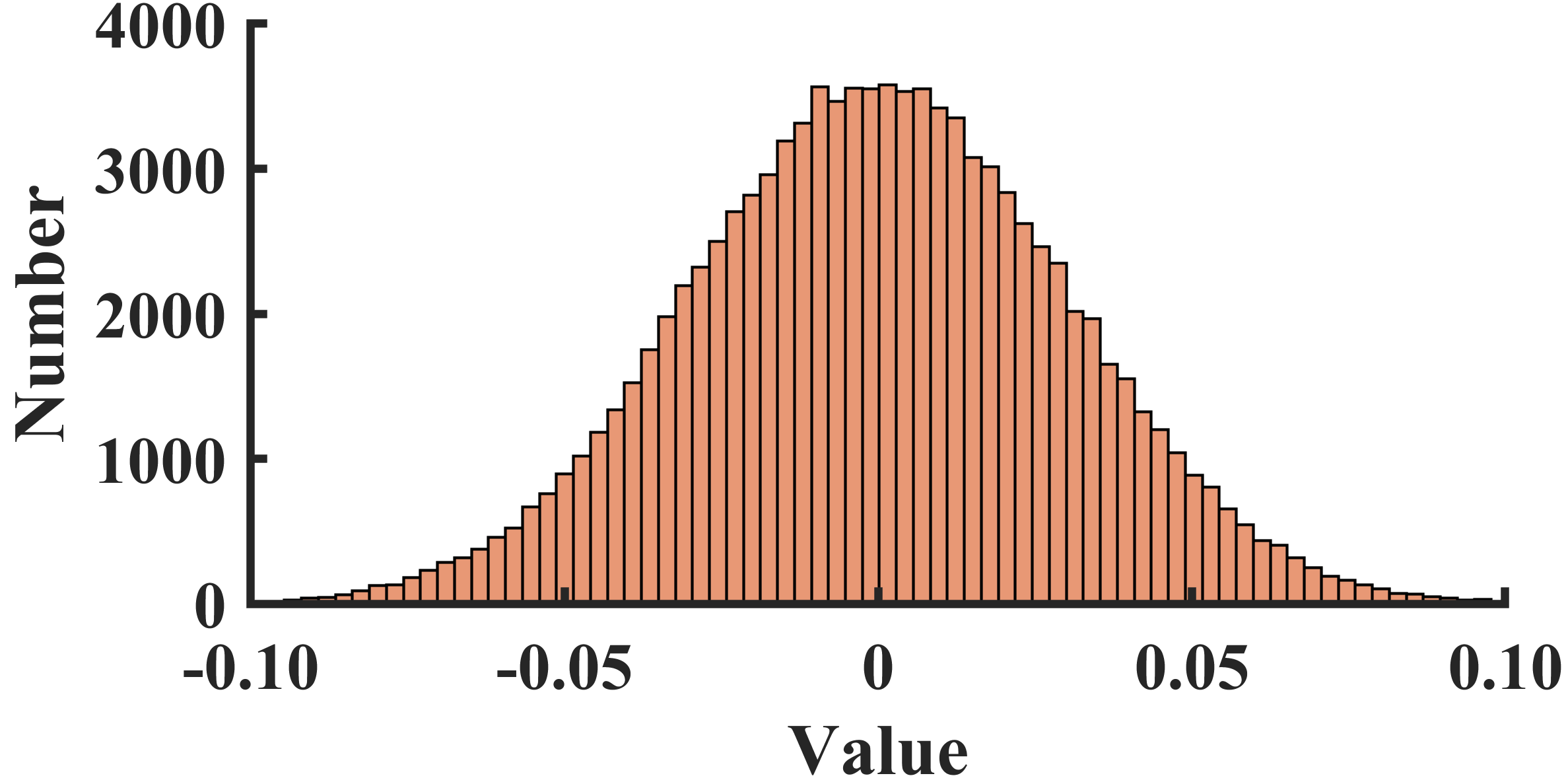}
    \\
    \raisebox{0.7\height}{\makebox[0.02\textwidth]{\rotatebox{90}{\makecell{\scriptsize \textbf{Uniform}}}}}
    \includegraphics[width=0.22\textwidth]{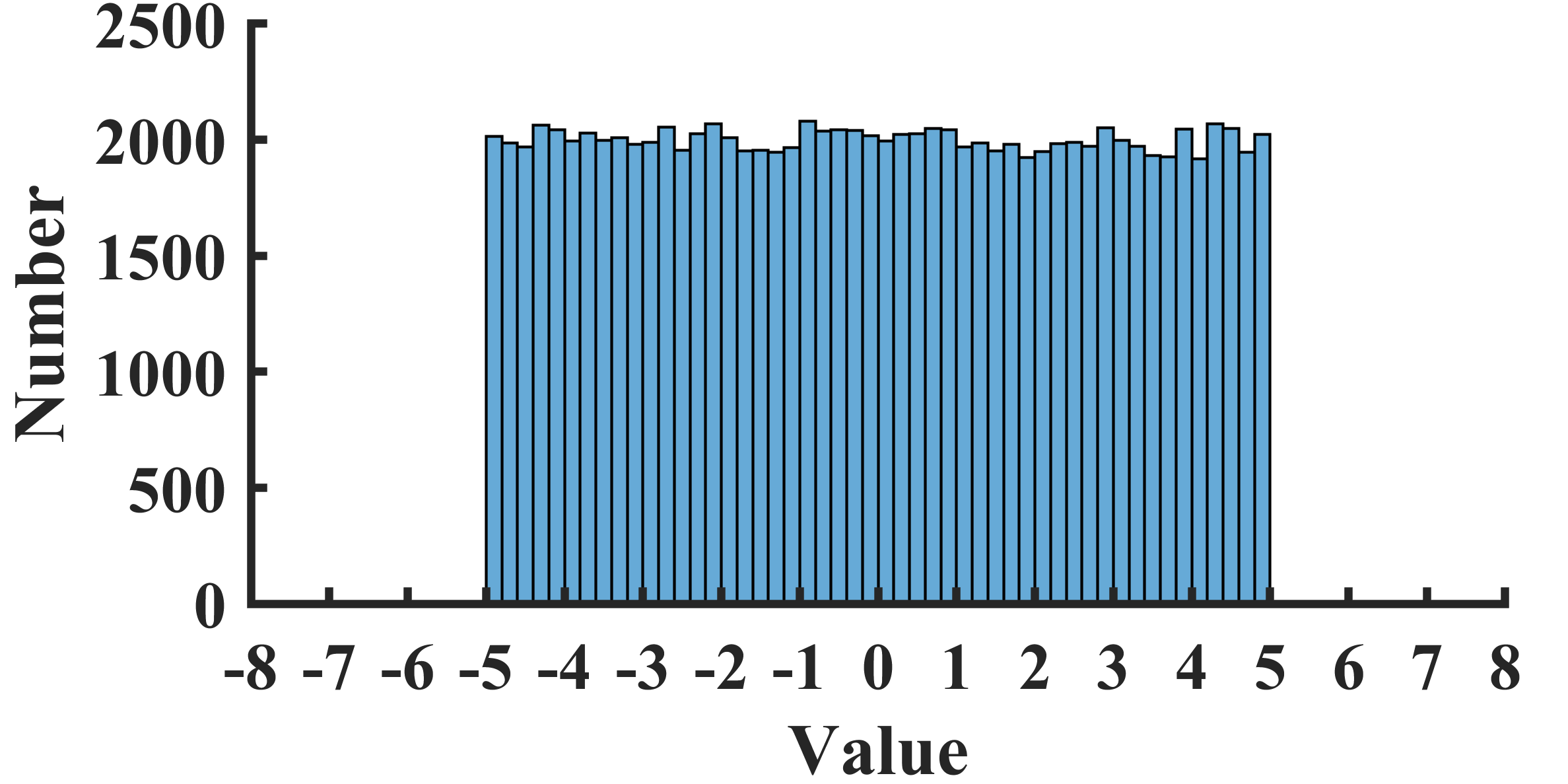}
    \includegraphics[width=0.22\textwidth]{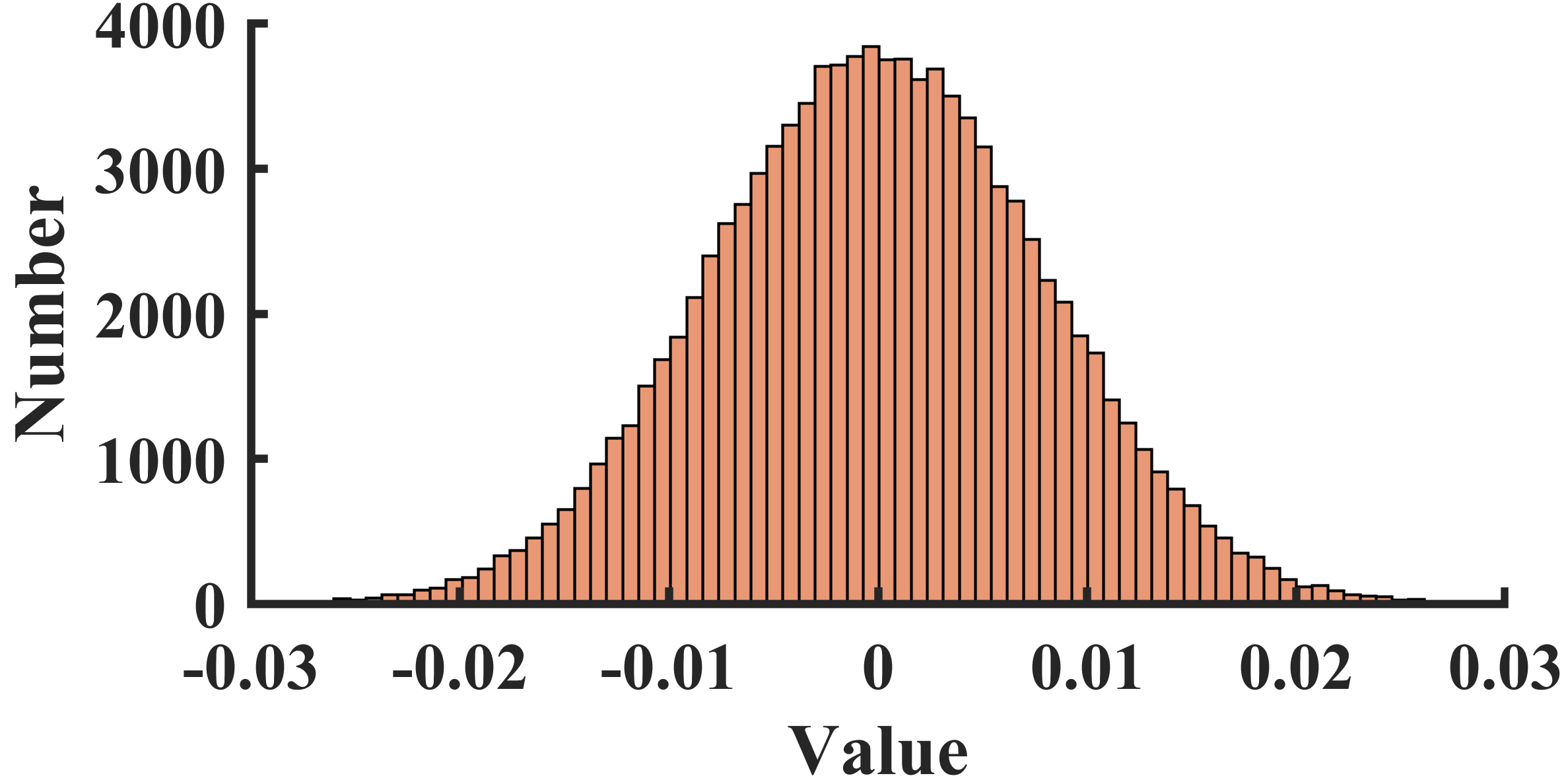}
    \\
    \raisebox{0.88\height}{\makebox[0.02\textwidth]{\rotatebox{90}{\makecell{\scriptsize \textbf{Laplace}}}}}
    \includegraphics[width=0.22\textwidth]{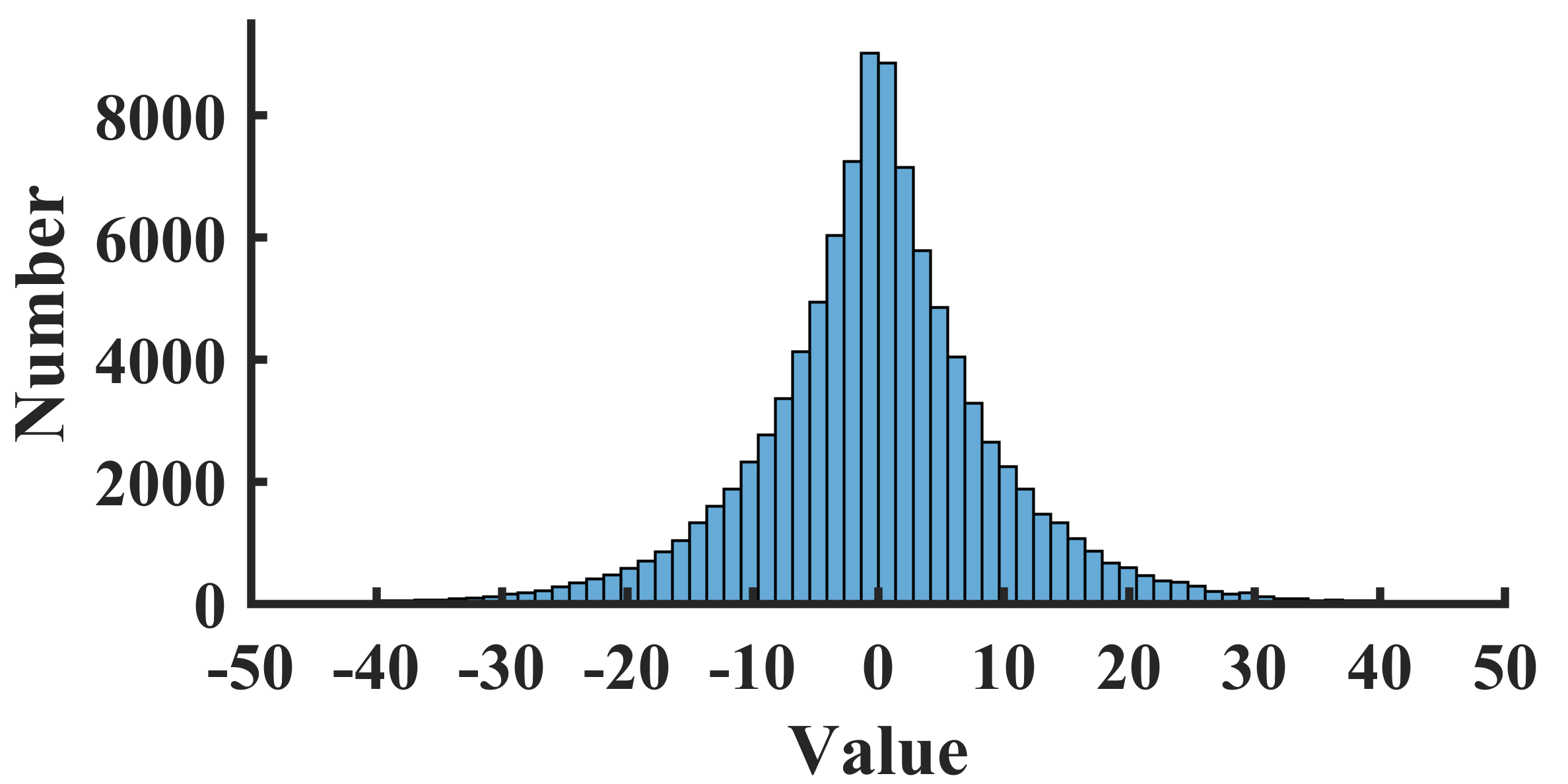}
    \includegraphics[width=0.22\textwidth]{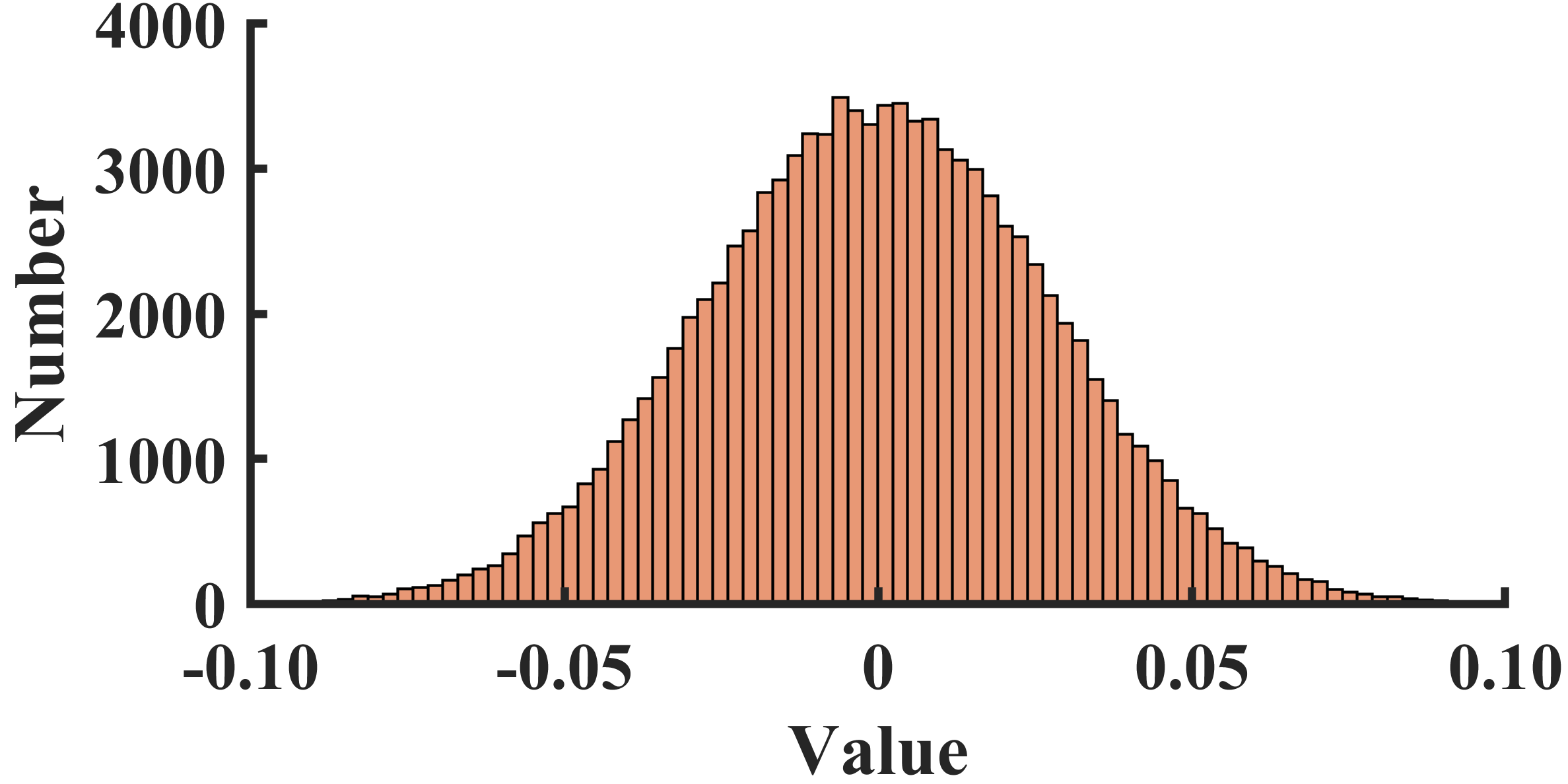}
    \\
    \caption{Histograms of noise in the spatial domain and its coefficient in Fourier domain. The histograms of the Fourier coefficients consistently approximate a Gaussian distribution for noise following different distributions in the spatial domain.}
    \label{fig: noise_analysis}
\end{figure}
\begin{remark}[Gaussian Approximation]
    Theorem~\ref{theorem: CLT-DFT-iid} indicates that, when the image size (\ie, $UV$) is sufficiently large, each Fourier coefficient of \iid~noise $n$ approximately follows a Gaussian distribution. In practical situations, $UV$ is generally much greater than $10^3$, which is large enough for this approximation. 
\end{remark}
\begin{remark}[Mean and Variance]
    A Gaussian distribution is characterized by its mean and variance. Since the Fourier transform is linear, the mean value of $\mathscr{F}(n)$ can be obtained by $\mathbb{E}[\mathscr{F}(n)] = \mathscr{F}(\mathbb{E}[n])$. Therefore, if the noise $n$ has a zero mean, its Fourier coefficients will also have a zero mean. Regarding the variance of $\mathscr{F}(n)$, it tends to be much smaller than the variance of $n$ due to the normalization factor $1 / UV$ in (\ref{eqn: DFT}). For the \iid~noise with a variance of $\sigma_n^2$, most of its Fourier coefficients will hold the variance of $\sigma_n^2 / 2UV$.
\end{remark}

To illustrate our analysis, a statistical experiment is conducted on several noise distributions, and the results are visualized in Fig.~\ref{fig: noise_analysis}. In this experiment, we at first repeatedly generate noise instances $n$ of size $255 \times 255$ from a candidate distribution. Then, we calculate $\mathscr{F}(n)$ for each noise instance. Finally, we select a fixed pixel and a Fourier coefficient to plot the histograms. In the case of Poisson noise, it was generated based on a selected image, and the mean value was subtracted to maintain a zero mean. As Fig.~\ref{fig: noise_analysis} shows, the histograms of the Fourier coefficients approximate Gaussian distributions for noise following different distributions in the spatial domain. This empirical evidence supports our analysis regarding the noise distribution in the Fourier domain.





\begin{figure}[t] \centering
    \makebox[0.02\textwidth]{}
    \makebox[0.21\textwidth]{\scriptsize \boldmath$n$}
    \makebox[0.22\textwidth]{\scriptsize \boldmath$\text{var}[\mathscr{F}(n)]$}
    \\
    \raisebox{0.7\height}{\makebox[0.02\textwidth]{\rotatebox{90}{\makecell{\scriptsize \textbf{Periodic noise}}}}}
    \includegraphics[width=0.21\textwidth]{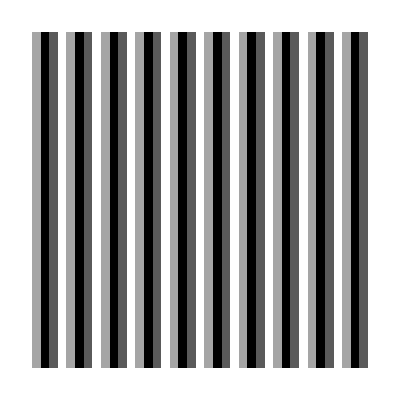}
    \includegraphics[width=0.22\textwidth]{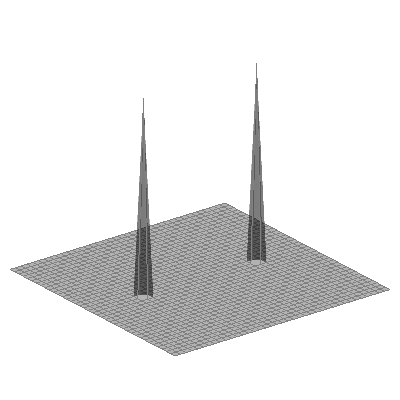}
    \\
    \raisebox{1.2\height}{\makebox[0.02\textwidth]{\rotatebox{90}{\makecell{\scriptsize \textbf{Row noise}}}}}
    \includegraphics[width=0.21\textwidth]{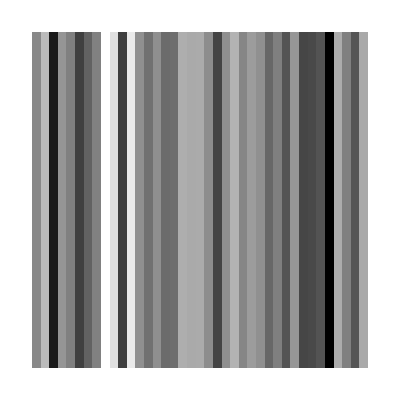}
    \includegraphics[width=0.22\textwidth]{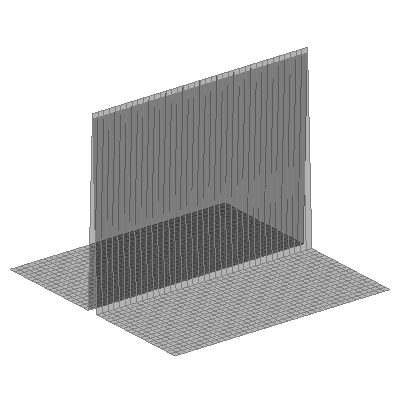}
    \\
    \caption{Visualization of periodic noise, row noise, and their corresponding $\text{var}[\mathscr{F}(n)]$. The row noise and periodic noise are generated in the same way as~\cite{SEID}~and~\cite{Low-Light}, respectively. Their $\text{var}[\mathscr{F}(n)]$ are zero at most of positions, only accumulating at a few positions.}
    \label{fig: psd}
\end{figure}
\subsubsection{Spatially-correlated Noise}
For the spatially-correlated noise, we focus our discussions on the stationary noise model shown in~(\ref{eqn: NM}), \ie, $n = h * \eta$, where $\eta$ is assumed to be zero-mean \iid~noise. Despite its simplicity in mathematical form, this model effectively describes various real-life noise sources such as stripe noise, band-limited noise, and pink noise~\cite{VSNR, SN-BM3D}. For these noise, we have the following theorem.
\begin{theorem} \label{theorem: CLT-DFT-stationary}
    Suppose the spatially-correlated noise $n$ can be modeled by~(\ref{eqn: NM}). As $UV$ approaches infinity, each Fourier coefficient of $n$ will (i) converge in distribution to the Gaussian distribution and (ii) be independent with other coefficients.
\end{theorem}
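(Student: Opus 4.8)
The plan is to reduce Theorem~\ref{theorem: CLT-DFT-stationary} to Theorem~\ref{theorem: CLT-DFT-iid} by exploiting the fact that the Fourier transform diagonalizes the convolution in~(\ref{eqn: NM}). First I would apply the 2D-DFT convolution property to write $\mathscr{F}(n)[k,l] = \mathscr{F}(h*\eta)[k,l] = UV\,\mathscr{F}(h)[k,l]\,\mathscr{F}(\eta)[k,l]$, where $UV\,\mathscr{F}(h)[k,l] = \sum_{u,v} h[u,v]\,e^{-j2\pi(ku/U + lv/V)}$ is a \emph{deterministic} complex scalar bounded by $\|h\|_1$ (I would take the convolution on the discrete torus; for a finitely supported kernel with ordinary linear convolution the discrepancy touches only $O(U+V)$ of the $UV$ pixels and is asymptotically negligible). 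Thus each Fourier coefficient of $n$ is a fixed linear image of the corresponding Fourier coefficient of the \iid~field $\eta$.

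Next I would apply Theorem~\ref{theorem: CLT-DFT-iid} to $\eta$ (zero-mean \iid; we additionally assume, as is implicit, that it has bounded cumulants): as $UV\to\infty$, $\mathscr{F}(\eta)[k,l]$ converges in distribution to a Gaussian and becomes asymptotically independent of the other coefficients. Claim~(i) then follows because multiplication by the fixed complex number $UV\,\mathscr{F}(h)[k,l]$ acts as a linear map on the $(\text{real},\text{imaginary})$ pair, and a linear image of a Gaussian vector is Gaussian; when $\mathscr{F}(h)[k,l]=0$ the limit is the degenerate point mass at $0$, consistent with the near-vanishing spectra in Fig.~\ref{fig: psd}. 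Claim~(ii) follows because applying a separate deterministic (measurable) map to each coordinate preserves independence, so the asymptotic independence of $\{\mathscr{F}(\eta)[k,l]\}$ across $[k,l]$ transfers verbatim to $\mathscr{F}(n)$.

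The main obstacle I anticipate is transferring the limit statements \emph{jointly} rather than just marginally: to deduce asymptotic independence of several coefficients of $n$ one needs joint convergence of the corresponding coefficients of $\eta$ (and of their real and imaginary parts), which is exactly what the proof of Theorem~\ref{theorem: CLT-DFT-iid} must already provide; if needed I would re-run its Cram\'er--Wold / characteristic-function argument with the extra deterministic weights absorbed into the linear functional. A secondary point is fixing the convolution convention and normalization and controlling boundary terms (trivial on the torus, a short estimate otherwise), plus noting that one needs $h$ fixed, or at least $\|h\|_1$ uniformly bounded, as the image grows. As a fallback that avoids Theorem~\ref{theorem: CLT-DFT-iid} entirely, one can expand $a(n)[k,l]$ and $b(n)[k,l]$ from~(\ref{eqn: real})--(\ref{eqn: imag}) directly as linear combinations of the \iid~variables $\eta[u,v]$, verify a Lyapunov condition, apply a multivariate CLT, and check that the limiting covariance factorizes across distinct $[k,l]$ because the cross terms are Riemann sums over products of distinct Fourier modes and vanish; I would keep the convolution-theorem route as primary since it is shorter and makes the structural reason transparent.
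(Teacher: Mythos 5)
Your proposal is correct and follows essentially the same route as the paper's proof: both apply the convolution theorem to write $\mathscr{F}(n)=\mathscr{F}(h)\cdot\mathscr{F}(\eta)$, invoke Theorem~\ref{theorem: CLT-DFT-iid} for the \iid~field $\eta$, and observe that multiplying by the deterministic factor $\mathscr{F}(h)[k,l]$ is a fixed linear map on the (real, imaginary) pair, so Gaussianity and coordinate-wise independence transfer to $n$. Your added care about the convolution convention, boundary terms, and the need for \emph{joint} convergence of the $\eta$-coefficients only makes explicit what the paper's argument implicitly assumes.
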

\begin{proof}
    The proof is provided in the Appendix~\ref{proof: CLT-DFT-stationary}.
\end{proof}
\begin{remark}[Mean and Variance]
    For the mean and variance of $\mathscr{F}(n)$,  
    \begin{equation}    
        \mathbb{E}[\mathscr{F}(n)] 
        = \mathscr{F}(h) \cdot \mathscr{F}(\mathbb{E}[\eta]) = 0,
    \end{equation}
    and
    \begin{equation}
        \text{var}[\mathscr{F}(n)] = |\mathscr{F}(h)|^2 \cdot \text{var}[\mathscr{F}(\eta)].
    \end{equation}
    An important observation from \cite{SN-BM3D} is that $|\mathscr{F}(h)|$ for diverse correlation kernels $h$ exhibits sparsity. This means that for diverse stationary noise, only a small fraction of Fourier coefficients are affected, while the remaining coefficients remain uncorrupted. To illustrate this, Fig.~\ref{fig: psd} visualizes the periodic noise, row noise, and their corresponding variances in the Fourier domain.
\end{remark}

In summary, we have concluded that the Fourier coefficients of a wide range of noise approximately follow the Gaussian distribution if the image size $UV$ is large enough. This allows for a unified treatment of different types of noise under a common framework. Additionally, we have pointed out that noise exhibiting spatial correlation can maintain both the independence and the sparsity in the Fourier domain. These findings emphasize the advantages of utilizing the Fourier domain for noisy supervision and provide insights into the development of effective noise mitigation strategies.





\begin{figure}[t] \centering
    \makebox[0.02\textwidth]{}
    \makebox[0.22\textwidth]{\scriptsize \quad \quad \boldmath$\varphi(t)=|t|$}
    \makebox[0.22\textwidth]{\scriptsize \quad \quad \textbf{Huber function}}
    \vspace{.5em}
    \\
    \raisebox{0.75\height}{\makebox[0.02\textwidth]{\rotatebox{90}{\makecell{\scriptsize \textbf{Original}}}}}
    \includegraphics[width=0.22\textwidth]{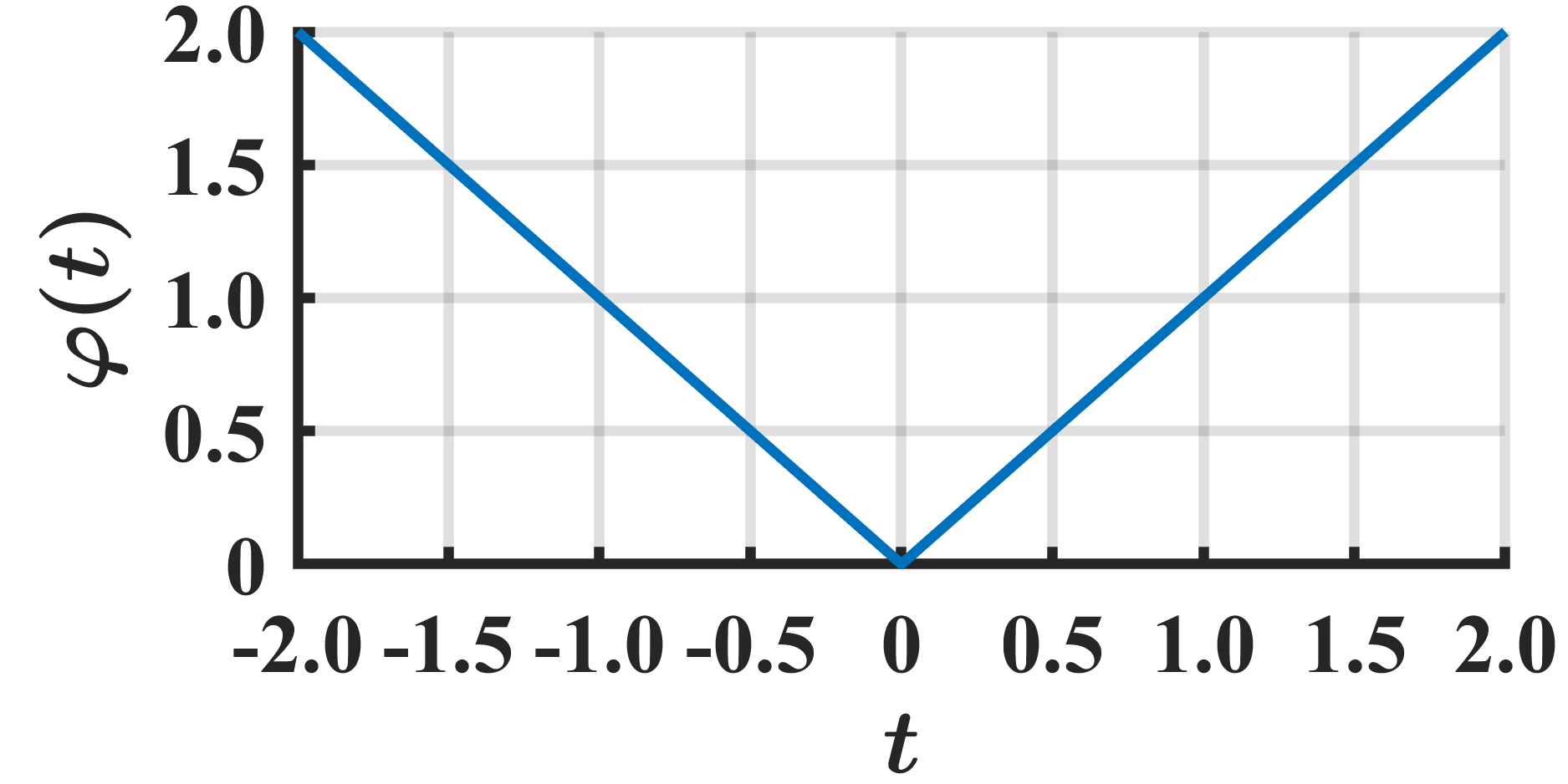}
    \includegraphics[width=0.22\textwidth]{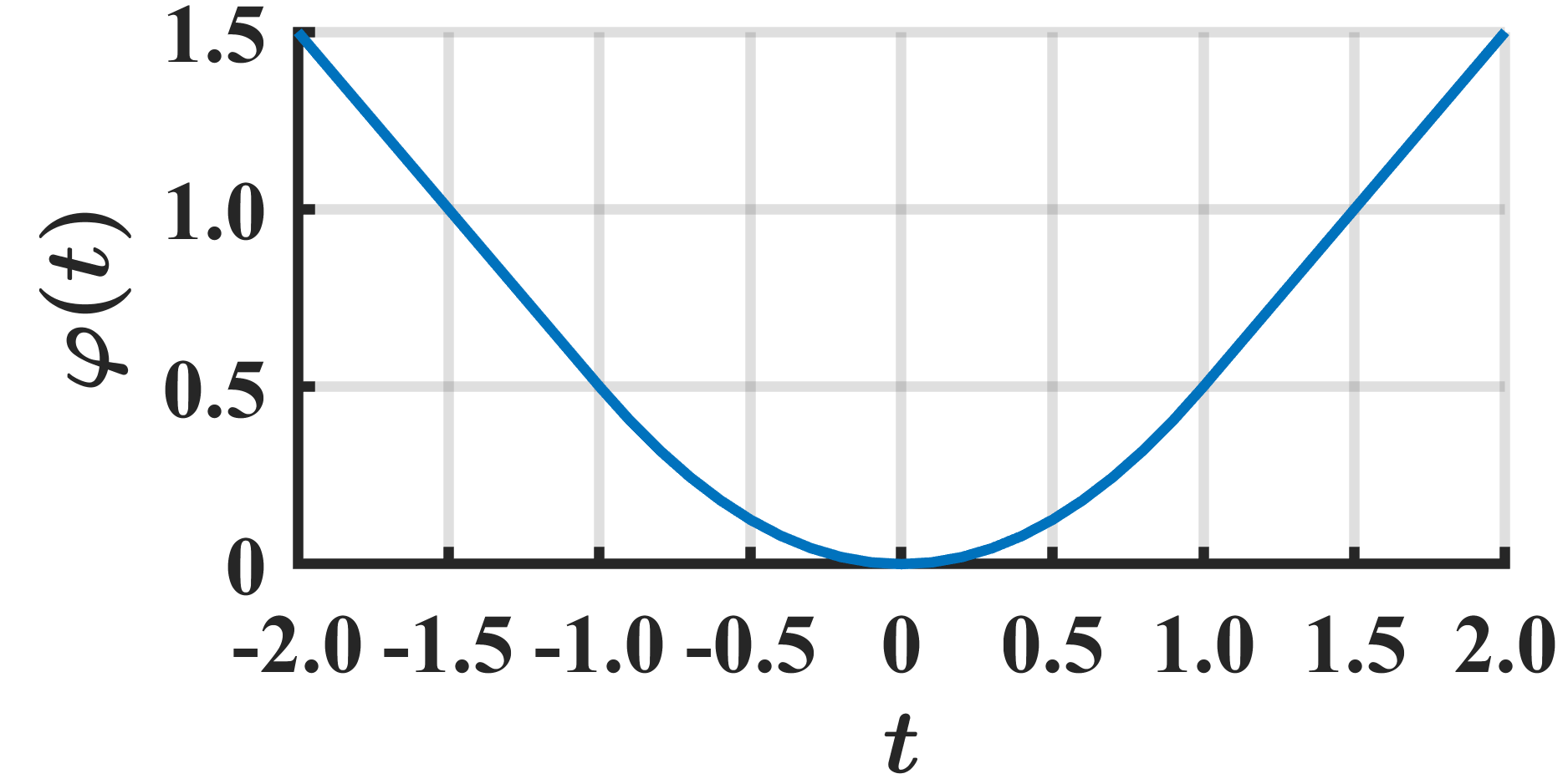}
    \\
    \raisebox{0.5\height}{\makebox[0.02\textwidth]{\rotatebox{90}{\makecell{\scriptsize \textbf{Equivalent}}}}}
    \includegraphics[width=0.22\textwidth]{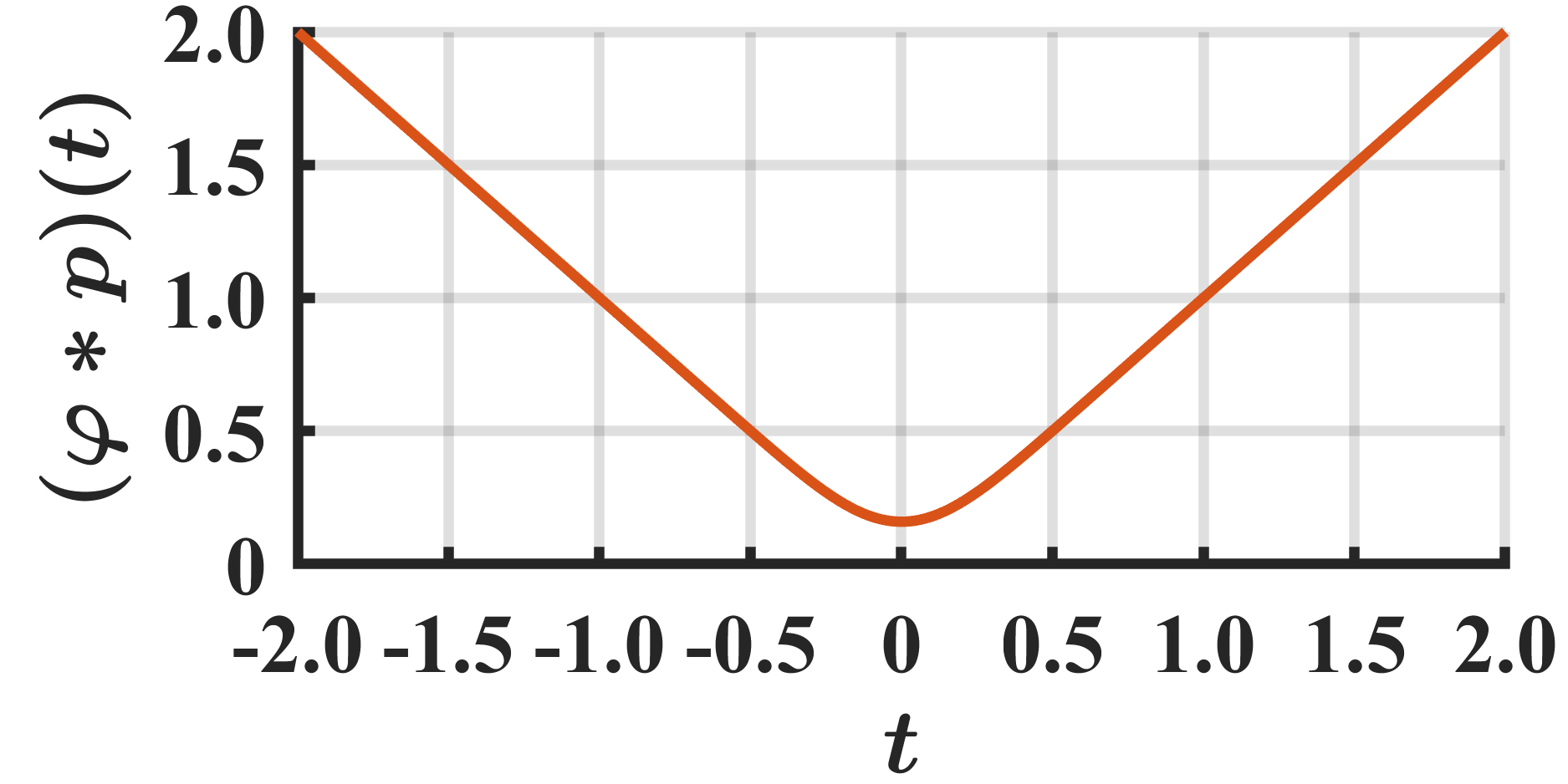}
    \includegraphics[width=0.22\textwidth]{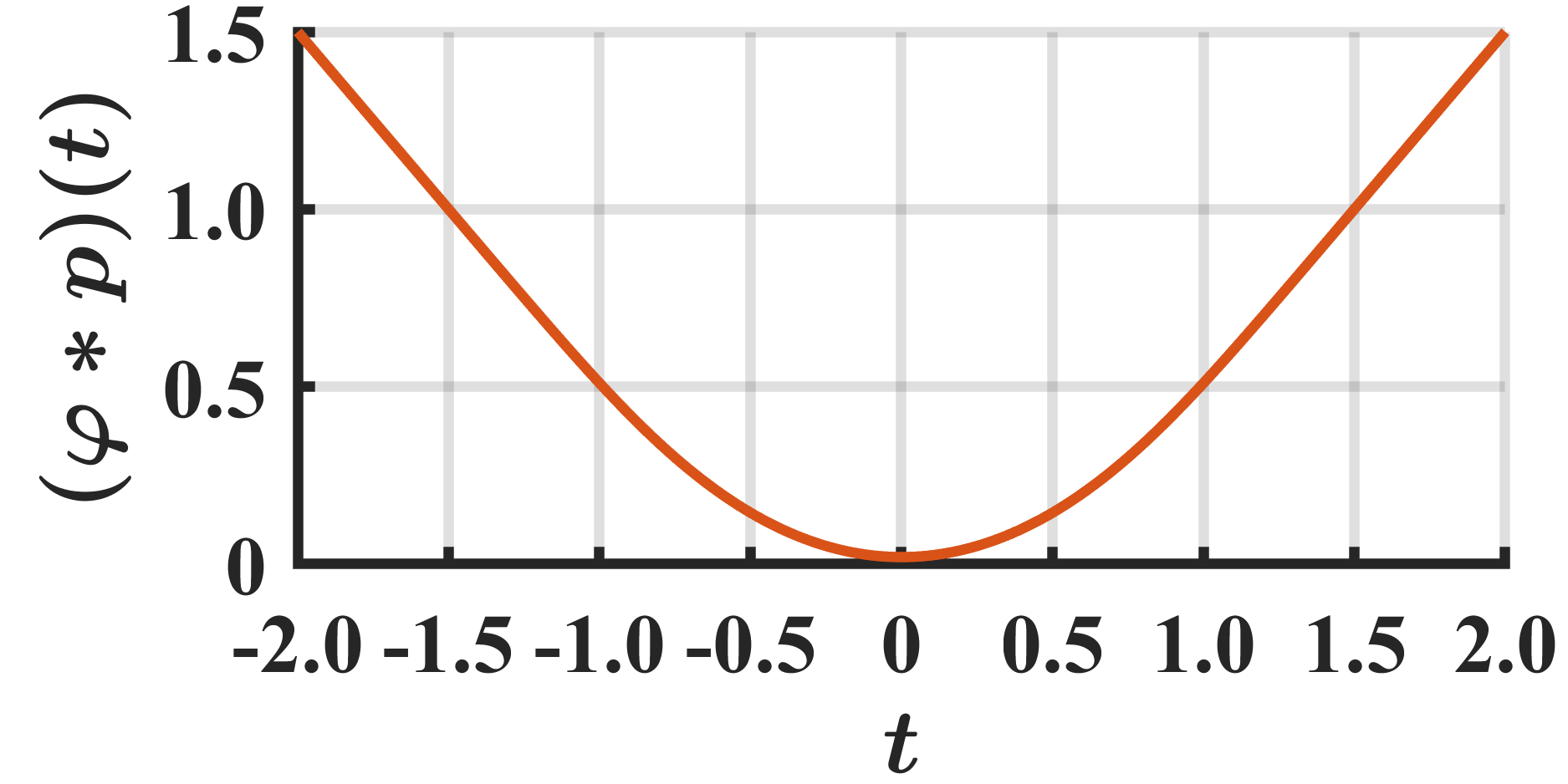}
    \\
    \raisebox{0.5\height}{\makebox[0.02\textwidth]{\rotatebox{90}{\makecell{\scriptsize \textbf{Derivative}}}}}
    \includegraphics[width=0.22\textwidth]{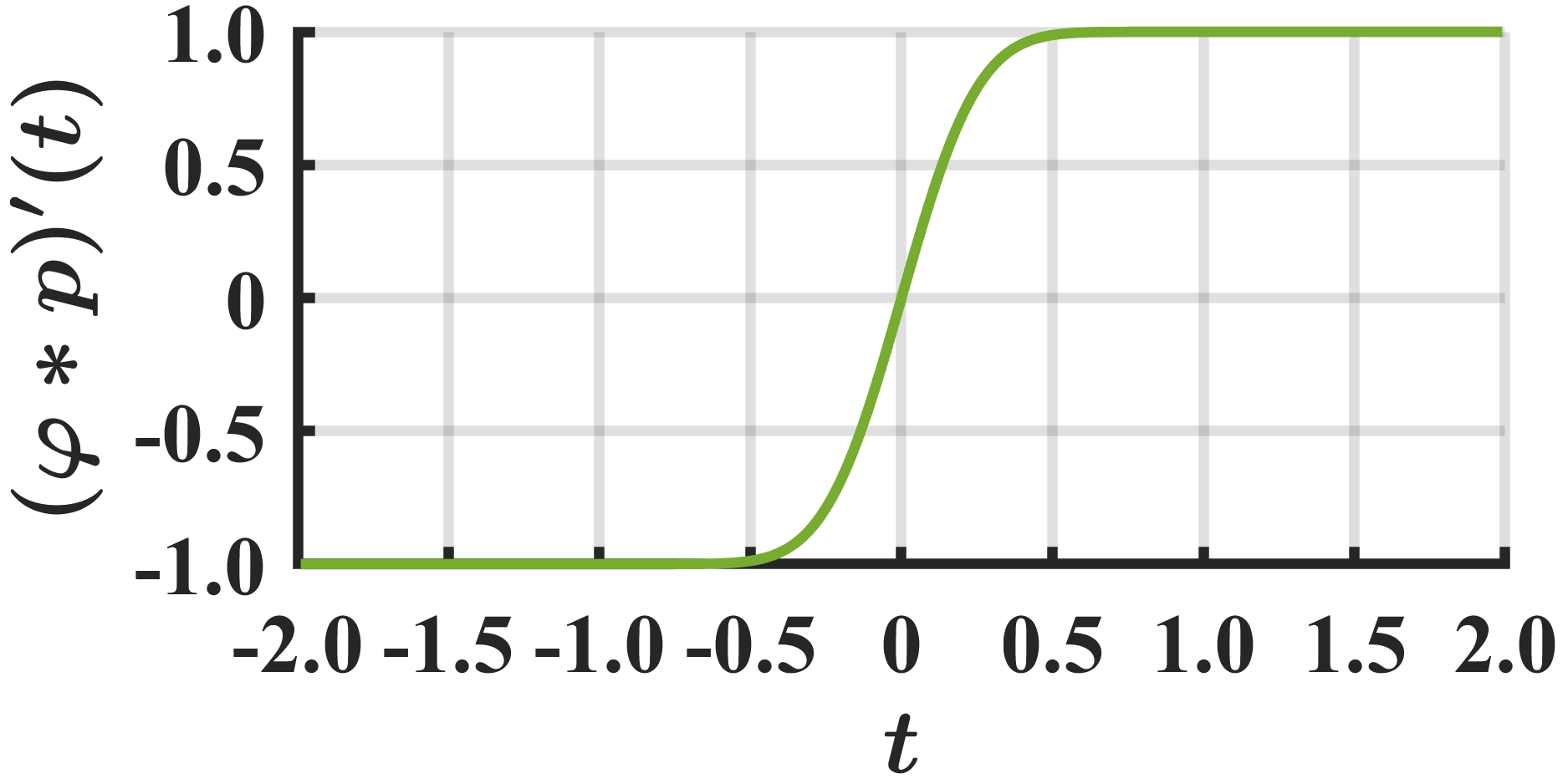}
    \includegraphics[width=0.22\textwidth]{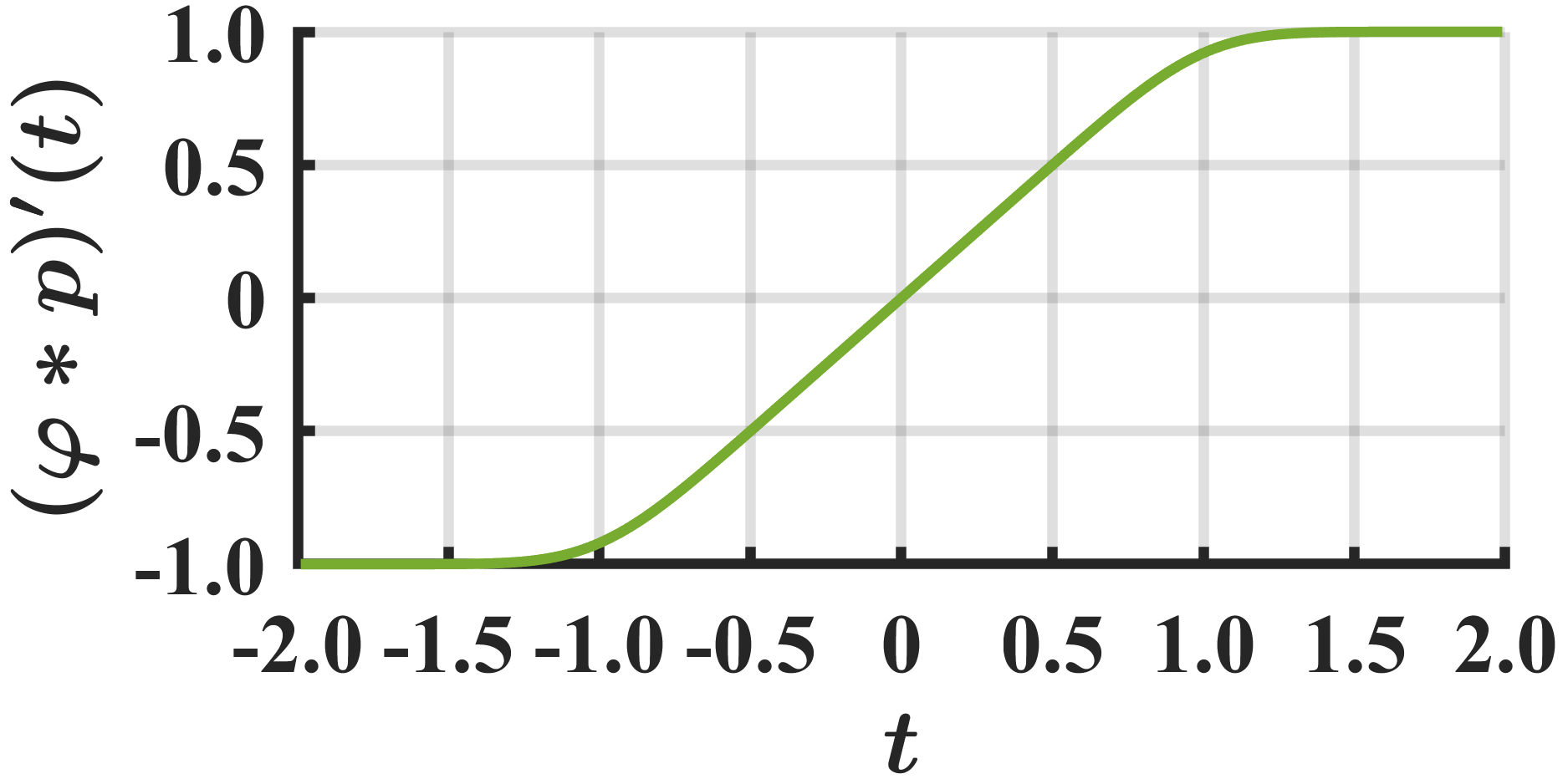}
    \\
    \caption{Statistical equivalence. Utilizing $\varphi(t)$ with noisy targets is statistically equivalent to utilizing $\phi(t)=\varphi(t)*p(t)$ with clean targets. The $p(t)$ corresponds to the zero-mean Gaussian distribution with $\sigma=0.2$.}
    \label{fig: equivalence}
\end{figure}
\subsection{Statistical Equivalence in the Fourier domain}
This subsection aims to demonstrate the efficacy of utilizing noisy targets in the Fourier domain to supervise image restoration learning, regarding which we have the following theorem.
\begin{theorem} \label{theorem: DFT-SE}
    Suppose the noise $n$ contained in the target $y$ is independent with the input $x$ and each of its Fourier coefficients (approximately) follows a zero-mean Gaussian distribution $p(t)$. Define the loss function with the network output $f_{\theta}(x)$ and the noisy target image $y=z+n$ in the Fourier domain as 
    \begin{align}   \label{eqn: loss}
        L_{\varphi}\big(f_{\theta}(x), y\big) 
        & = \sum\limits_{k=0}^{U-1} \sum\limits_{l=0}^{V-1} \varphi\Big(a\big(f_\theta(x)\big)[k,l] - a(y)[k,l]\Big)
        \notag
        \\
        & + \sum\limits_{k=0}^{U-1} \sum\limits_{l=0}^{V-1} \varphi\Big(b\big(f_\theta(x)\big)[k,l] - b(y)[k,l]\Big),
    \end{align}
    where $\varphi(t)$ denotes a penalty function satisfying $\varphi(t_1) < \varphi(t_2)$ for $ |t_1| < |t_2| $. The following two statements hold.

    (i) For the loss function given by (\ref{eqn: loss}), utilizing a penalty function $\varphi(t)$ with noisy target $y$ is statistically equivalent to utilizing its blurred version $\phi(t)=\varphi(t)*p(t)$ with clean target $z$, leading to a statistical equivalence as
    \begin{equation}    \label{eqn: DFT-SE}
      \mathbb{E}\left[L_{\varphi}\big(f_{\theta}(x), y\big)\right] 
      = 
      L_{\phi}\big(f_{\theta}(x), z\big),
    \end{equation}
    \revision{where $L_{\phi}\big(f_{\theta}(x), z\big)$ is defined similarly to (\ref{eqn: loss}), with the replacement of $\varphi(t)$ by $\phi(t)$.}
    
    (ii) The optimal point for (\ref{eqn: DFT-SE}) is $f_{\theta}^{\star}(x)=z$.
\end{theorem}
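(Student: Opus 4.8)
The plan is to reduce the statement to a pointwise computation on each Fourier coefficient, using the independence of $n$ and $x$ together with the (approximate) Gaussianity of each coefficient of $n$. First I would fix a coefficient position $[k,l]$ and isolate a single real-part term $\varphi\big(a(f_\theta(x))[k,l] - a(y)[k,l]\big)$ in the sum defining $L_\varphi$. Writing $a(y) = a(z) + a(n)$ by linearity of the Fourier transform (equation~(\ref{eqn: real})), and abbreviating $r = a(f_\theta(x))[k,l] - a(z)[k,l]$ and $N = a(n)[k,l]$, the term becomes $\varphi(r - N)$. Since $x$ and $n$ are independent, conditioning on $x$ (hence on $r$) and taking the expectation over $N$ gives $\mathbb{E}_N[\varphi(r-N)] = \int \varphi(r - s)\, p(s)\, ds = (\varphi * p)(r) = \phi(r)$, where $p$ is the zero-mean Gaussian density assumed for the coefficient (the convolution is symmetric because $p$ is even, so $\varphi * p$ evaluated at $r$ equals $\int \varphi(r-s)p(s)ds$). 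This is exactly the $[k,l]$ real-part term of $L_\phi(f_\theta(x), z)$. The identical argument applies to every imaginary-part term using~(\ref{eqn: imag}). Summing over all $[k,l]$ and invoking linearity of expectation yields~(\ref{eqn: DFT-SE}), proving statement~(i).

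For statement~(ii), the goal is to show $f_\theta^\star(x) = z$ minimizes the right-hand side $L_\phi(f_\theta(x), z)$. Here I would argue that $\phi = \varphi * p$ inherits the key monotonicity property from $\varphi$: since $\varphi(t_1) < \varphi(t_2)$ whenever $|t_1| < |t_2|$ and $p$ is an even, nonnegative density, $\phi$ is also even and strictly increasing in $|t|$ (a symmetrization-plus-monotonicity argument: for $|r_1| < |r_2|$, pairing the mass at $\pm s$ shows $\int \varphi(r_1 - s)p(s)ds < \int \varphi(r_2 - s)p(s)ds$). Consequently each summand $\phi(a(f_\theta(x))[k,l] - a(z)[k,l])$ is minimized exactly when $a(f_\theta(x))[k,l] = a(z)[k,l]$, and likewise for the imaginary parts. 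Since the Fourier transform is invertible, all real and imaginary coefficients matching $z$ is equivalent to $f_\theta(x) = z$, so the joint minimum of the sum is attained there and only there.

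I would present the computation in statement~(i) cleanly as: fix $x$, use independence to write $\mathbb{E}[\,\cdot\mid x\,]$, expand $a(y) = a(z) + a(n)$, recognize the convolution, then take the outer expectation over $x$ and sum — keeping the display math free of blank lines. For~(ii) I would state the monotonicity of $\phi$ as a short lemma (even, strictly increasing in $|t|$) and then note that a sum of functions each uniquely minimized at $0$ of its argument is uniquely minimized when all arguments vanish simultaneously, which is feasible and achieved by $f_\theta^\star(x) = z$.

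The main obstacle I anticipate is making the Gaussianity hypothesis do its work honestly. Theorems~\ref{theorem: CLT-DFT-iid} and~\ref{theorem: CLT-DFT-stationary} only give \emph{convergence in distribution} as $UV \to \infty$, not exact Gaussianity for finite images, and they also assert asymptotic independence \emph{across} coefficients — which is not even needed here since $L_\varphi$ decomposes additively and expectation is linear, so per-coefficient marginal distributions suffice. The honest statement is therefore an approximate equality, exact only in the limit or under the idealized assumption (as the theorem's phrasing "(approximately) follows" concedes); I would flag this explicitly and, if desired, quantify the gap via the rate of convergence in the CLT (Berry–Esseen-type bounds) controlled by the bounded-cumulants assumption, so the error in~(\ref{eqn: DFT-SE}) is $O((UV)^{-1/2})$ uniformly in $f_\theta$ when $\varphi$ is Lipschitz. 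A secondary technical point is integrability: $\phi = \varphi * p$ is well-defined and finite provided $\varphi$ grows at most polynomially and $p$ is Gaussian (all moments finite), which covers $\varphi(t) = |t|$, the Huber function, and $\varphi(t) = t^2$; I would note this mild regularity requirement rather than belabor it.
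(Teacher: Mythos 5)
Your proof of statement~(i) is correct and is essentially the paper's argument: fix a coefficient, split $a(y)=a(z)+a(n)$, use independence to condition on $x$, and recognize $\mathbb{E}_N[\varphi(r-N)]$ as the convolution $\phi(r)=(\varphi*p)(r)$; summing over coefficients gives~(\ref{eqn: DFT-SE}). Your added caveats about approximate Gaussianity and integrability are reasonable and do not affect the comparison.

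There is, however, a genuine gap in your justification of statement~(ii). The lemma you want --- $\phi=\varphi*p$ is even and strictly increasing in $|t|$ --- is true, but the ``pairing the mass at $\pm s$'' argument you sketch does not prove it. Pairing $p(s)$ with $p(-s)$ reduces the claim to the pointwise inequality $\varphi(r_1-s)+\varphi(r_1+s)\le\varphi(r_2-s)+\varphi(r_2+s)$ for $|r_1|<|r_2|$ and each fixed $s$, and this is false for admissible penalties that are concave in $|t|$. Take $\varphi(t)=|t|^{1/2}$ (the paper explicitly allows $\varphi(t)=|t|^q$ for any $q>0$), $r_1=0$, $r_2=1/2$, $s=1$: then $\varphi(-1)+\varphi(1)=2$ while $\varphi(-1/2)+\varphi(3/2)=\sqrt{1/2}+\sqrt{3/2}\approx 1.93<2$, so the per-$s$ comparison goes the wrong way. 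The monotonicity of $\phi$ only emerges after integrating against the Gaussian, and the paper obtains it differently: it differentiates under the integral, writes
\begin{equation*}
\phi'(t)=\int_{0}^{\infty}\bigl(\varphi(t-\tau)-\varphi(t+\tau)\bigr)\,p'(\tau)\,{\rm d}\tau,
\end{equation*}
and observes that for $t>0,\ \tau>0$ one has $|t-\tau|<t+\tau$, hence $\varphi(t-\tau)-\varphi(t+\tau)<0$, while $p'(\tau)<0$ by unimodality of the Gaussian, so $\phi'(t)>0$; by the odd symmetry of $\phi'$ the unique minimizer is $t^{\star}=0$. Note that this uses the sign of $p'$, i.e., the unimodality of $p$, not just its evenness --- an ingredient absent from your sketch. (An alternative fix is a layer-cake decomposition of $\varphi$ into superlevel sets, whose complements are symmetric intervals, combined with the one-dimensional Anderson-type fact that a symmetric unimodal density assigns maximal mass to a centered interval.) Once the lemma is repaired, your concluding step --- each summand is uniquely minimized when its argument vanishes, and invertibility of the DFT turns ``all coefficients match'' into $f_{\theta}^{\star}(x)=z$ --- is fine and matches the paper.
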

\begin{proof}
    The proof is provided in the Appendix~\ref{proof: DFT-SE}.
\end{proof}

\begin{remark}[Statistical Equivalence]
This theorem indicates that supervising the image restoration learning with noisy targets $y$ is statistically equivalent to the one with clean targets $z$, sharing the same optimal points. 
\end{remark}
\begin{remark}[Noise Assumption]
The noise assumption involved in this theorem is very mild. When the input image $x$ and the target image $y$ are captured from two independent observations, the independence between the noise $n$ and the input $x$ will be satisfied. Furthermore, based on our analysis above, the Fourier coefficients of a wide range of noise models will (approximately) follow the Gaussian distribution when the image size is large enough. Consequently, this theorem has a wide applicability. It is applicable to various noise that is commonly observed in practical applications, such as the Poisson-Gaussian noise, the \iid uniform noise, and the stripe noise.
\end{remark}
\begin{remark}[Penalty Function]
This theorem requires the penalty function to satisfy $\varphi(t_1) < \varphi(t_2)$ for $ |t_1| < |t_2| $. Some representative penalty functions satisfying this requirement include $\varphi(t) = |t|^q$ for any $q>0$ and the Huber function~\cite{Huber}. In the subsequent experimental section, we select the Huber function for image denoising and $\varphi(t)=|t|$ for image super-resolution and deblurring. In Fig.~\ref{fig: equivalence}, the original function $\varphi(t)$, the equivalent function $\phi(t)=\varphi(t)*p(t)$, and the derivative $\phi'(t)$ for these two functions are plotted to illustrate our analysis of the statistical equivalence. Clearly, it can be observed that $\varphi(t)$ and $\phi(t)$ only slightly differ in a small region, implying that they will lead to similar learning results.
\end{remark}
\begin{remark}[Practical Implementation]
The statistical equivalence shown in (\ref{eqn: DFT-SE}) involves an expectation over the noise $n$, the computation of which is time-consuming and relies on the collection of multiple independent noisy targets for each input image from the same scene. Fortunately, it is unnecessary to compute this expectation in practice. We observe that, when the network is trained from a set of training pairs, using a single noisy target for each input is enough for a good learning equivalence. The loss function we implement in the subsequent experimental section is
\begin{equation}    \label{eqn: DFT-SE-Practical}
    \sum\limits_{s=1}^{S} L_{\varphi}\big(f_{\theta}(x_s), y_s\big),
\end{equation}
where $S$ denotes the total number of training pairs, $x_s$ and $y_s$ denotes the $s$-th training image pair. Note that different training pairs can be captured from various scenes. The reasonability of replacing (\ref{eqn: DFT-SE}) by this loss function is that, for each image patch that is relatively small, there are generally lots of similar patches within the same image and across the whole training sets. Consequently, the loss function calculated on these similar image patches will implicitly have the same effect as the expectation over noise $n$.

\end{remark}

\section{Experimental Results}
\subsection{Experimental Setting}

\begin{figure}[t] \centering
    \makebox[0.02\textwidth]{}
    \makebox[0.15\textwidth]{\scriptsize \textbf{Denoising}}
    \makebox[0.15\textwidth]{\scriptsize \textbf{Deblurring}}
    \makebox[0.15\textwidth]{\scriptsize \textbf{Super-resolution}}
    \\
    \vspace{0.1em}
    \raisebox{1.5\height}{\makebox[0.02\textwidth]{\rotatebox{90}{\makecell{\scriptsize \textbf{Input}}}}}
    \includegraphics[width=0.15\textwidth]{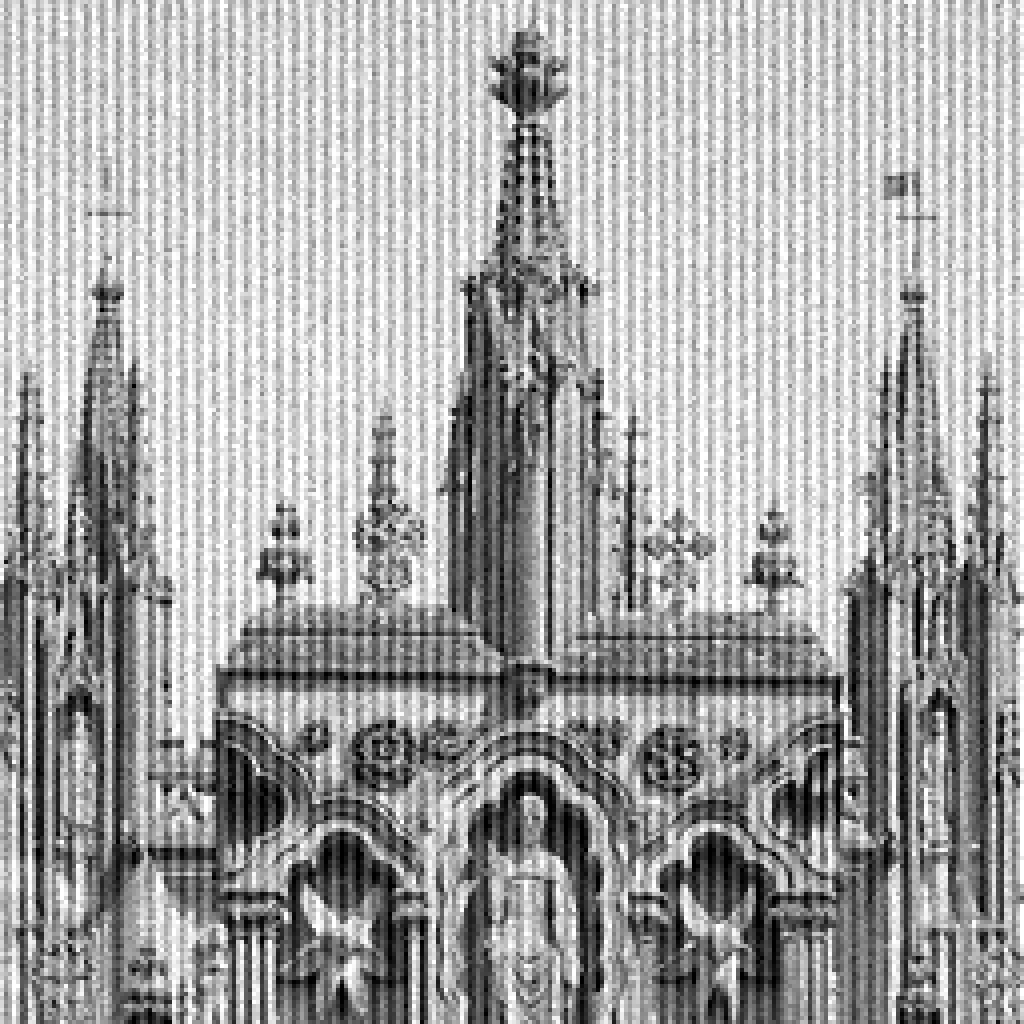}
    \includegraphics[width=0.15\textwidth]{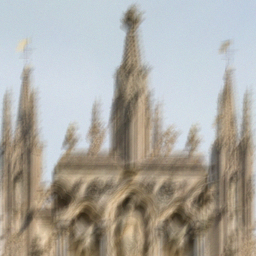}
    \raisebox{0.5\height}{\makebox[0.15\textwidth]{\centering \includegraphics[width=0.075\textwidth]{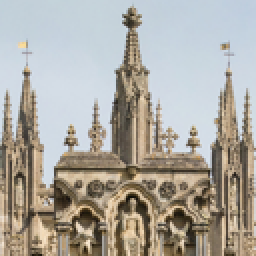}}}
    \\
    \vspace{0.2em}
    \raisebox{1.5\height}{\makebox[0.02\textwidth]{\rotatebox{90}{\makecell{\scriptsize \textbf{Target}}}}}
    \includegraphics[width=0.15\textwidth]{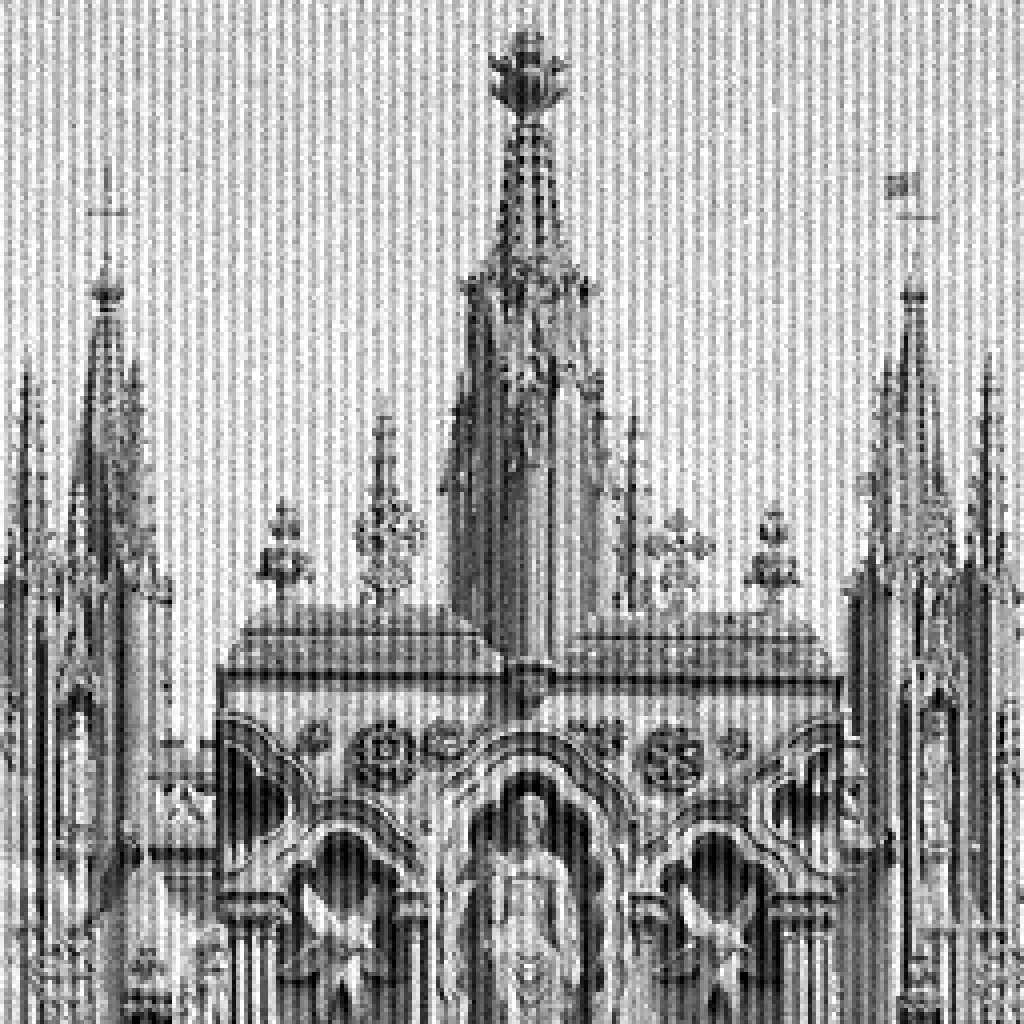}
    \includegraphics[width=0.15\textwidth]{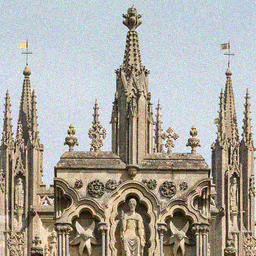}
    \includegraphics[width=0.15\textwidth]{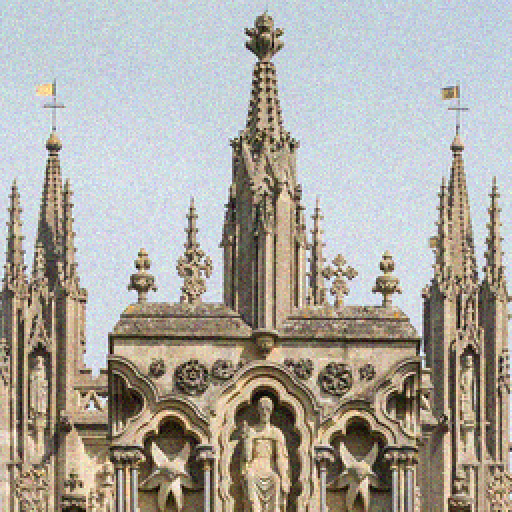}
    \\
    \caption{Illustration of training pairs for DN, SR, and DB. These images are generated from 0149 in DIV2K.}
    \label{fig: training_pair}
\end{figure}

This subsection provides a detailed introduction to the experimental settings including image restoration tasks, training/testing data, comparison methods, evaluation metrics, network architectures, and training settings.

\emph{Image Restoration Tasks.} Our proposed learning framework, referred to as IR-NSF, is evaluated on three distinct image restoration tasks: grayscale image denoising (DN), color image super-resolution (SR), and color image deblurring (DB). For DN, we consider a mixture of \iid~Gaussian noise of $\sigma=5$, Poisson noise, and the periodic noise shown in Fig.~\ref{fig: psd}. For SR, we simulate the low-resolution images by downsampling from higher resolution ones. For DB, as suggested by~\cite{RDN}, we generate blurry images with eight motion blur kernels from~\cite{Blur_Kernel} and further add \iid~Gaussian noise with a standard deviation of $\sigma=2$.

\emph{Training and Testing Data.} In our synthetic experiments, we synthesize degraded input images and noisy target images using clean images sourced from publicly available datasets. During the training phase, we employ 800 4K-resolution training images from the DIV2K dataset~\cite{DIV2K}. To accelerate the data loading process, these images are pre-processed by cropping them into 22988 image patches of size $480 \times 480$. During the testing phase, we utilize images from the Set5~\cite{Set5}, Set14~\cite{Set14}, Kodak24\footnote{http://r0k.us/graphics/kodak/}, McMaster~\cite{McMaster}, Urban100~\cite{Urban100}, and Manga109~\cite{Manga109} datasets. For DN, the source images are converted to grayscale, while for SR and DB, the color information is retained. The degraded input images are synthesized using the aforementioned degradation models, and the noisy target images are generated by adding synthetic noise of the specified model to the latent clean image. For DN, the noise model is identical for both the input and target images. For SR and DB, Gaussian noise with a standard deviation of $\sigma=10$ is utilized for generating noisy targets. \revision{Without explicit specification, the noise generation is carried out on the fly during the training process.} In Fig.~\ref{fig: training_pair}, training pairs for different tasks are illustrated.

\emph{Comparison Methods.} Our proposed IR-NSF framework is designed for image restoration learning using noisy targets. To ensure fairness, we primarily compare IR-NSF with Noise2Noise (N2N)~\cite{N2N}, which adopts the same setting as ours. To compare the statistical equivalences established by N2N and IR-NSF, we also apply both frameworks to clean targets. These alternative cases are referred to as N2C and IR-CSF, respectively.
 
\emph{Evaluation Metrics.} We employ the Peak Signal-to-Noise Ratio (PSNR) and Structural Similarity Index (SSIM) as quantitative evaluation metrics for the restored images in all tasks. However, the calculation of these metrics varies depending on the specific image restoration task. For DN and DB, the metrics are calculated using all color channels of the restored images. For SR, the calculation follows the approach in~\cite{EDSR}, where the PSNR and SSIM are measured on the Y channel of the YCbCr color space, and an equal number of border pixels are ignored corresponding to the scaling factor.

\emph{Network Architectures.} To validate that our proposed learning framework can be applied to diverse network architectures, we utilize two different networks for each image restoration task. Specifically, we employ ResNet (for DN), RDN~\cite{RDN} (for DN), EDSR~\cite{EDSR} (for SR), RCAN~\cite{RCAN} (for SR), MIMO~\cite{MIMO} (for DB), and MIMO+~\cite{MIMO} (for DB). In the case of ResNet, it is a modified version of EDSR where the up-sample block is removed, and a long skip connection is added between the input and output. The remaining networks align with the descriptions provided in their respective papers.

\emph{Training Settings.}  The training of these networks is conducted based on the source codes provided by their respective authors. The default settings are maintained, except for the modifications introduced here. For MIMO and MIMO+, the number of training epochs is adjusted to 300, with the learning rate halved every 50 epochs. All experiments are performed on the AMD Ryzen Threadripper PRO 3955WX CPU and the NVIDIA GeForce RTX 4090 GPU.
\begin{figure}[!t]
\centering

\subfigure[DN experiment with ResNet]{\includegraphics[width=0.98\linewidth]{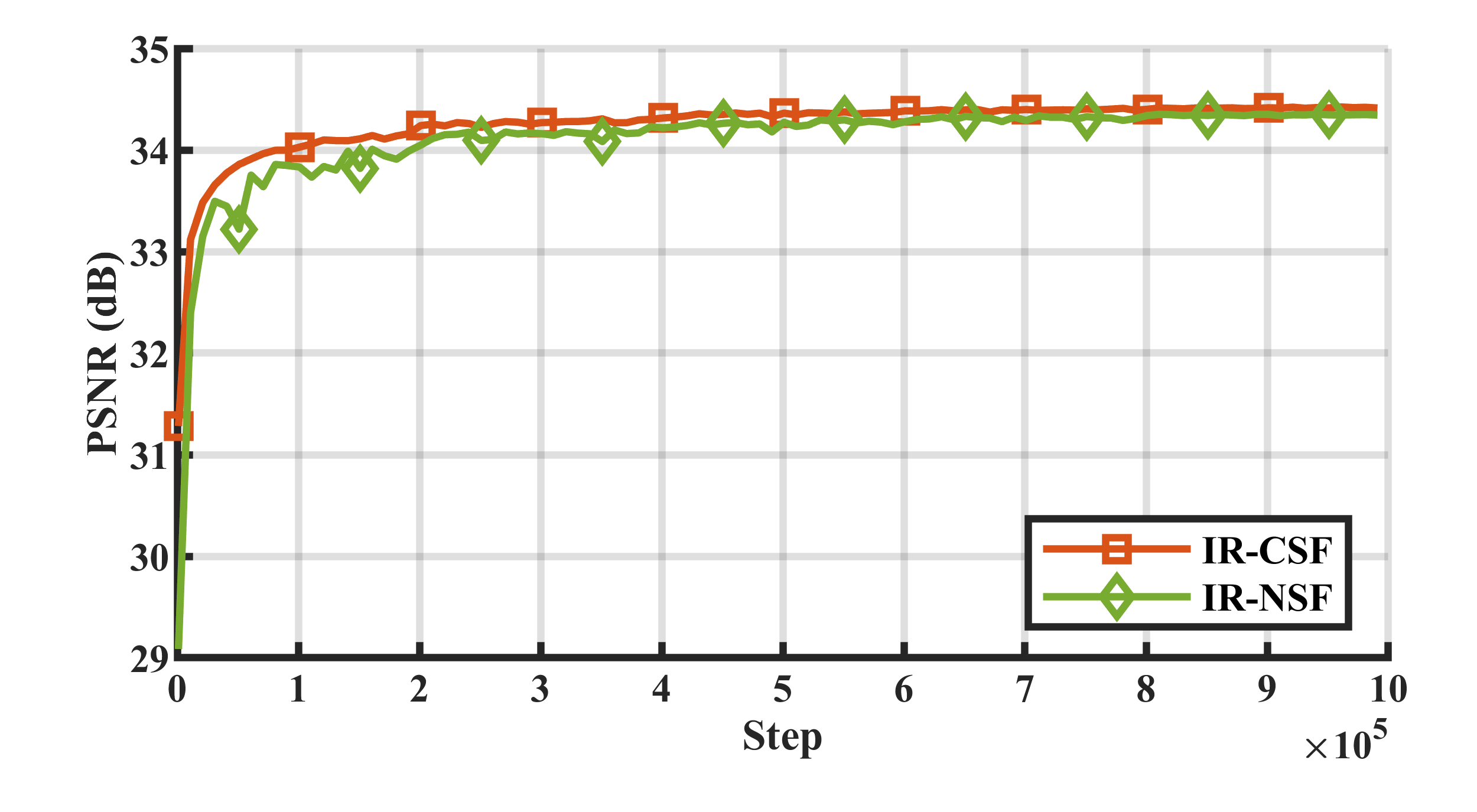}%
}

\subfigure[SR experiment with EDSR]{\includegraphics[width=0.98\linewidth]{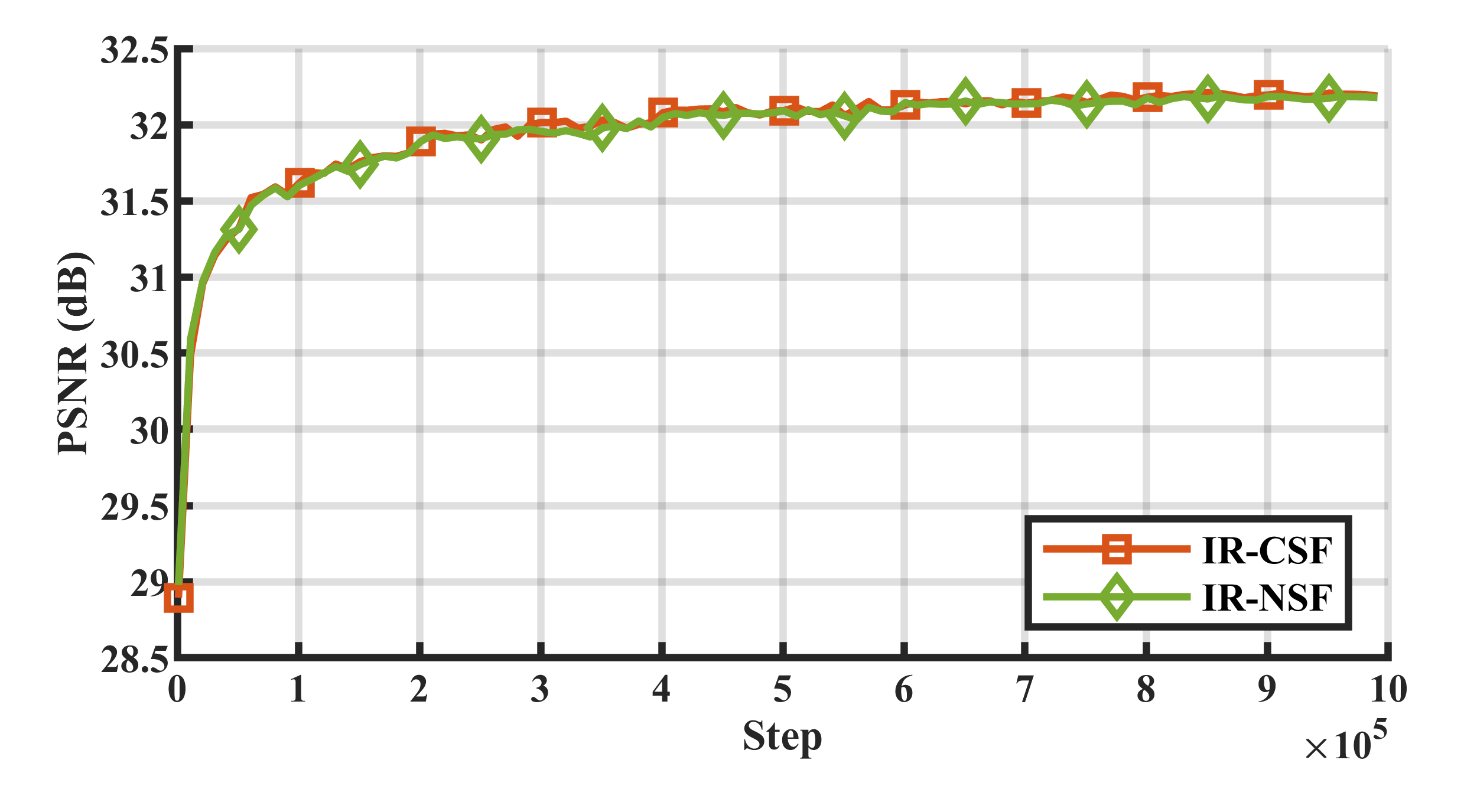}%
}

\caption{PSNR (\SI{}{dB}) on Urban100 as a function of training step. In both cases, training with noisy targets (\ie, IR-NSF) and clean targets (\ie, IR-CSF) converge to almost the same results.}
\label{fig: convergence}
\end{figure}





\begin{figure*}[t] \centering
    \makebox[0.02\textwidth]{}
    \makebox[0.18\textwidth]{\scriptsize \textbf{Ground Truth}}
    \hspace{.1em}
    \makebox[0.18\textwidth]{\scriptsize \textbf{N2C}}
    \hspace{.1em}
    \makebox[0.18\textwidth]{\scriptsize \textbf{IR-CSF}}
    \hspace{.1em}
    \makebox[0.18\textwidth]{\scriptsize \textbf{N2N}}
    \hspace{.1em}
    \makebox[0.18\textwidth]{\scriptsize \textbf{IR-NSF}}
    \vspace{.5em}
    \\
    \raisebox{1.6\height}{\makebox[0.02\textwidth]{\rotatebox{90}{\makecell{\scriptsize \textbf{ResNet}}}}}
    \includegraphics[width=0.18\textwidth]{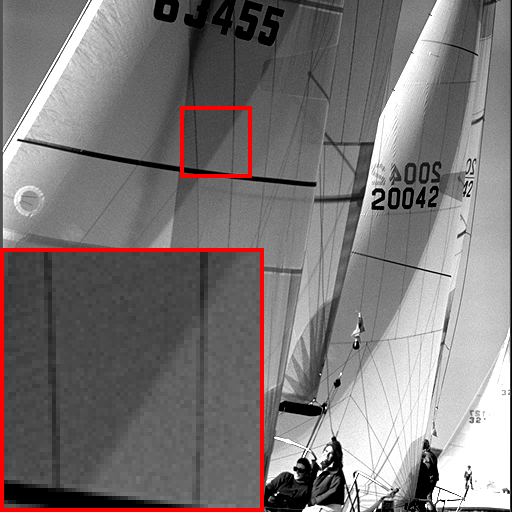}
    \hspace{.1em}
    \includegraphics[width=0.18\textwidth]{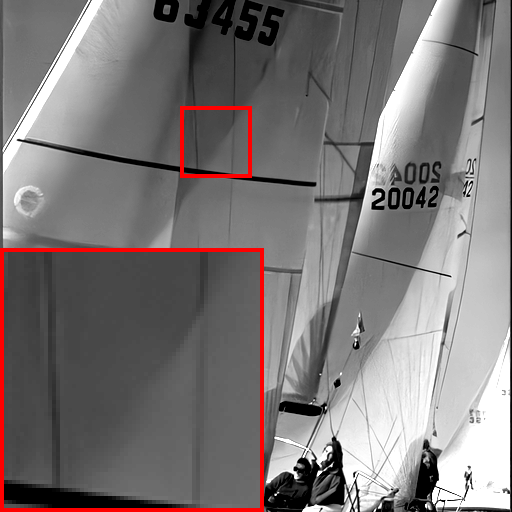}
    \hspace{.1em}
    \includegraphics[width=0.18\textwidth]{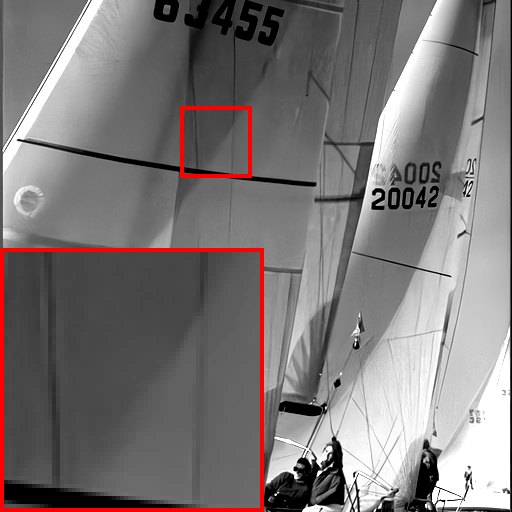}
    \hspace{.1em}
    \includegraphics[width=0.18\textwidth]{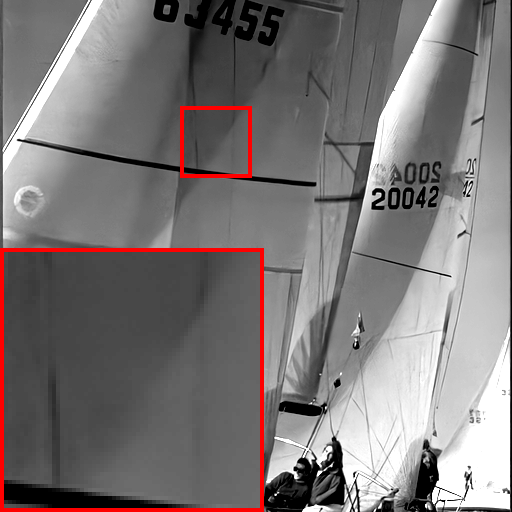}
    \hspace{.1em}
    \includegraphics[width=0.18\textwidth]{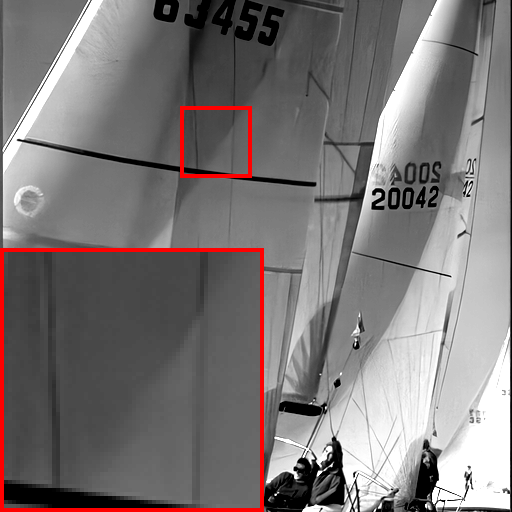}
    \\
    \makebox[0.02\textwidth]{}
    \makebox[0.18\textwidth]{\scriptsize kodim10 in Kodak24}
    \hspace{.1em}
    \makebox[0.18\textwidth]{\scriptsize PSNR: 36.98\SI{}{dB}, SSIM: 0.9215}
    \hspace{.1em}
    \makebox[0.18\textwidth]{\scriptsize PSNR: 37.00\SI{}{dB}, SSIM: 0.9219}
    \hspace{.1em}
    \makebox[0.18\textwidth]{\scriptsize PSNR: 36.69\SI{}{dB}, SSIM: 0.9164}
    \hspace{.1em}
    \makebox[0.18\textwidth]{\scriptsize PSNR: 36.97\SI{}{dB}, SSIM: 0.9215}
    \vspace{.5em}
    \\
    \raisebox{2.2\height}{\makebox[0.02\textwidth]{\rotatebox{90}{\makecell{\scriptsize \textbf{RDN}}}}}
    \includegraphics[width=0.18\textwidth]{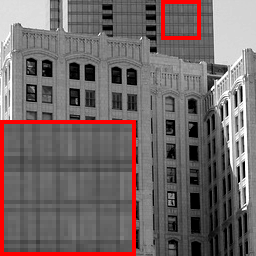}
    \hspace{.1em}
    \includegraphics[width=0.18\textwidth]{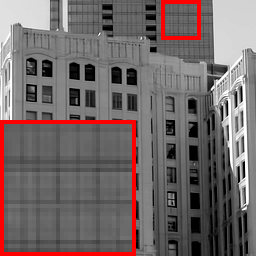}
    \hspace{.1em}
    \includegraphics[width=0.18\textwidth]{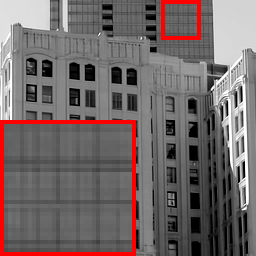}
    \hspace{.1em}
    \includegraphics[width=0.18\textwidth]{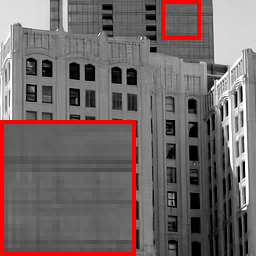}
    \hspace{.1em}
    \includegraphics[width=0.18\textwidth]{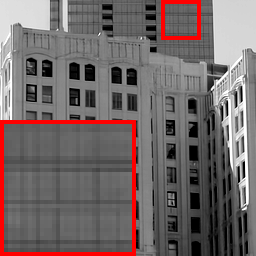}
    \\
    \makebox[0.02\textwidth]{}
    \makebox[0.18\textwidth]{\scriptsize img097 in Urban100}
    \hspace{.1em}
    \makebox[0.18\textwidth]{\scriptsize PSNR: 34.42\SI{}{dB}, SSIM: 0.9350}
    \hspace{.1em}
    \makebox[0.18\textwidth]{\scriptsize PSNR: 34.30\SI{}{dB}, SSIM: 0.9344}
    \hspace{.1em}
    \makebox[0.18\textwidth]{\scriptsize PSNR: 33.88\SI{}{dB}, SSIM: 0.9247}
    \hspace{.1em}
    \makebox[0.18\textwidth]{\scriptsize PSNR: 34.35\SI{}{dB}, SSIM: 0.9333}
    \caption{\revision{Visual results of denoising experiments. The image contrast is adjusted for better comparison.}}
    \label{fig: visual_dn}
\end{figure*}

\begin{table*}[!t]
	\centering
	\caption{Average PSNR (\SI{}{dB}) and SSIM results of denoising experiments. The results of our proposed IR-NSF are highlighted with gray color.}
	\label{tab: dn}
	\resizebox*{\textwidth}{!}{
		\begin{tabular}{*{15}{c}}
			\toprule
			\multirow{2}*{Network} & \multirow{2}*{Target} & \multirow{2}*{Method} & \multicolumn{2}{c}{Set5} & \multicolumn{2}{c}{Set14} & \multicolumn{2}{c}{Kodak24} & \multicolumn{2}{c}{McMaster} & \multicolumn{2}{c}{Urban100} & \multicolumn{2}{c}{Manga109}\\
            \cmidrule(lr){4-5}   \cmidrule(lr){6-7}   \cmidrule(lr){8-9} \cmidrule(lr){10-11} \cmidrule(lr){12-13} \cmidrule(lr){14-15}
            & & & PSNR & SSIM & PSNR & SSIM & PSNR & SSIM & PSNR & SSIM & PSNR & SSIM & PSNR & SSIM\\
			
			\midrule
			\multirow{4}*{ResNet} & \multirow{2}*{Clean} & N2C & 35.07 & 0.9315 & 33.77 & 0.9147 & 34.45 & 0.9176 & 35.92 & 0.9398 & 34.39 & 0.9477 & 36.16 & 0.9508\\
			& & IR-CSF & 35.07 & 0.9317 & 33.77 & 0.9149 & 34.45 & 0.9176 & 35.92 & 0.9399 & 34.41 & 0.9478 & 36.16 & 0.9508\\
			& \multirow{2}*{Noisy} & N2N & 34.94 & 0.9295 & 33.62 & 0.9125 & 34.32 & 0.9155 & 35.77 & 0.9380 & 34.04 & 0.9450 & 35.97 & 0.9492\\
			&  & \cellcolor{gray!20}IR-NSF & \cellcolor{gray!20}35.06 & \cellcolor{gray!20}0.9312 & \cellcolor{gray!20}33.75 & \cellcolor{gray!20}0.9146 & \cellcolor{gray!20}34.43 & 
                \cellcolor{gray!20}0.9175 & \cellcolor{gray!20}35.90 & \cellcolor{gray!20}0.9396 & \cellcolor{gray!20}34.34 & \cellcolor{gray!20}0.9474 & \cellcolor{gray!20}36.14 & \cellcolor{gray!20}0.9505\\
   
			\midrule
			\multirow{4}*{RDN} & \multirow{2}*{Clean} & N2C & 35.17 & 0.9327 & 33.92 & 0.9166 & 34.58 & 0.9196 & 36.07 & 0.9415 & 34.73 & 0.9505 & 36.37 & 0.9523\\
			& & IR-CSF & 35.18 & 0.9328 & 33.93 & 0.9169 & 34.59 & 0.9197 & 36.08 & 0.9416 & 34.74 & 0.9506 & 36.39 & 0.9524\\
			& \multirow{2}*{Noisy} & N2N & 35.03 & 0.9311 & 33.72 & 0.9142 & 34.42 & 0.9172 & 35.88 & 0.9394 & 34.29 & 0.9473 & 36.10 & 0.9502\\
			&  & \cellcolor{gray!20}IR-NSF & \cellcolor{gray!20}35.16 & \cellcolor{gray!20}0.9325 & \cellcolor{gray!20}33.89 & \cellcolor{gray!20}0.9162 & \cellcolor{gray!20}34.57 & 
                \cellcolor{gray!20}0.9193 & \cellcolor{gray!20}36.05 & \cellcolor{gray!20}0.9412 & \cellcolor{gray!20}34.66 & \cellcolor{gray!20}0.9500 & \cellcolor{gray!20}36.34 & \cellcolor{gray!20}0.9519\\
			
			\bottomrule
	  \end{tabular}
    }
\end{table*}

\begin{figure*}[t] \centering
    \makebox[0.02\textwidth]{}
    \makebox[0.18\textwidth]{\scriptsize \textbf{Ground Truth}}
    \hspace{.1em}
    \makebox[0.18\textwidth]{\scriptsize \textbf{N2C}}
    \hspace{.1em}
    \makebox[0.18\textwidth]{\scriptsize \textbf{IR-CSF}}
    \hspace{.1em}
    \makebox[0.18\textwidth]{\scriptsize \textbf{N2N}}
    \hspace{.1em}
    \makebox[0.18\textwidth]{\scriptsize \textbf{IR-NSF}}
    \vspace{.5em}
    \\
    \raisebox{1.8\height}{\makebox[0.02\textwidth]{\rotatebox{90}{\makecell{\scriptsize \textbf{EDSR}}}}}
    \includegraphics[width=0.18\textwidth]{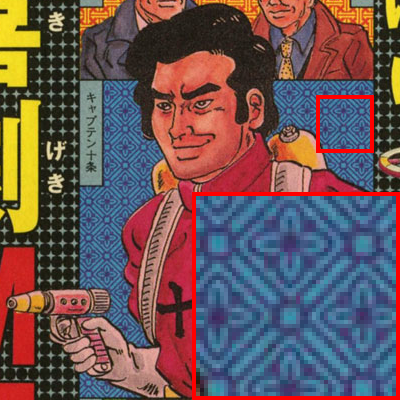}
    \hspace{.1em}
    \includegraphics[width=0.18\textwidth]{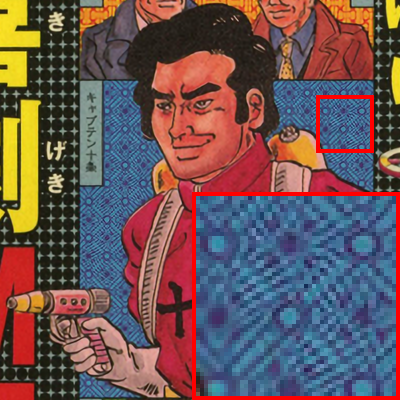}
    \hspace{.1em}
    \includegraphics[width=0.18\textwidth]{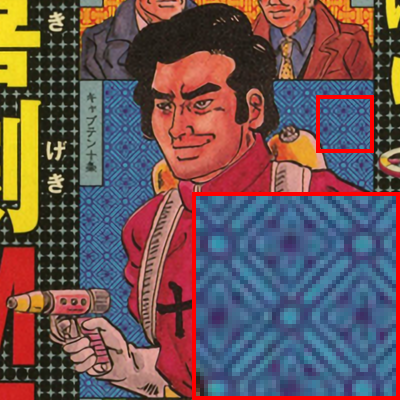}
    \hspace{.1em}
    \includegraphics[width=0.18\textwidth]{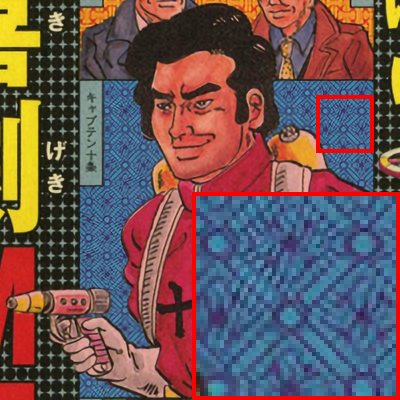}
    \hspace{.1em}
    \includegraphics[width=0.18\textwidth]{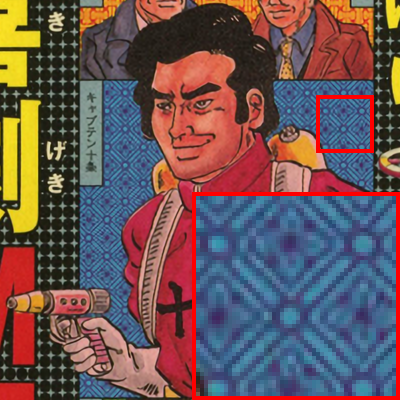}
    \\
    \makebox[0.02\textwidth]{}
    \makebox[0.18\textwidth]{\scriptsize UchuKigekiM774 in Manga109}
    \hspace{.1em}
    \makebox[0.18\textwidth]{\scriptsize PSNR: 30.72\SI{}{dB}, SSIM: 0.9219}
    \hspace{.1em}
    \makebox[0.18\textwidth]{\scriptsize PSNR: 31.91\SI{}{dB}, SSIM: 0.9389}
    \hspace{.1em}
    \makebox[0.18\textwidth]{\scriptsize PSNR: 30.47\SI{}{dB}, SSIM: 0.9193}
    \hspace{.1em}
    \makebox[0.18\textwidth]{\scriptsize PSNR: 31.76\SI{}{dB}, SSIM: 0.9370}
    \vspace{.5em}
    \\
    \raisebox{1.8\height}{\makebox[0.02\textwidth]{\rotatebox{90}{\makecell{\scriptsize \textbf{RCAN}}}}}
    \includegraphics[width=0.18\textwidth]{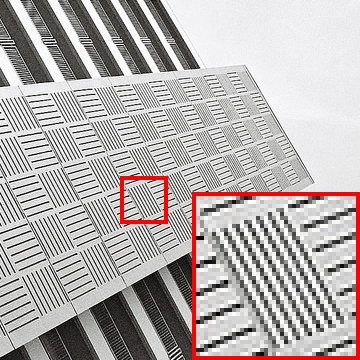}
    \hspace{.1em}
    \includegraphics[width=0.18\textwidth]{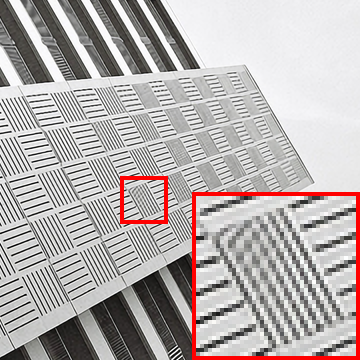}
    \hspace{.1em}
    \includegraphics[width=0.18\textwidth]{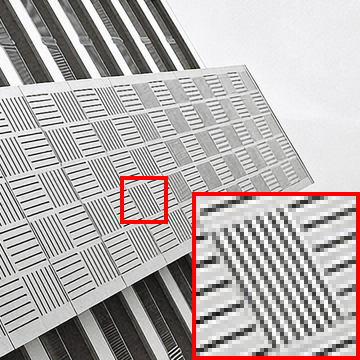}
    \hspace{.1em}
    \includegraphics[width=0.18\textwidth]{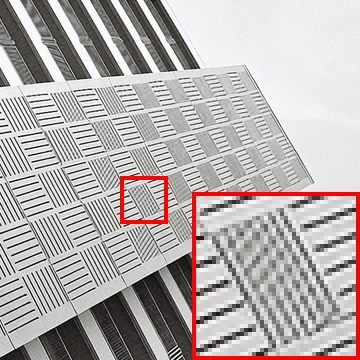}
    \hspace{.1em}
    \includegraphics[width=0.18\textwidth]{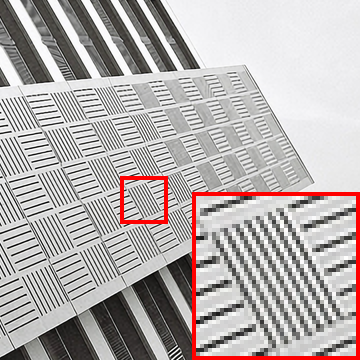}
    \\
    \makebox[0.02\textwidth]{}
    \makebox[0.18\textwidth]{\scriptsize img092 in Urban100}
    \hspace{.1em}
    \makebox[0.18\textwidth]{\scriptsize PSNR: 21.91\SI{}{dB}, SSIM: 0.8539}
    \hspace{.1em}
    \makebox[0.18\textwidth]{\scriptsize PSNR: 22.00\SI{}{dB}, SSIM: 0.8545}
    \hspace{.1em}
    \makebox[0.18\textwidth]{\scriptsize PSNR: 20.98\SI{}{dB}, SSIM: 0.8346}
    \hspace{.1em}
    \makebox[0.18\textwidth]{\scriptsize PSNR: 22.03\SI{}{dB}, SSIM: 0.8531}
    \\
    \caption{\revision{Visual results of super-resolution experiments.}}
    \label{fig: visual_sr}
\end{figure*}
\begin{table*}[!t]
	\centering
	\caption{Average PSNR (\SI{}{dB}) and SSIM results of super-resolution experiments. The results of our proposed IR-NSF are highlighted with gray color.}
	\label{tab: sr}
	\resizebox*{\textwidth}{!}{
		\begin{tabular}{*{15}{c}}
			\toprule
			\multirow{2}*{Network} & \multirow{2}*{Target} & \multirow{2}*{Method} & \multicolumn{2}{c}{Set5} & \multicolumn{2}{c}{Set14} & \multicolumn{2}{c}{Kodak24} & \multicolumn{2}{c}{McMaster} & \multicolumn{2}{c}{Urban100} & \multicolumn{2}{c}{Manga109}\\
            \cmidrule(lr){4-5}   \cmidrule(lr){6-7}   \cmidrule(lr){8-9} \cmidrule(lr){10-11} \cmidrule(lr){12-13} \cmidrule(lr){14-15}
            & & & PSNR & SSIM & PSNR & SSIM & PSNR & SSIM & PSNR & SSIM & PSNR & SSIM & PSNR & SSIM\\
			
			\midrule
			\multirow{4}*{EDSR} & \multirow{2}*{Clean} & N2C  & 37.95 & 0.9602 & 33.54 & 0.9175 & 34.01 & 0.9229 & 38.32 & 0.9636 & 32.05 & 0.9277 & 38.37 & 0.9763\\
			& & IR-CSF & 38.07 & 0.9607 & 33.68 & 0.9181 & 34.08 & 0.9237 & 38.45 & 0.9642 & 32.20 & 0.9286 & 38.75 & 0.9772\\
			& \multirow{2}*{Noisy} & N2N & 37.94 & 0.9603 & 33.54 & 0.9173 & 33.99 & 0.9227 & 38.29 & 0.9635 & 32.04 & 0.9277 & 38.27 & 0.9762\\
			&  & \cellcolor{gray!20}IR-NSF & \cellcolor{gray!20}38.04 & \cellcolor{gray!20}0.9606 & \cellcolor{gray!20}33.65 & \cellcolor{gray!20}0.9183 & \cellcolor{gray!20}34.06 & \cellcolor{gray!20}0.9234 & \cellcolor{gray!20}38.42 & \cellcolor{gray!20}0.9642 & \cellcolor{gray!20}32.20 & \cellcolor{gray!20}0.9288 & \cellcolor{gray!20}38.55 & \cellcolor{gray!20}0.9769\\
   
			\midrule
			\multirow{4}*{RCAN} & \multirow{2}*{Clean} & N2C & 38.17 & 0.9611 & 33.96 & 0.9202 & 34.29 & 0.9260 & 38.72 & 0.9655 & 32.90 & 0.9350 & 39.12 & 0.9779\\
			& & IR-CSF & 38.27 & 0.9615 & 34.02 & 0.9217 & 34.42 & 0.9273 & 38.91 & 0.9664 & 33.14 & 0.9370 & 39.44 & 0.9785\\
			& \multirow{2}*{Noisy} & N2N & 38.11 & 0.9597 & 33.95 & 0.9204 & 34.34 & 0.9264 & 38.73 & 0.9656 & 32.88 & 0.9351 & 39.11 & 0.9775\\
			&  & \cellcolor{gray!20}IR-NSF & \cellcolor{gray!20}38.25 & \cellcolor{gray!20}0.9614 & \cellcolor{gray!20}34.00  & \cellcolor{gray!20}0.9211 & \cellcolor{gray!20}34.46  & \cellcolor{gray!20}0.9274 & \cellcolor{gray!20}38.84  & \cellcolor{gray!20}0.9661 & \cellcolor{gray!20}33.10  & \cellcolor{gray!20}0.9366 & \cellcolor{gray!20}39.33  & \cellcolor{gray!20}0.9784\\
			
			\bottomrule
	  \end{tabular}
    }
\end{table*}

\begin{figure*}[t] \centering
    \makebox[0.02\textwidth]{}
    \makebox[0.18\textwidth]{\scriptsize \textbf{Ground Truth}}
    \hspace{.1em}
    \makebox[0.18\textwidth]{\scriptsize \textbf{N2C}}
    \hspace{.1em}
    \makebox[0.18\textwidth]{\scriptsize \textbf{IR-CSF}}
    \hspace{.1em}
    \makebox[0.18\textwidth]{\scriptsize \textbf{N2N}}
    \hspace{.1em}
    \makebox[0.18\textwidth]{\scriptsize \textbf{IR-NSF}}
    \vspace{.5em}
    \\
    \raisebox{1.7\height}{\makebox[0.02\textwidth]{\rotatebox{90}{\makecell{\scriptsize \textbf{MIMO}}}}}
    \includegraphics[width=0.18\textwidth]{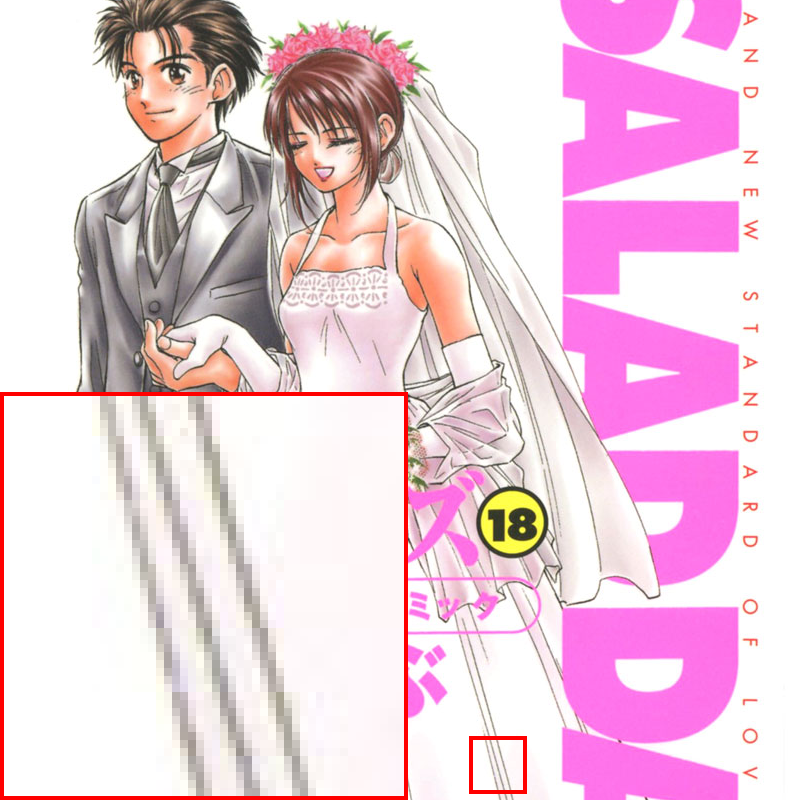}
    \hspace{.1em}
    \includegraphics[width=0.18\textwidth]{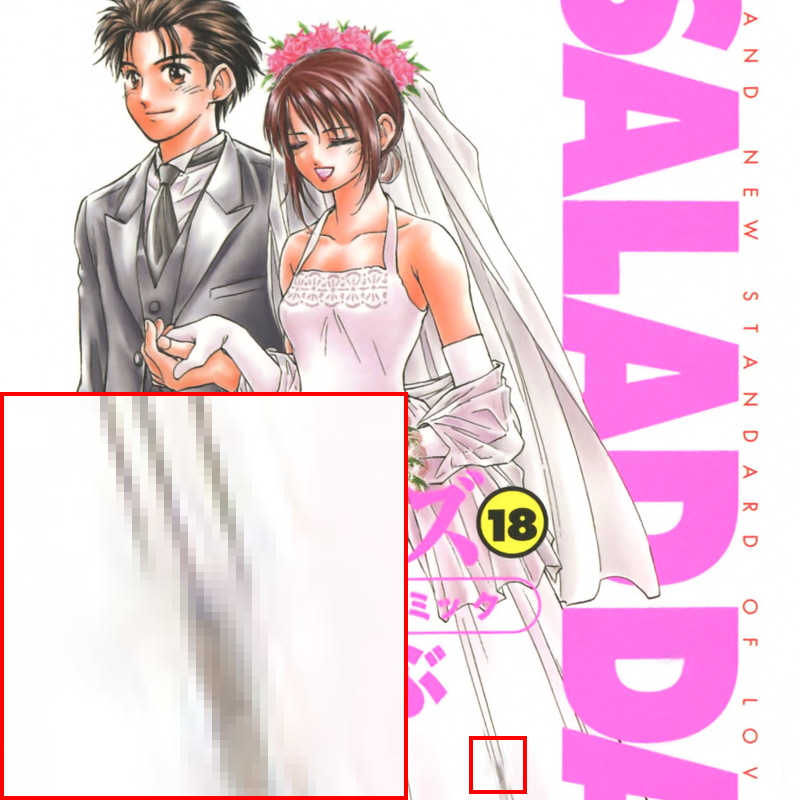}
    \hspace{.1em}
    \includegraphics[width=0.18\textwidth]{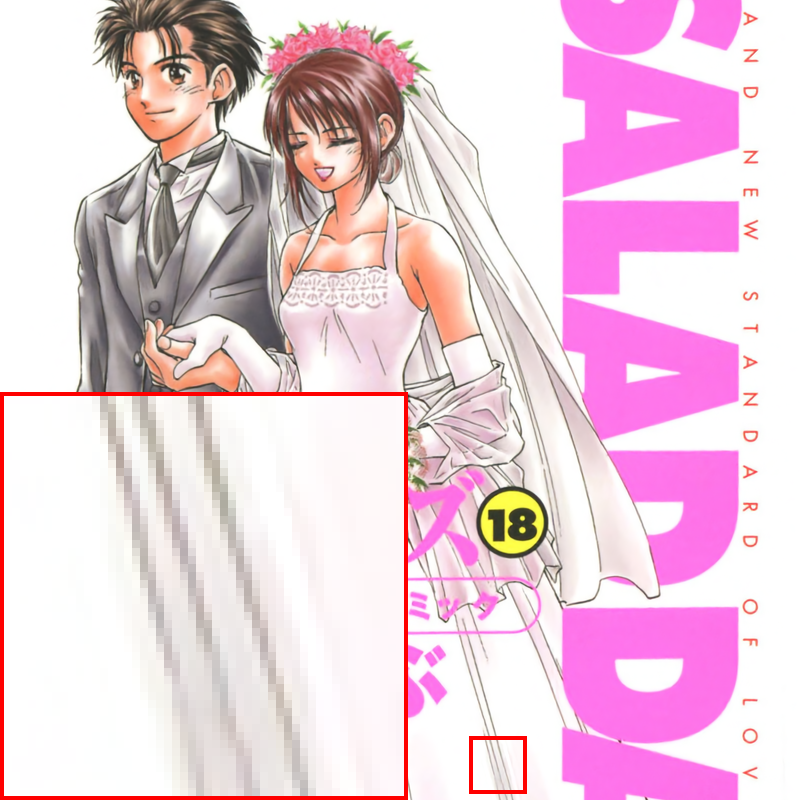}
    \hspace{.1em}
    \includegraphics[width=0.18\textwidth]{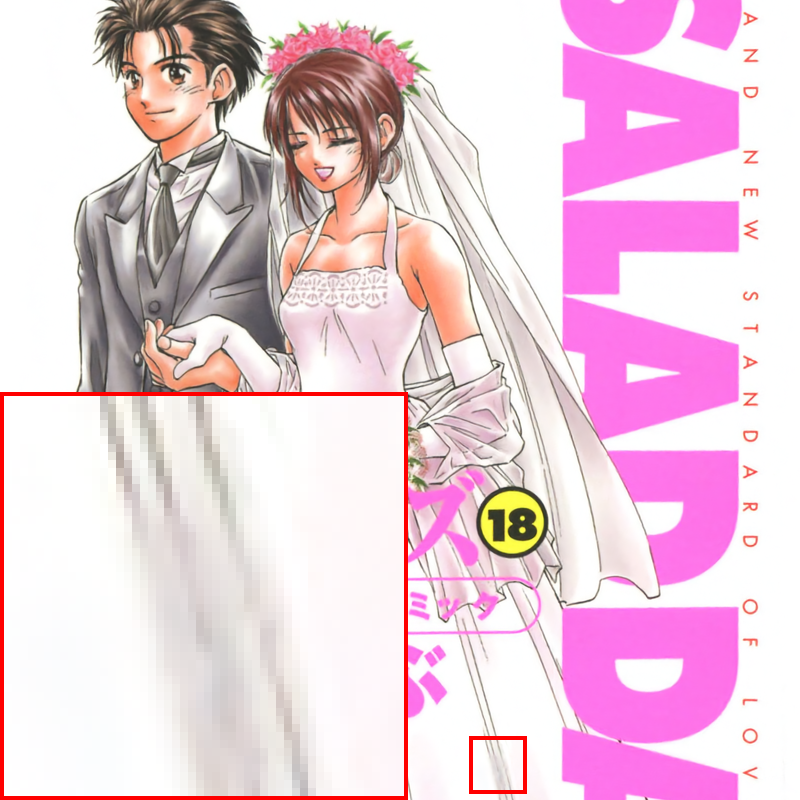}
    \hspace{.1em}
    \includegraphics[width=0.18\textwidth]{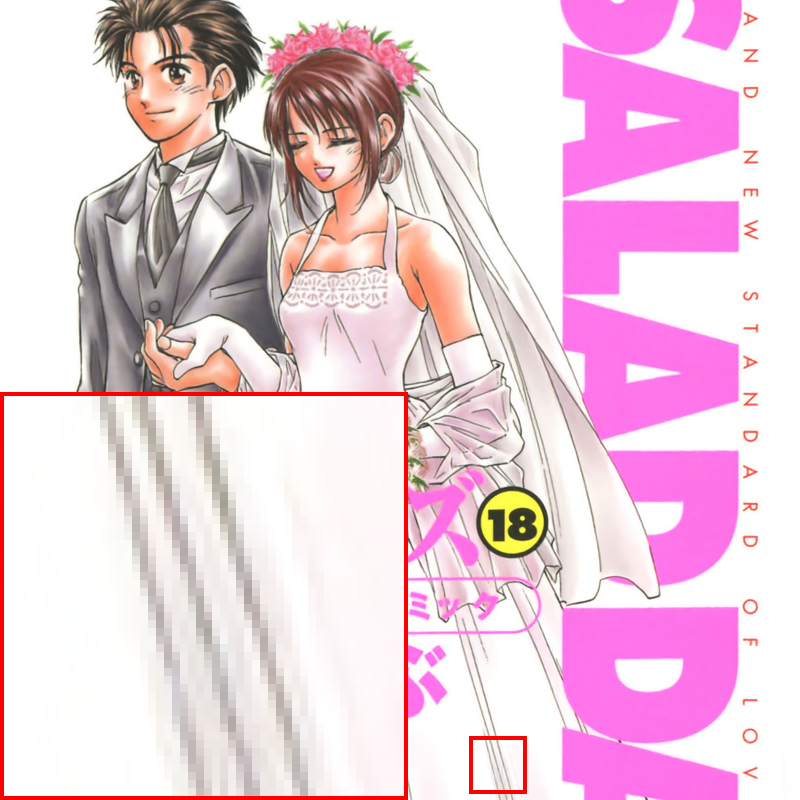}
    \\
    \makebox[0.02\textwidth]{}
    \makebox[0.18\textwidth]{\scriptsize SaladDays$\_$vol18 in Manga109}
    \hspace{.1em}
    \makebox[0.18\textwidth]{\scriptsize PSNR: 36.43\SI{}{dB}, SSIM: 0.9663}
    \hspace{.1em}
    \makebox[0.18\textwidth]{\scriptsize PSNR: 37.06\SI{}{dB}, SSIM: 0.9690}
    \hspace{.1em}
    \makebox[0.18\textwidth]{\scriptsize PSNR: 36.65\SI{}{dB}, SSIM: 0.9668}
    \hspace{.1em}
    \makebox[0.18\textwidth]{\scriptsize PSNR: 37.07\SI{}{dB}, SSIM: 0.9684}
    \vspace{.5em}
    \\
    \raisebox{1.6\height}{\makebox[0.02\textwidth]{\rotatebox{90}{\makecell{\scriptsize \textbf{MIMO}}}}}
    \includegraphics[width=0.18\textwidth]{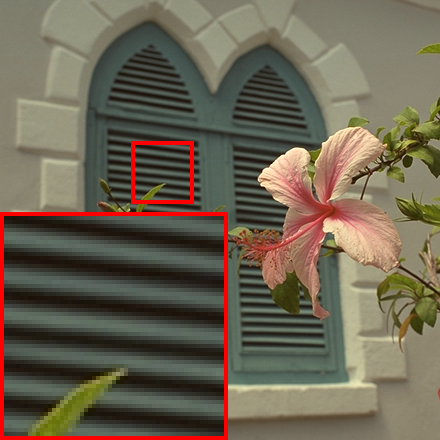}
    \hspace{.1em}
    \includegraphics[width=0.18\textwidth]{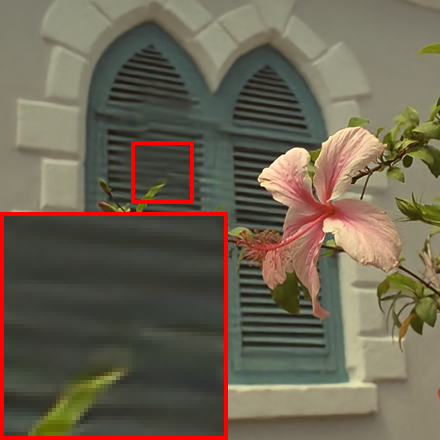}
    \hspace{.1em}
    \includegraphics[width=0.18\textwidth]{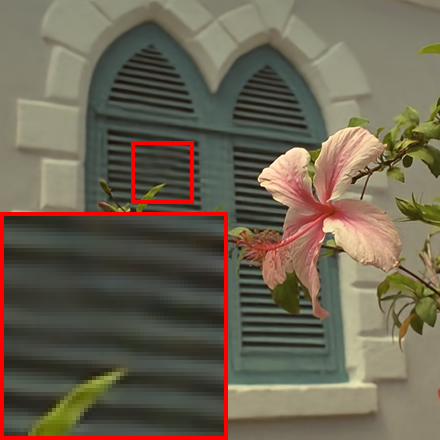}
    \hspace{.1em}
    \includegraphics[width=0.18\textwidth]{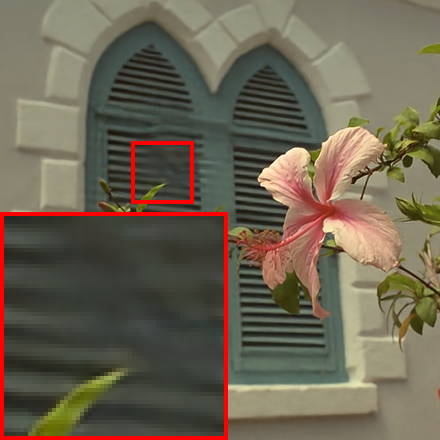}
    \hspace{.1em}
    \includegraphics[width=0.18\textwidth]{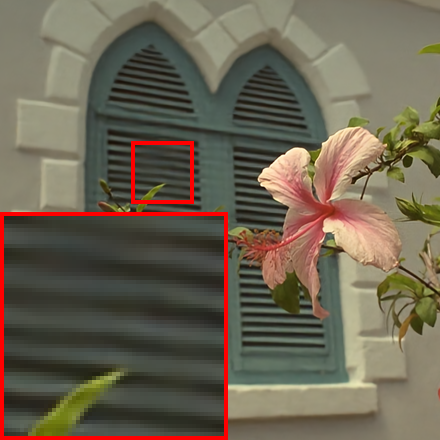}
    \\
    \makebox[0.02\textwidth]{}
    \makebox[0.18\textwidth]{\scriptsize kodim07 in Kodak24}
    \hspace{.1em}
    \makebox[0.18\textwidth]{\scriptsize PSNR: 31.43\SI{}{dB}, SSIM: 0.9194}
    \hspace{.1em}
    \makebox[0.18\textwidth]{\scriptsize PSNR: 34.85\SI{}{dB}, SSIM: 0.9531}
    \hspace{.1em}
    \makebox[0.18\textwidth]{\scriptsize PSNR: 32.71\SI{}{dB}, SSIM: 0.9333}
    \hspace{.1em}
    \makebox[0.18\textwidth]{\scriptsize PSNR: 34.72\SI{}{dB}, SSIM: 0.9530}
    \caption{\revision{Visual results of deblurring experiments.} }
    \label{fig: visual_db}
\end{figure*}
\begin{table*}[!t]
	\centering
	\caption{Average PSNR (\SI{}{dB}) and SSIM results of deblurring experiments. The results of our proposed IR-NSF are highlighted with gray color.}
	\label{tab: db}
	\resizebox*{\textwidth}{!}{
		\begin{tabular}{*{15}{c}}
			\toprule
			\multirow{2}*{Network} & \multirow{2}*{Target} & \multirow{2}*{Method} & \multicolumn{2}{c}{Set5} & \multicolumn{2}{c}{Set14} & \multicolumn{2}{c}{Kodak24} & \multicolumn{2}{c}{McMaster} & \multicolumn{2}{c}{Urban100} & \multicolumn{2}{c}{Manga109}\\
            \cmidrule(lr){4-5}   \cmidrule(lr){6-7}   \cmidrule(lr){8-9} \cmidrule(lr){10-11} \cmidrule(lr){12-13} \cmidrule(lr){14-15}
            & & & PSNR & SSIM & PSNR & SSIM & PSNR & SSIM & PSNR & SSIM & PSNR & SSIM & PSNR & SSIM\\
			
			\midrule
			\multirow{4}*{MIMO} & \multirow{2}*{Clean} & N2C & 32.75 & 0.9162 & 31.56 & 0.8954 & 33.09 & 0.9186 & 34.56 & 0.9325 & 30.75 & 0.9218 & 34.61 & 0.9431\\
			& & IR-CSF & 32.95 & 0.9179 & 32.05 & 0.9004 & 33.64 & 0.9224 & 34.74 & 0.9344 & 31.30 & 0.9274 & 34.98 & 0.9455\\
			& \multirow{2}*{Noisy} & N2N & 32.79 & 0.9157 & 31.75 & 0.8967 & 33.12 & 0.9189 & 34.49 & 0.9322 & 30.84 & 0.9221 & 34.65 & 0.9430\\
			&  & \cellcolor{gray!20}IR-NSF & \cellcolor{gray!20}32.96 & \cellcolor{gray!20}0.9179 & \cellcolor{gray!20}32.05  & \cellcolor{gray!20}0.9000 & \cellcolor{gray!20}33.59  & \cellcolor{gray!20}0.9223 & \cellcolor{gray!20}34.73  & \cellcolor{gray!20}0.9344 & \cellcolor{gray!20}31.31  & \cellcolor{gray!20}0.9274 & \cellcolor{gray!20}34.88  & \cellcolor{gray!20}0.9454\\
   
			\midrule
			\multirow{4}*{MIMO+} & \multirow{2}*{Clean} & N2C & 33.13 & 0.9186 & 32.10 & 0.9000 & 33.52 & 0.9214 & 34.71 & 0.9339 & 31.45 & 0.9277 & 35.12 & 0.9460\\
			& & IR-CSF & 33.47 & 0.9222 & 32.35 & 0.9025 & 34.00 & 0.9243 & 35.03 & 0.9365 & 32.02 & 0.9342 & 35.47 & 0.9486\\
			& \multirow{2}*{Noisy} & N2N & 33.11 & 0.9183 & 31.96 & 0.8985 & 33.70 & 0.9221 & 34.69 & 0.9338 & 31.46 & 0.9285 & 35.05 & 0.9457\\
			&  & \cellcolor{gray!20}IR-NSF & \cellcolor{gray!20}33.43 & \cellcolor{gray!20}0.9215 & \cellcolor{gray!20}32.28  & \cellcolor{gray!20}0.9020 & \cellcolor{gray!20}34.03  & \cellcolor{gray!20}0.9240 & \cellcolor{gray!20}34.95  & \cellcolor{gray!20}0.9359 & \cellcolor{gray!20}32.08  & \cellcolor{gray!20}0.9338 & \cellcolor{gray!20}35.37  & \cellcolor{gray!20}0.9480\\
			
			\bottomrule
	  \end{tabular}
    }
\end{table*}


\subsection{Convergence}
In Fig.~\ref{fig: convergence}, the convergence curves of using clean targets (IR-CSF) and noisy targets (IR-NSF) are compared. The convergence curve is plotted as PSNR on Urban100 against the training step. Clearly, for the SR experiment shown in Fig.~\ref{fig: convergence}(b), the convergence curves for IR-NSF and IR-CSF are almost consistent everywhere. As for the DN experiment shown in Fig.~\ref{fig: convergence}(a), there are obvious differences between the two curves at the early training stage. This is because noisy targets for DN consist of periodic noise, which is more difficult to tackle than the \iid~Gaussian noise in the target for the SR experiment. However, the extra difficulty posed by the periodic noise can be successfully overcome in the late training stage. As Fig.~\ref{fig: convergence}(a) shows, the curves of IR-CSF and IR-NSF finally converge to close points. Overall, experiments shown in Fig.~\ref{fig: convergence} demonstrate that training with noisy targets (IR-NSF) has an equivalent effect to training with clean targets (IR-CSF).

\subsection{Image Restoration}
\subsubsection{Denoising}
For the denoising experiments, we provide visual results in Fig.~\ref{fig: visual_dn} and report the average PSNR and SSIM results in Table~\ref{tab: dn}. From Fig.~\ref{fig: visual_dn} and Table~\ref{tab: dn}, it can be observed that IR-CSF and IR-NSF achieve similar performances. In contrast, there is a noticeable performance gap between N2C and N2N. On Urban100, the PSNR of N2N is over 0.3\SI{}{dB} worse than those of the other three methods. Also, N2N results in the loss of more image details, as illustrated in the zoomed regions in Fig.~\ref{fig: visual_dn}.  These results demonstrate that, compared to training in the spatial domain, training in the Fourier domain can reduce the gap between using clean and noisy targets. This improvement can be attributed to the fact that the periodic noise affects all spatial pixels but only influences a small portion of the Fourier coefficients.

\subsubsection{Super-Resolution}
For the super-resolution experiments, we provide visual results in Fig.~\ref{fig: visual_sr} and report average PSNR and SSIM results in Table~\ref{tab: sr}. From Fig.~\ref{fig: visual_sr} and Table~\ref{tab: sr}, it can be observed that our proposed IR-NSF achieves a higher PSNR (up to 0.2\SI{}{dB} higher) and better visual results than N2N, despite there being almost no performance gap between training with noisy targets and clean targets in both the spatial domain and the Fourier domain for this experimental setting. This result validates that establishing supervision in the Fourier domain is more effective than in the spatial domain. The reason behind this could be attributed to Fourier coefficients containing global information, while spatial pixels only contain limited local information, allowing the former to provide more significant supervision.

\subsubsection{Deblurring}
For the deblurring experiments, we provide visual results in Fig.~\ref{fig: visual_db} and report average PSNR and SSIM results in Table~\ref{tab: db}. From Table~\ref{tab: db}, it can be observed that IR-NSF outperforms N2N over 0.3\SI{}{dB}. Moreover, as Fig.~\ref{fig: visual_db} shows, edges in images recovered by IR-NSF are clearer and sharper compared to those obtained with N2N. Similar to the SR experiments, there is almost no performance gap between IR-NSF/N2N and IR-CSF/N2C, demonstrating that the advantage of IR-NSF over N2N, in this case, is due to the Fourier-based supervision being more effective for the image deblurring task compared to spatial domain supervision.

\begin{table}[!t]
	\centering
	\caption{\revision{Average PSNR ({\SI{}{dB}}) and SSIM results on real-world denoising dataset SIDD~\cite{SIDD}. The results of our proposed IR-NSF are highlighted with gray color. The data collection burden is measured by the number of noisy images collected from each scene. In SIDD, it generates each ground truth clean image with a sequence of 150 captured images.}}
	\label{tab: SIDD}
	\resizebox*{0.49\textwidth}{!}{
		\begin{tabular}{*{6}{c}}
			\toprule
			Metrics & N2C & IR-CSF & Nr2Nr & N2N & \cellcolor{gray!20}IR-NSF\\
			\midrule
            Data collection burden & 150  & 150 & 1  & 2  & \cellcolor{gray!20}2  \\
            PSNR & 51.25  & 51.26 & 50.68  & 51.14  & \cellcolor{gray!20}51.14  \\
            SSIM & 0.9916 & 0.9917 & 0.9900 & 0.9914 & \cellcolor{gray!20}0.9914 \\
			\bottomrule
	  \end{tabular}
    }
\end{table}
\begin{table}[!t]
	\centering
	\caption{\revision{Quantitative results on real-world deblurring dataset ReLoBlur~\cite{ReLoBlur}. PSNR$_w$ and SSIM$_w$ denote weighted PSNR and weighted SSIM that are only calculated on locally blurry region. The results of our proposed IR-NSF are highlighted with gray color.}}
	\label{tab: ReLoBlur}
		\begin{tabular}{*{5}{c}}
			\toprule
		    Method &    PSNR    & SSIM & PSNR$_w$ & SSIM$_w$ \\
            \midrule
            N2N    &   34.19   &  0.9186 & 27.18 & 0.8505\\
            \cellcolor{gray!20}IR-NSF    & \cellcolor{gray!20}34.43  & \cellcolor{gray!20}0.9250 & \cellcolor{gray!20}27.39 & \cellcolor{gray!20}0.8630\\
   
			\bottomrule
	  \end{tabular}
\end{table}

   

\begin{figure}[t] \centering
    \subfigure[Blurry]{
        \includegraphics[width=0.23\textwidth]{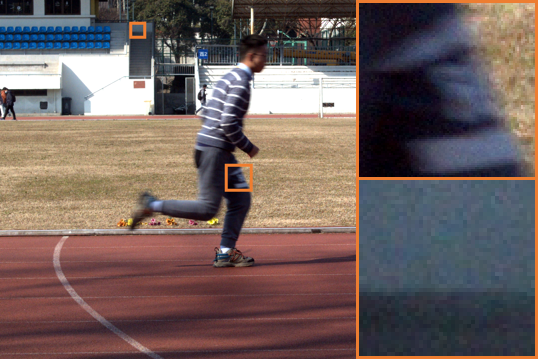}
    } 
    \hspace{-0.8em}
    \subfigure[Sharp]{
        \includegraphics[width=0.23\textwidth]{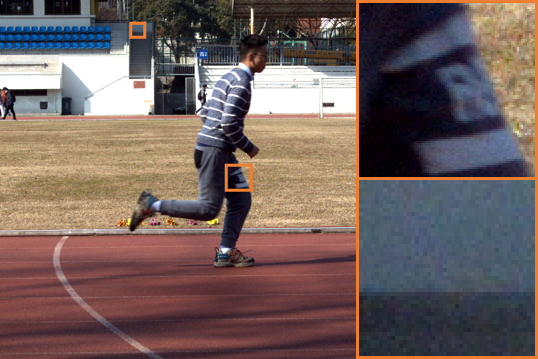}
    }
    \\
    \vspace{-0.5em}
    \subfigure[N2N]{
        \includegraphics[width=0.23\textwidth]{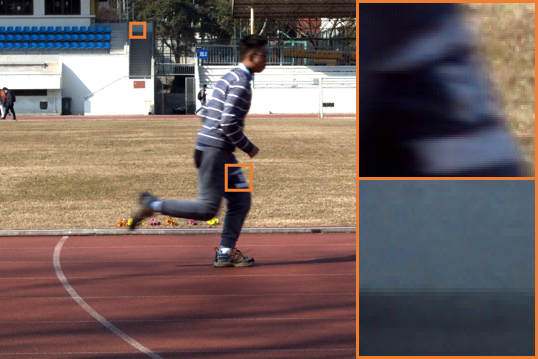}
    }
    \hspace{-0.8em}
    \subfigure[IR-NSF]{
        \includegraphics[width=0.23\textwidth]{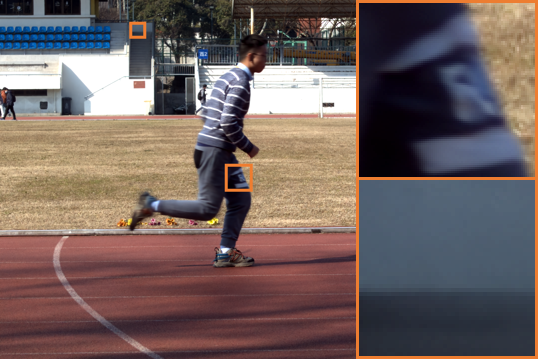}
    }
    \vspace{-0.5em}
    \caption{\revision{Visual results of deblurring experiments on ReLoBlur. The image contrast is adjusted for better comparison. In each image, the zoomed region on the top right corner is from the moving object, the one on the bottom right is from the static background. Noise can be observed in the zoomed background regions of both the (a) blurry image and the (b) sharp image. (c) N2N~\cite{N2N} (PSNR={33.79\SI{}{dB}}, SSIM=0.8904). (d) IR-NSF (PSNR={34.10\SI{}{dB}}, SSIM=0.9052).}} 
    \label{fig: visual_ReLoBlur}
    \vspace{-1em}
\end{figure}
\subsection{{Real-world experiments}}
\revision{
Besides synthetic experiments, the proposed method is validated through real-world experiments on two publicly available datasets. One is the smartphone image denoising dataset (SIDD)~\cite{SIDD}, and the other is the real local motion blur dataset (ReLoBlur)~\cite{ReLoBlur}. 
}

\subsubsection{{Denoising experiment on SIDD}}
\revision{
The SIDD~\cite{SIDD} is captured with five smartphone cameras from 10 static scenes under different illumination conditions. For each scene instance, it provides two noisy images and their corresponding clean counterparts in both the RAW format and the sRGB format. The noise within each noisy image is spatially independent. The clean image is estimated from a sequence of 150 captured images. Following~\cite{Nr2Nr}, the SIDD Medium and Benchmark Datasets in RAW format are adopted for training and testing respectively. The noisy raw images are packed into 4-channel tensors for the network to process and the denoised raw image is obtained by unpacking the network output.  In addition to the comparison methods utilized in synthetic experiments, the SOTA unsupervised denoising method, Neighbor2Neighbor (Nr2Nr)~\cite{Nr2Nr}, is also incorporated in this real-world denoising experiment. For all methods, the network architecture is the ResNet, of which the training setting is the same as the synthetic experiment. Table~\ref{tab: SIDD} reports the PSNR and SSIM results of the comparison methods, which are measured by submitting the denoised raw images to the online server\footnote{www.kaggle.com/competitions/sidd-benchmark-raw-psnr; www.kaggle.com/competitions/sidd-benchmark-raw-ssim}. As Table~\ref{tab: SIDD} shows, our proposed IR-NSF performs very closely to that of IR-CSF, while reducing the data collection burden from 150 to 2. Besides, it achieves a significantly higher PSNR (\ie, {0.46\SI{}{dB}}) over Nr2Nr that requires single noisy images. This demonstrates that the developed noisy-supervision learning framework is effective in real-world denoising tasks. It can also be observed that the IR-NSF and N2N achieve the same metrics. This implies that the impact of the spatially independent noise on the training process in the Fourier domain is similar to that in the spatial domain.
}

\subsubsection{Deblurring experiment on ReLoBlur}
\revision{
The ReLoBlur~\cite{ReLoBlur} is captured by utilizing a beam splitter to divert incident light from the target scene to two synchronized cameras that are with different exposure times. For each scene, it consists of the static background region and the moving object. The images captured by paired cameras are blurry and sharp, respectively. To enhance the content consistency of these paired images, they are post-processed by color correction, photometrical alignment, and geometrical alignment. It is worth noting that the noise reduction operator is not involved in these post-processing steps, and thus the noise can be observed in both the blurry and sharp images. In other words, training targets provided by ReLoBlur are generally noisy, especially for scenes under relatively low lighting conditions. Utilizing this dataset, we conduct a comparison between N2N and the proposed IR-NSF. In this case, we adopt the same network architecture and training settings as~\cite{ReLoBlur}. For the metrics, in addition to PSNR and SSIM, we also follow~\cite{ReLoBlur} and adopt the weighted PSNR (PSNR$_w$) and weighted SSIM (SSIM$_w$) that are only calculated on locally blurry regions. Although the sharp image is somewhat noisy, these metrics can still evaluate the deblurring performance since the blurring effect generally has a stronger influence on these metrics than the noise. From Table~\ref{tab: ReLoBlur}, it can be observed that the proposed IR-NSF outperforms N2N in both PSNR and SSIM on the ReLoBlur dataset. In Fig.~\ref{fig: visual_ReLoBlur}, a visual comparison is provided. As the zoomed regions show, IR-NSF can obtain clearer and sharper results than N2N, demonstrating the superiority of IR-NSF on the real-world deblurring task.
}

\section{Potential Impacts}
\subsection{Impact on Data Collection} \label{sec: data_collection}
\revision{
Using the proposed IR-NSF, we conduct an investigation into a beam-splitting data collection system, as employed by RealBlur~\cite{RealBlur}, for acquiring training pairs from dynamic scenes. The experimental results reveal that directly utilizing noisy targets is better than generating pseudo-clean targets through denoising methods. Besides, we point out that, when the target is corrupted with spatially correlated noise such as the periodic noise, the proposed IR-NSF requires less training data than the N2N.
}

\begin{figure}[!t]
\centering

\includegraphics[width=0.98\linewidth]{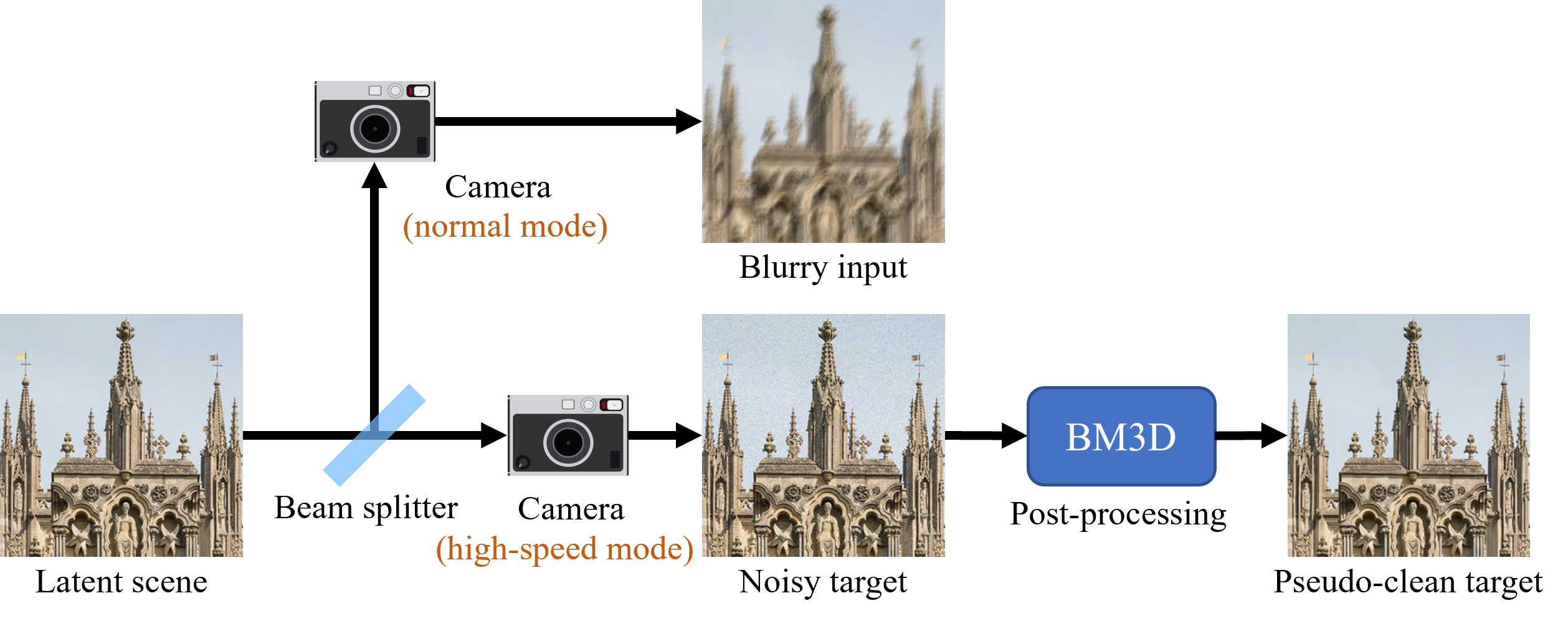}

\caption{Illustration of the data collection pipeline adopted by existing datasets such as the RealBlur~\cite{RealBlur}. The incident light from the latent scene is split into two parts by a beam splitter. One is captured by a camera in the normal mode, leading to a blurry image with a relatively higher SNR. The other is captured by a camera in the high-speed mode, leading to a noisy image that is with sharp edges. This noisy image is further post-processed by denosing methods such as BM3D~\cite{BM3D} to generate pseudo-clean targets.}
\label{fig: pipeline}
\end{figure}

\begin{figure}[t] \centering
    \makebox[0.02\textwidth]{}
    \makebox[0.15\textwidth]{\scriptsize \textbf{IR-CSF w/ Clean}}
    \makebox[0.15\textwidth]{\scriptsize \textbf{IR-CSF w/ Pseudo-clean}}
    \makebox[0.15\textwidth]{\scriptsize \textbf{IR-NSF w/ Noisy}}
    \\
    \vspace{0.1em}
    \raisebox{1.5\height}{\makebox[0.02\textwidth]{\rotatebox{90}{\makecell{\scriptsize \textbf{Target}}}}}
    \includegraphics[width=0.15\textwidth]{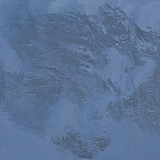}
    \includegraphics[width=0.15\textwidth]{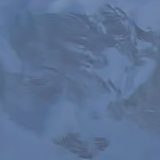}
    \includegraphics[width=0.15\textwidth]{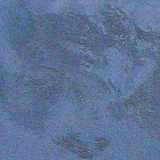}
    \\
    \vspace{0.2em}
    \raisebox{1.3\height}{\makebox[0.02\textwidth]{\rotatebox{90}{\makecell{\scriptsize \textbf{Result}}}}}
    \includegraphics[width=0.15\textwidth]{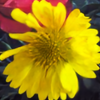}
    \includegraphics[width=0.15\textwidth]{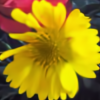}
    \includegraphics[width=0.15\textwidth]{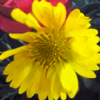}
    \\
    \caption{Visual comparison of targets and results under different training settings. The first row shows training targets and the second row shows their corresponding testing result. Since the pseudo-clean targets are blurred, their corresponding testing results are also blurry. The example images from top to bottom are cropped from: 0002 in DIV2K and flowers in Set14.}
    \label{fig: visual_dc}
\end{figure}
\begin{figure}[!t]
\centering

\includegraphics[width=0.98\linewidth]{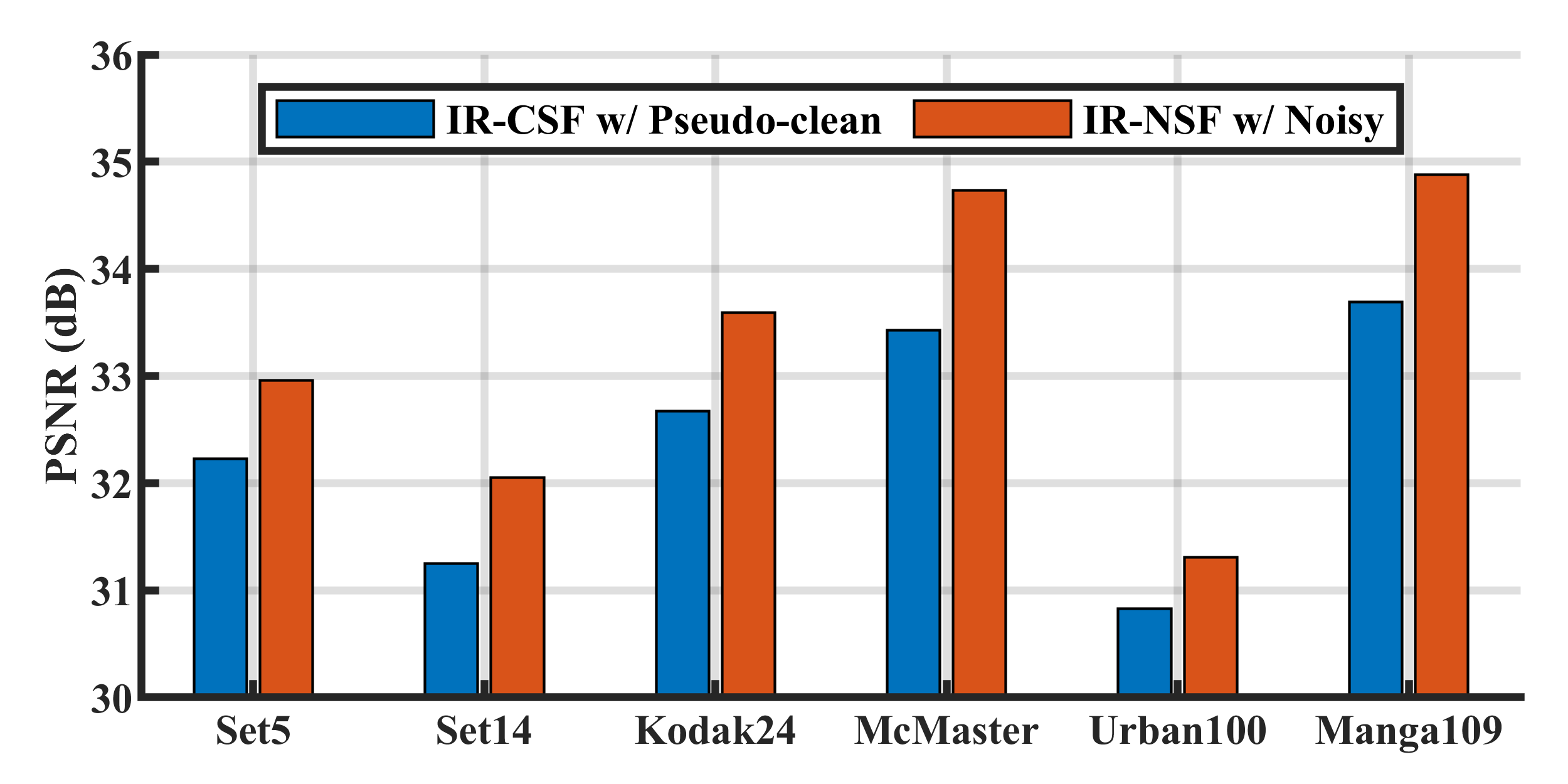}

\caption{Average PSNR (\SI{}{dB}) comparison between training with noisy targets and pseudo-clean targets.}
\label{fig: psnr_dc}
\end{figure}
\begin{figure}[!t]
\centering

\subfigure[Poisson-Gaussian noise]{\includegraphics[width=0.98\linewidth]{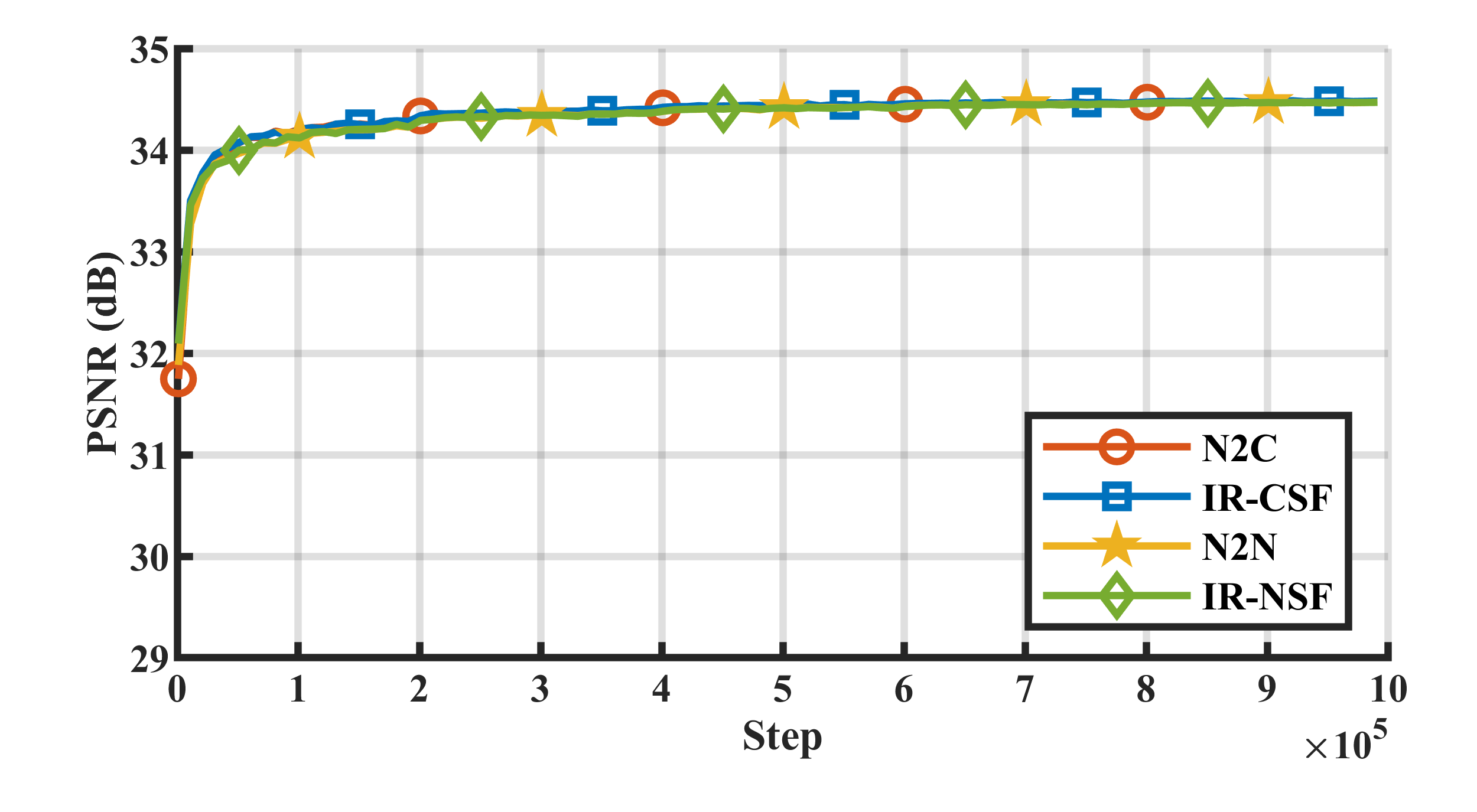}%
}

\subfigure[Poisson-Gaussian noise and periodic noise]{\includegraphics[width=0.98\linewidth]{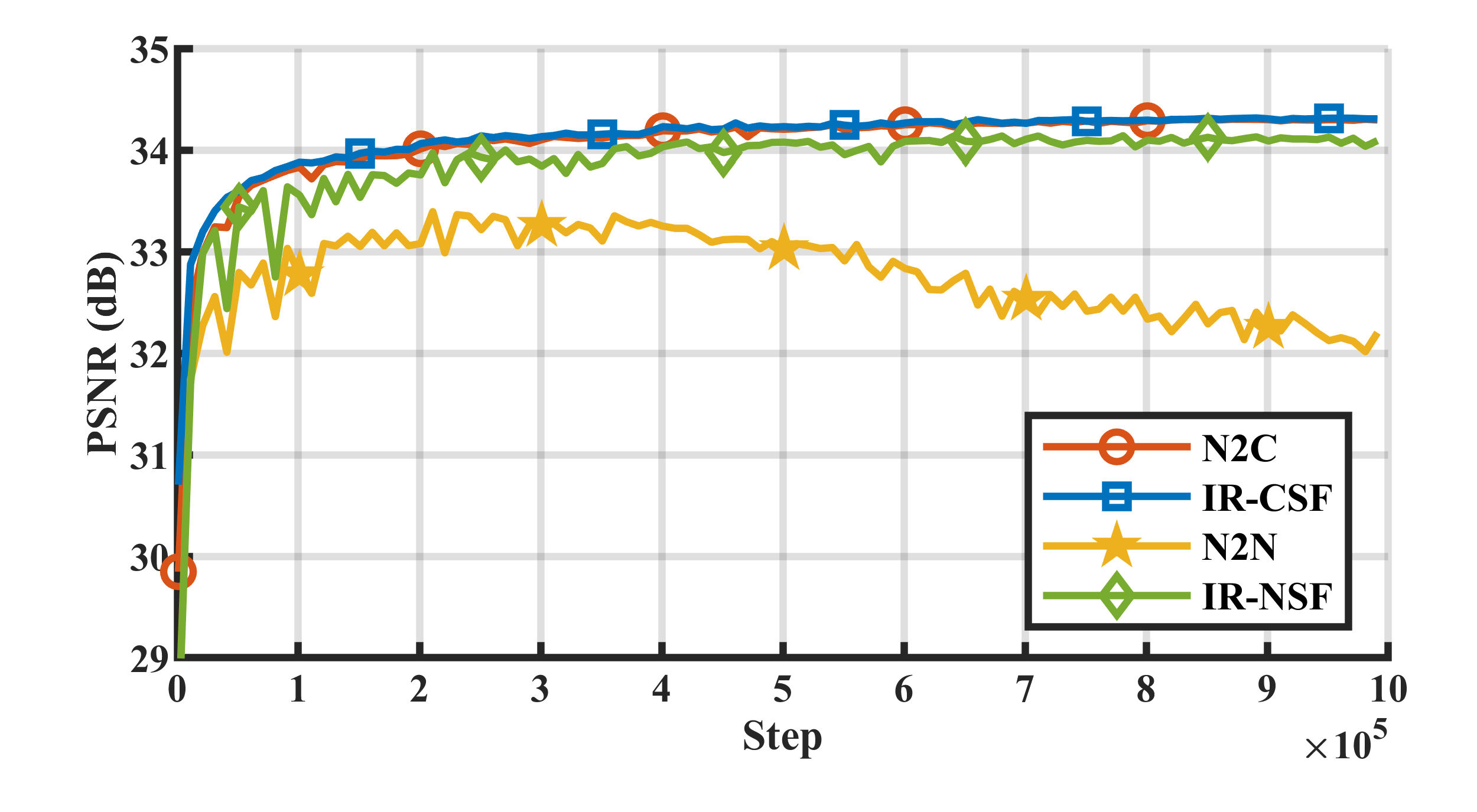}%
}

\caption{\revision{PSNR ({\SI{}{dB}}) on Urban100 as a function of training step. For (a), the noise type is the Poisson-Gaussian noise and the training set is the DIV2K. For (b), the noise type is the mixture of Poisson-Gaussian and periodic noise. The training set is the combination of DIV2K and Flickr2K.}}
\label{fig: convergence_data_budget}
\end{figure}

\begin{figure}[t] \centering
    \subfigure[Ground Truth]{
        \includegraphics[width=0.15\textwidth]{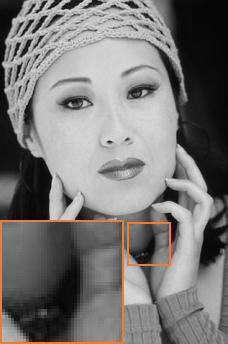}
    } 
    \hspace{-1em}
    \subfigure[N2N]{
        \includegraphics[width=0.15\textwidth]{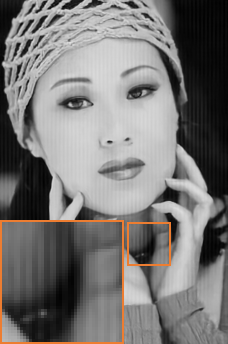}
    }
    \hspace{-1em}
    \subfigure[IR-NSF]{
        \includegraphics[width=0.15\textwidth]{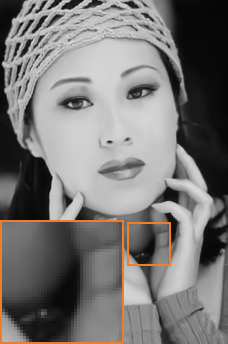}
    }
    \vspace{-0.5em}
    \caption{\revision{Visual results of denoising experiments under limited training data. The statistical equivalence for N2N becomes invalid in this case. Consequently, the periodic noise in the training target appears in its recovered image. (a) Ground Truth. (b) N2N (PSNR={33.27\SI{}{dB}}, SSIM=0.8984). (c) IR-NSF (PSNR={34.68\SI{}{dB}}, SSIM=0.9507).}} 
    \label{fig: visual_data_budget}
\end{figure}
\subsubsection{{Comparison of utilizing noisy target and generating pseudo-clean targe}}
In imaging systems, the SNR of its detected image will decrease with the increase of spatial and/or temporal resolution. Consequently, when collecting data, one is faced with a choice between obtaining high-resolution but noisy images or clean but low-resolution images. It is usually impractical to directly capture clean and high-resolution images as training targets. To address this challenge, some approaches (\eg, RealBlur~\cite{RealBlur}) adopt the data collection pipeline illustrated in Fig.~\ref{fig: pipeline} to collect training image pairs. This pipeline utilizes a beam splitter and two paired cameras to capture the blurry image and the noisy image from the same scene. The noisy image is further post-processed by denoising methods such as the BM3D~\cite{BM3D} to generate the pseudo-clean target. However, this post-processing step can introduce undesired artifacts and result in information loss. For instance, as illustrated in Fig.~\ref{fig: visual_dc}, the pseudo-clean target loses texture details and exhibits slight blurriness. Therefore, we claim that generating pseudo-clean targets is inferior to directly utilizing noisy targets. 

To substantiate this claim, we compare the performance of training with noisy targets and pseudo-clean targets for the image deblurring task. In the training mode with noisy targets, we employ the same settings as those used in the synthetic deblurring experiment with our proposed IR-NSF and the network MIMO. Noisy targets are generated by adding \iid~Gaussian noise of $\sigma = 10$ to clean images. In the alternative mode, noisy targets are processed using BM3D to generate training targets. Following RealBlur~\cite{RealBlur}, the noise level parameter of BM3D is set to $1.5 \sigma$. The comparison results are presented in Fig.~\ref{fig: visual_dc} and Fig.~\ref{fig: psnr_dc}. These figures clearly demonstrate that the result obtained with noisy targets appears sharper and contains more details than the one obtained with pseudo-clean targets, achieving a significantly higher (\ie, over 1\SI{}{dB}) average PSNR. This indicates that, with our proposed IR-NSF, directly training with noisy targets can not only simplify post-processing procedures but also yield superior results compared to generating pseudo-clean targets.

\subsubsection{Comparison of IR-NSF and N2N under limited data}
\revision{
The IR-NSF and N2N are established on the statistical equivalences of training on noisy targets and clean targets. The term `\emph{statistical}' implies that such an equivalence could be invalid when the number of training data is limited. In this subsection, we conduct a comparison of N2N and IR-NSF from the perspective of the requirement on the number of training data. The task used for this comparison is the image denoising task. Two types of noise have been considered. The first noise type is the {\iid} Poisson-Gaussian noise. In this case, we employ 800 images from DIV2K as the source images. For each source image, two noisy images are independently generated to serve as training pairs. As such, there are 800 training pairs for this case in total. For the two noisy images in each pair, we do not specify which is the input and which is the target. Instead, their roles are randomly and dynamically determined at each training iteration to maximize the use of them. The second noise type is a mixture of Poisson-Gaussian and periodic noise. In this case, 2650 source images from Flickr2K~\cite{Flickr2K} are further selected, and thus there are 3450 training pairs in total. In Fig.~\ref{fig: convergence_data_budget}, the convergence curves of N2N and IR-NSF are plotted. The plots of N2C and IR-CSF are also provided for reference. In Fig.~\ref{fig: convergence_data_budget}(a), we observe that all of the convergence curves are almost consistent everywhere. This indicates that, when the target image is corrupted by {\iid} Poisson-Gaussian, the statistical equivalences utilized by N2N and IR-NSF will be effective on a dataset including an equal or larger amount of images than DIV2K. However, things become different when the target image is further corrupted by periodic noise. In Fig~\ref{fig: convergence_data_budget}(b), the convergence curve of N2N shows a noticeable deviation from that of N2C, indicating that the statistical equivalence for N2N becomes invalid in this case, even though more training data have been adopted. Consequently, as shown in Fig.~\ref{fig: visual_data_budget}, the periodic noise in the corrupted target will also appear in the image denoised by N2N. By contrast, it can be observed that there is just a minor performance gap between IR-NSF and IR-CSF. Also, the noise pattern is suppressed in the image processed by IR-NSF. These results demonstrate that the statistical equivalence for IR-NSF is still effective in this case. Therefore, we claim that, compared to N2N, IR-NSF has less requirement on the number of training data when the target is corrupted by the spatially correlated noise such as the periodic noise.
}

\begin{table}[!t]
	\centering
	\caption{\revision{Average PSNR ({\SI{}{dB}}) and SSIM results of VSNR~\cite{VSNR}, SEID~\cite{SEID}, and USR. These methods are adopted to remove the stripe noise. The Gaussian noise is post-processed by BM3D~\cite{BM3D}. The results of USR are highlighted with gray color. The best results are highlighted in bold.}}
	\label{tab: destripe}
	\resizebox*{0.49\textwidth}{!}{
		\begin{tabular}{*{8}{c}}
			\toprule
			\multirow{2}*{Noise} & \multirow{2}*{Method} 
            & \multicolumn{2}{c}{Set5} & \multicolumn{2}{c}{Kodak24}  & \multicolumn{2}{c}{McMaster} \\
              \cmidrule(lr){3-4}         \cmidrule(lr){5-6}         \cmidrule(lr){7-8} 
            && PSNR      & SSIM       & PSNR     & SSIM          & PSNR      & SSIM\\
			
			\midrule
			\multirow{3}*{Row}  
            & VSNR  & 38.52 & \textbf{0.9915} & 36.16 & 0.9845 & 38.08 & 0.9841 \\
			& SEID  & 35.90 & 0.9474 & 35.69 & 0.9818 & 35.05 & 0.9565 \\
			& \cellcolor{gray!20}USR   & \cellcolor{gray!20}\textbf{39.99} & \cellcolor{gray!20}0.9905 & \cellcolor{gray!20}\textbf{38.51} & \cellcolor{gray!20}\textbf{0.9953} & \cellcolor{gray!20}\textbf{39.84} & \cellcolor{gray!20}\textbf{0.9888} \\

            \midrule
			\multirow{3}*{\shortstack{Row \\ $\&$ \\ Gaussian}}  
            & VSNR  & 34.40 & 0.9375 & 33.30 & 0.9481 & 34.81 & 0.9418 \\
            & SEID  & 34.81 & 0.9134 & 34.92 & 0.9533 & 36.22 & 0.9499 \\
			& \cellcolor{gray!20}USR   & \cellcolor{gray!20}\textbf{35.89} & \cellcolor{gray!20}\textbf{0.9470} & \cellcolor{gray!20}\textbf{35.08} & \cellcolor{gray!20}\textbf{0.9533} & \cellcolor{gray!20}\textbf{36.23} & \cellcolor{gray!20}\textbf{0.9524} \\

   
			\bottomrule
	  \end{tabular}
    }
\end{table}

			
   

\begin{figure}[t] \centering
    \subfigure[Ground Truth]{
        \includegraphics[width=0.23\textwidth]{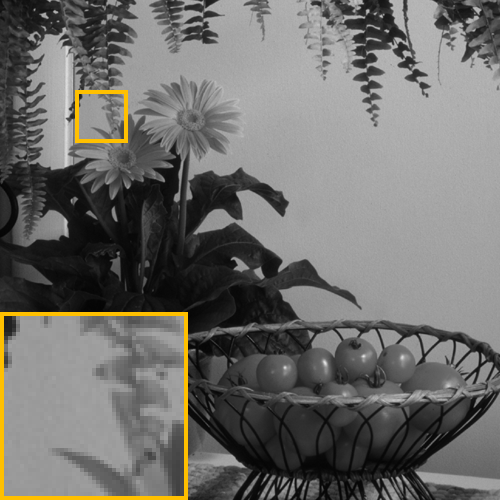}
    } 
    \hspace{-1em}
    \subfigure[VSNR]{
        \includegraphics[width=0.23\textwidth]{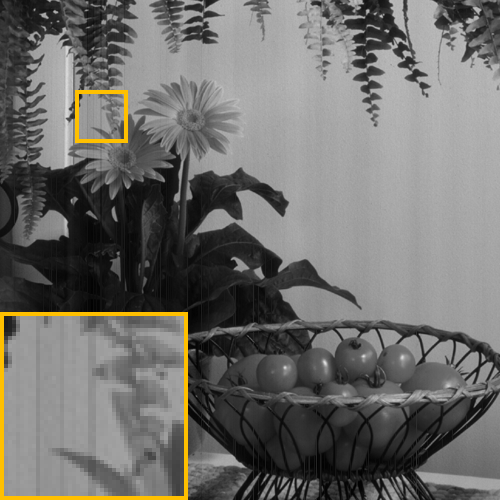}
    }
    \\
    \vspace{-0.5em}
    \subfigure[SEID]{
        \includegraphics[width=0.23\textwidth]{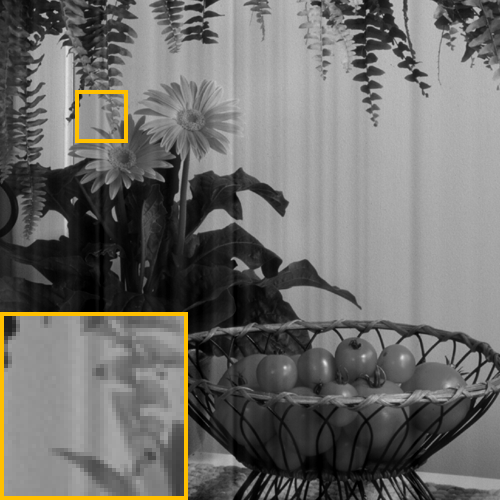}
    }
    \hspace{-1em}
    \subfigure[USR]{
        \includegraphics[width=0.23\textwidth]{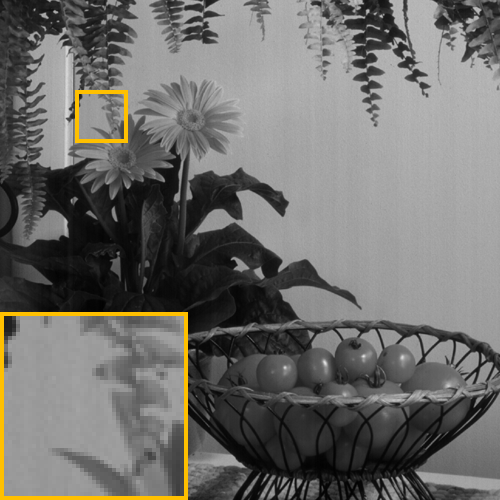}
    }
    \vspace{-0.5em}
    \caption{\revision{Visual comparison among VSNR~\cite{VSNR}, SEID~\cite{SEID}, and USR. (a) Ground Truth. (b) VSNR (PSNR={36.93\SI{}{dB}}, SSIM=0.9746). (c) SEID (PSNR={33.47\SI{}{dB}}, SSIM=0.9481). (d) USR (PSNR={39.92\SI{}{dB}}, SSIM=0.9915).}} 
    \label{fig: visual_destripe}
\end{figure}
\subsection{Impact on Unsupervised Denoising Methods} \label{sec: unsupervised_denoising}
In the field of image denoising, unsupervised methods refer to approaches that directly learn from noisy images to be processed. Unlike IR-NSF and N2N, which rely on two independent observations of the same scene to collect noisy-noisy image pairs, unsupervised methods only need to observe each scene once to acquire single noisy images. \revision{For removing pixel-wise noise, the unsupervised denoising methods (\eg, Neighbor2Neighbor~\cite{Nr2Nr}) have achieved great success.} However, as introduced in Section~\ref{sec: noise_model}, existing unsupervised denoising methods are generally ineffective when dealing with spatially-correlated noise.
\revision{
This impedes their applications in scenarios such as extremely low-light imaging and remote sensing, where the stripe noise illustrated in Fig.~\ref{fig: psd} is a common occurrence. To overcome this problem, an unsupervised stripe removal (USR) is developed based on the proposed statistical equivalence in the Fourier domain to address stripe-wise noise. In scenarios involving both pixel-wise and stripe-wise noise, it can serve as a pre-processing destriping step for the denoising technique that can only tackle pixel-wise noise.}

\revision{Given a set of unpaired images that are corrupted by stripe noise, the training pairs of USR are generated by the noise-swapping strategy. At each training iteration, the USR first processes the noisy image $y_i = z_i + n_i$ with the network under training $f_\theta(\cdot)$ to obtain an estimate of the latent clean image $\hat{z}_i$, for which the gradient calculation is stopped to stabilize learning. Then, the stripe noise is estimated by $\hat{n}_i = y_i - \hat{z}_i$. Finally, the paired noisy image for $y_i$ is obtained by $\tilde{x}_i = \hat{z}_i + \epsilon \cdot \hat{n}_j$, where $\hat{n}_j$ is the stripe estimated from another collected noisy image. The parameter $\epsilon > 1$ is an amplification factor incorporated to tackle the case where the stripe noise is under-estimated, which is usually encountered during the training process. After swapping the noise, the $\tilde{x}_i$ and $y_i$ are adopted to serve as the training input and target, respectively. As shown in Fig.~\ref{fig: psd}, the stripe noise only affects Fourier coefficients $a(n)[k,l]$ and $b(n)[k, l]$ at $k=0$, implying that there is no need to process other coefficients. As such, the loss function is only calculated on the coefficients at $k=0$, \ie, 
} 
\begin{align}   \label{eqn: loss_USR}
    L_{\varphi}\big(f_{\theta}(\tilde{x}), y\big) 
    & = \sum\limits_{l=0}^{V-1} \varphi\Big(a\big(f_\theta(\tilde{x})\big)[0,l] - a(y)[0,l]\Big)
    \notag
    \\
    & + \sum\limits_{l=0}^{V-1} \varphi\Big(b\big(f_\theta(\tilde{x})\big)[0,l] - b(y)[0,l]\Big).
\end{align}
\revision{This ensures that the trained network will only eliminate the stripe noise, capable of keeping the image structure and other possible pixel-wise noise almost unchanged. 

The performance of USR is validated on both the synthetic and real-world experiments. In synthetic experiments, we evaluate its performance of removing stripes under scenarios with and without the interference of Gaussian noise. The training data is synthesized with the source images from the DIV2K. The test sets are Set5, Kodak24, and McMaster. The comparison methods are VSNR~\cite{VSNR} and SEID~\cite{SEID}. The VSNR is an effective stationary noise removal algorithm. The SEID is a recently proposed state-of-the-art stripe estimation framework. In the experiment, these two methods and USR are adopted to remove the stripe noise. The BM3D is adopted to further post-process the destriped image if there is Gaussian noise. As Table~\ref{tab: destripe} shows, the proposed USR achieves the highest PSNR and SSIM in almost all cases. From Fig.~\ref{fig: visual_destripe}, it can be observed that there are stripes remaining in the images processed by VSNR and SEID. By contrast, the stripe noise in the image processed by USR is successfully eliminated. 

The effectiveness of USR is also validated through real-world experiments, where the unsupervised denoising method Neighbor2Neighrbor\cite{Nr2Nr} is the comparison method and the remote sensing images extracted from the compact high resolution imaging spectrometer (CHRIS) dataset\footnote{https://www.brockmann-consult.de/beam/data/products/} are utilized as both the training set and test set. As Fig.~\ref{fig: visual_CHRIS}(a) illustrates, images from CHIRS are corrupted by both the stripe noise and the pixel-wise noise. In such a scenario, Neighbor2Neighbor can suppress pixel-wise noise but is entirely ineffective against stripe noise. In contrast, USR is effective at eliminating stripe noise while leaving the pixel-wise noise unchanged. Since the stripe noise is typically more noticeable than pixel-wise noise, the result of USR is visually more appealing than that of Neighbor2Neighbor. But the best solution is to combine both methods collaboratively. Utilizing USR as a pre-processing step to eliminate stripe noise and Neighbor2Neighbor to suppress pixel-wise noise offers an effective solution for handling mixed noise in an unsupervised manner.

}
\begin{figure}[t] \centering
    \subfigure[Original]{
        \includegraphics[width=0.23\textwidth]{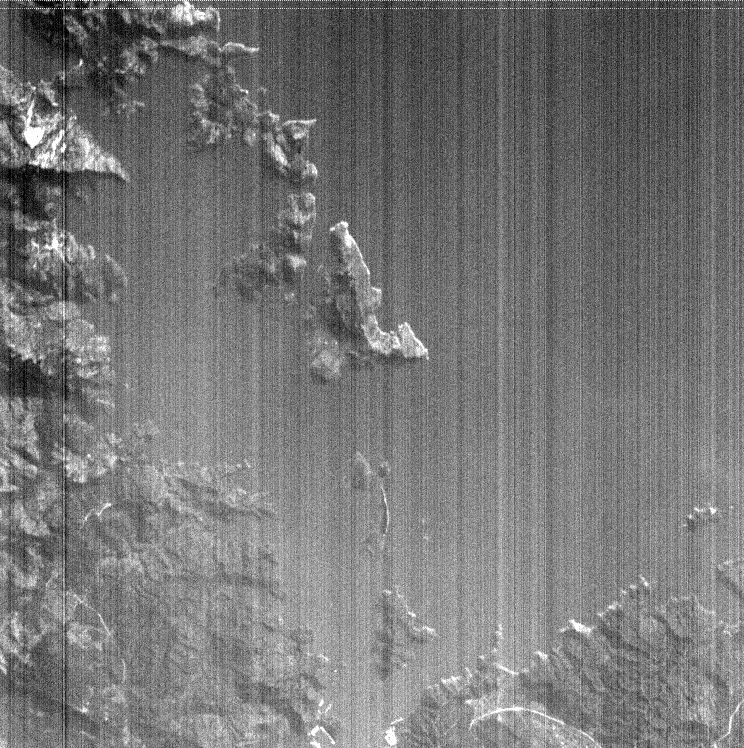}
    } 
    \hspace{-1em}
    \subfigure[Neighbor2Neighbor]{
        \includegraphics[width=0.23\textwidth]{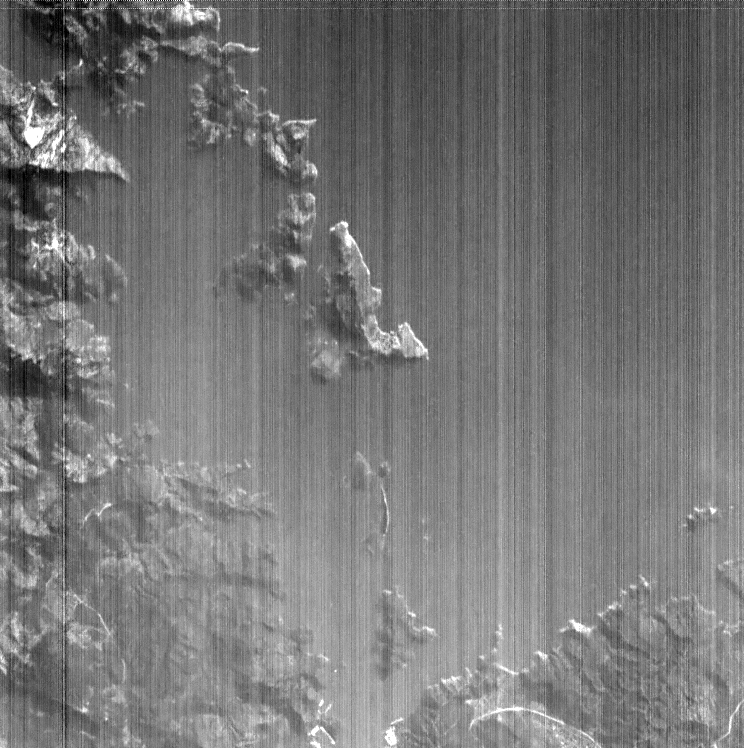}
    }
    \\
    \vspace{-0.5em}
    \subfigure[USR]{
        \includegraphics[width=0.23\textwidth]{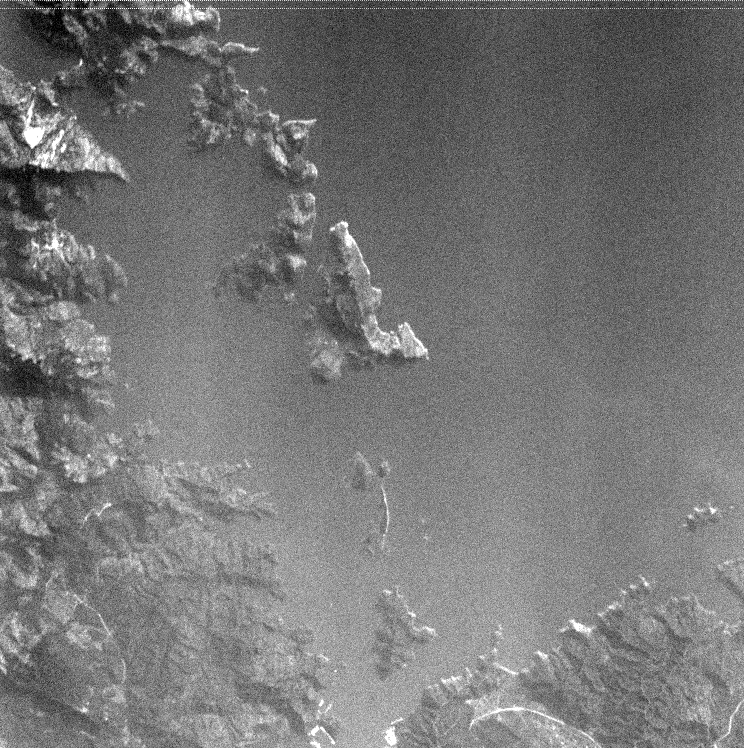}
    }
    \hspace{-1em}
    \subfigure[USR $\&$ Neighbor2Neighbor]{
        \includegraphics[width=0.23\textwidth]{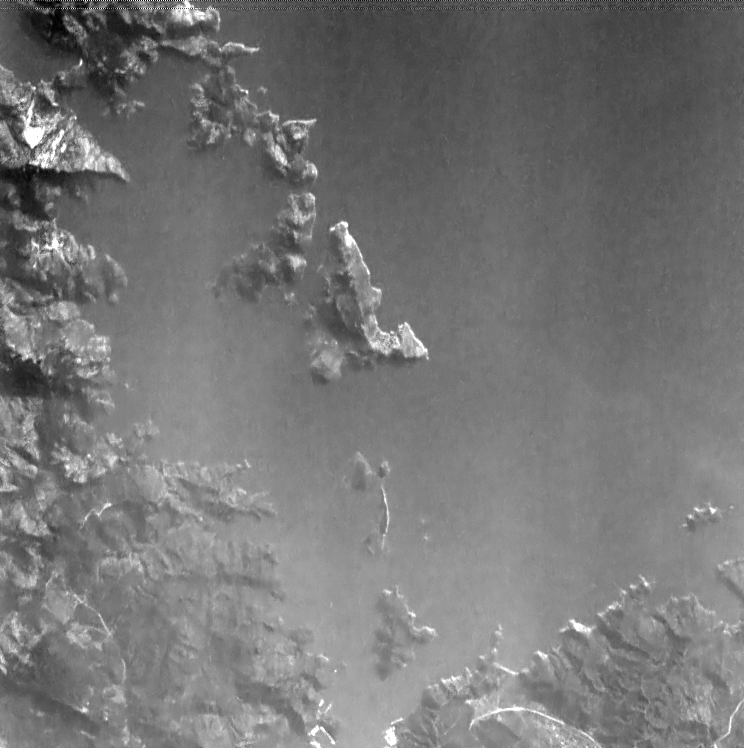}
    }
    \vspace{-0.5em}
    \caption{\revision{Visual results on real-world remote sensing dataset CHRIS. Compared to Neighbor2Neighbor~\cite{Nr2Nr}, which focuses on suppressing pixel-wise noise, the proposed USR can produce a visually more appealing destriped image. By collaboratively utilizing these two methods,  an effective unsupervised solution can be achieved for addressing the mixture of pixel-wise and stripe-wise noise.}} 
    \label{fig: visual_CHRIS}
\end{figure}

\section{Conclusion}
Motivated by the key insights that the Fourier coefficient contains global information and that spatially correlated noise, such as periodic noise and row noise, exhibits sparsity and independence in the Fourier domain, this work proposes a novel concept of establishing noisy supervision in the Fourier domain. To lay the foundation for this concept,  a rigorous analysis of noise properties in the Fourier domain is conducted, revealing the consistent Gaussian distribution of Fourier coefficients for a wide range of noise types. This finding enables a unified treatment of different noise models under a common framework. Furthermore, it is proven that utilizing noisy targets is statistically equivalent to utilizing clean targets for image restoration learning. Based on this equivalence, an efficient learning framework called IR-NSF is proposed and extensively validated on multiple image restoration tasks using diverse network architectures. The experimental results consistently demonstrate the superiority of IR-NSF. 

Additionally, this work highlights the potential impacts of its findings on the data collection procedure and the unsupervised denoising. Specifically, it is illustrated that directly collecting noisy targets is a preferable approach compared to using a denoising method to generate pseudo-clean targets. Moreover, the newly developed unsupervised stripe noise removal showcases that the statistical equivalence based on Fourier analysis can introduce new ideas and perspectives to advance the field of unsupervised denoising. These potential impacts emphasize the broader significance and applicability of this research beyond the proposed learning framework, extending to other areas of image restoration and related fields.


%
\appendices
\section{Proof of the Theorem~\ref{theorem: CLT-DFT-iid}} \label{proof: CLT-DFT-iid}

\begin{proof}
	For a random variable $X$, we denote its moment generating function by
	\begin{equation}
		M_{X}(t) = \mathbb{E}\left[ e^{tX} \right].
	\end{equation}
	The natural logarithm of $M_{X}(t)$ is called the cumulant generating function $K_{X}(t)$. Using the Taylor expansion, there is  
	\begin{equation}
		K_{X}(t) = \ln{\mathbb{E}\left[ e^{tX} \right]} = \sum\limits_{m=1}^{\infty} \kappa_m[X] \frac{t^m}{m!},
	\end{equation}
	where the $m$-th-order cumulant $\kappa_m[X]$ is determined by the $m$-th-order derivative of $K_{X}(t)$ at zero, \ie,
	\begin{equation}
		\kappa_m[X] = K_{X}^{(m)}(0).
	\end{equation}
	An important property of the cumulants is that, for independent variables $X_1$ and $X_2$, there is 
	\begin{equation}    \label{eqn: cumulant_property}
		\kappa_m[c_1 X_1 + c_2X_2] = c_1^m \kappa_m[X_1] + c_2^m \kappa_m[X_2],
	\end{equation}
	where $c_1$ and $c_2$ denote constant numbers. It is also worth noting that the first-order and second-order cumulants of $X$ are the mean and variance of $X$, \ie, 
	\begin{equation}
		\kappa_1[X] = \mu[X], \quad  \kappa_2[X] = \sigma^2[X],
	\end{equation}
	where $\mu[\cdot]$ and $\sigma^2[\cdot]$ denote the mean and variance operators, respectively.
	
	\textbf{For the statement (i):} For $k=l=0$, the real Fourier coefficient of $n$ is
	\begin{equation}
		a(n)[0,0] = \frac{1}{UV} \sum\limits_{u=0}^{U-1} \sum\limits_{v=0}^{V-1} n[u,v], 
	\end{equation} 
	which is a sum of \iid~noise. Based on the central limit theorem, it converges to a Gaussian distribution as $UV$ approaches infinity. The imaginary coefficient is $b(n)[0,0] = 0$. It can be treated as a degenerate Gaussian variable with variance $0$.
	
	In other cases (\ie, $kl \neq 0$), to simplify notation, we ignore the position indices $[k, l]$ and adopt $w[u,v]$ to denote the sine function and the cosine function interchangeably. Based on the properties of the sine and cosine functions, there are
	\begin{equation}
		\sum\limits_{u=0}^{U-1} \sum\limits_{v=0}^{V-1} w[u,v] = 0,
	\end{equation}
	and 
	\begin{equation}
		\sum\limits_{u=0}^{U-1} \sum\limits_{v=0}^{V-1} w^2[u,v] = \frac{UV}{2}.
	\end{equation}
	
	As such, each Fourier coefficient can be denoted by a weighted sum of \iid~noise, \ie
	\begin{equation}
		X = \frac{1}{UV} \sum\limits_{u=0}^{U-1} \sum\limits_{v=0}^{V-1} n[u,v]w[u,v].
	\end{equation}
	The objective is now to show that, as $UV$ approaches infinity, the cumulant generating function of normalized $X$ converges to that of the standard Gaussian distribution, \ie, 
	\begin{equation}
		\lim_{UV \to \infty} K_{\tilde{X}}(t) = \frac{t^2}{2},
	\end{equation}
	where $\tilde{X}$ is a normalized version of $X$, \ie,
	\begin{equation}
		\tilde{X} = \frac{X - \mu[X]}{\sigma[X]}.
	\end{equation}
	Denote by $\mu_n$ and $\sigma^2_n$ the mean and variance of the \iid~noise $n$, we have
	\begin{equation}    \label{eqn: mu_X}
		\mu[X] = \frac{\mu_n}{UV} \sum\limits_{u=0}^{U-1} \sum\limits_{v=0}^{V-1} w[u,v] = 0,
	\end{equation}
	and
	\begin{equation}    \label{eqn: sigma_X}
		\sigma^2[X] = \frac{\sigma_n^2}{(UV)^2} \sum\limits_{u=0}^{U-1} \sum\limits_{v=0}^{V-1} w^2[u,v] = \frac{\sigma_n^2}{2UV}.
	\end{equation}
	Using the property of cumulants shown in (\ref{eqn: cumulant_property}), we can further have
	\begin{equation}
		\kappa_m[\tilde{X}] 
		= \frac{\kappa_m[X]}{\left(\sigma[X]\right)^m} 
		= \frac{\kappa_m[n] \cdot \sum\limits_{u=0}^{U-1} \sum\limits_{v=0}^{V-1} \left(w[u,v]\right)^m}{\left(UV \sigma_n^2 / 2\right)^{m/2}}.
	\end{equation}
	Given that the cumulants of \iid~noise are bounded (\ie, $ \big| \kappa_m[n] \big| \leq C$) and there is $\big| w[u, v] \big| \leq 1$, for $m > 2$, the following inequality holds:
	\begin{equation}
		\lim_{UV \to \infty}  \big| \kappa_m[\tilde{X}] \big| 
		\leq \lim_{UV \to \infty} \frac{CUV}{(UV \sigma_n^2 / 2)^{m/2}}
		= 0.
	\end{equation}
	This inequality indicates that, for $m > 2$, we have
	\begin{equation}
		\lim_{UV \to \infty}  \kappa_m[\tilde{X}] = 0.
	\end{equation}
	Furthermore, for $m=1$ and $m=2$, there are $\kappa_1[\tilde{X}] = \mu[\tilde{X}] = 0$ and $\kappa_2[\tilde{X}] = \sigma^2[\tilde{X}] = 1$. Therefore, we finally have
	\begin{equation}
		\lim_{UV \to \infty} K_{\tilde{X}}(t) 
		= \lim_{UV \to \infty} \sum\limits_{m=1}^{\infty} \kappa_m[\tilde{X}] \frac{t^m}{m!}
		= \frac{t^2}{2}.
	\end{equation}
	
	\textbf{For the statement (ii):}
	Following similar steps, we can also prove the claim that every linear combination of any two Fourier coefficients of \iid~noise converges to a Gaussian distribution as $UV$ approaches infinity. Based on the definition of the multivariate Gaussian distribution in~\cite{probability}, this claim indicates that the joint distribution of any two Fourier coefficients of \iid~noise converges to a multivariate Gaussian distribution as $UV$ approaches infinity. 
	
	Furthermore, according to the orthogonality of the 2D-DFT, Fourier coefficients of \iid~noise are uncorrelated with each other. For random variables jointly following a multivariate Gaussian distribution, they are independent if they are uncorrelated. Therefore, Fourier coefficients of \iid~noise becomes independent with each other as $UV$ approaches infinity.
\end{proof}

\section{Proof of the Theorem~\ref{theorem: CLT-DFT-stationary}} \label{proof: CLT-DFT-stationary}
\begin{proof}
	\textbf{For the statement (i):}
	A key property of the Fourier transform is that the convolution of two functions in spatial domain is the same as the product of their respective Fourier transforms in the Fourier domain, \ie,
	\begin{equation}    \label{eqn: DFT_conv}
		\mathscr{F}(n) 
		= \mathscr{F}(h * \eta) 
		= \mathscr{F}(h) \cdot \mathscr{F}(\eta).
	\end{equation}
	Consequently, there is
	\begin{align}       \label{eqn: DFT_conv_coefficient}
		a(n) &= a(h) \cdot a({\eta}) - b(h) \cdot b({\eta}), 
		\notag
		\\
		b(n) &= a(h) \cdot b({\eta}) + b(h) \cdot a({\eta}).
	\end{align}
	Based on Theorem~\ref{theorem: CLT-DFT-iid}, as $UV$ goes to infinity, $a({\eta})[k,l]$ and $b({\eta})[k,l]$ converge in distribution to two independent Gaussian distributions. Since the linear combination of independent Gaussian variables is still a Gaussian variable, both $a(n)[k,l]$ and $b(n)[k,l]$ will also converge in distribution to Gaussian distributions.
	
	\textbf{For the statement (ii):} As (\ref{eqn: DFT_conv_coefficient}) shows, each Fourier coefficient of $n$ is a linear combination of the Fourier coefficients of \iid~noise $\eta$ at the same position. Consequently, when the Fourier coefficients of $\eta$ at different positions become independent as $UV$ goes to infinity, Fourier coefficients of $n$ at different positions will also become independent. Furthermore, (\ref{eqn: DFT_conv_coefficient}) implies that the Fourier coefficients of $n$ at the same position, \ie, $a(n)[k,l]$ and $b(n)[k,l]$, will jointly converge to a multivariate Gaussian distribution as $UV$ goes to infinity. Since there is  
	\begin{equation}
		\mathbb{E}\big[ a(n)[k,l] \cdot b(n)[k,l] \big] = 0,
	\end{equation}
	which means that $a(n)[k,l]$ and $b(n)[k,l]$ are uncorrelated, we can conclude that they will become two independent Gaussian variables as $UV$ goes to infinity.
\end{proof}

\section{Proof of the Theorem~\ref{theorem: DFT-SE}} \label{proof: DFT-SE}
\begin{proof}
	\textbf{For the statement (i):}
	Without loss of generality, we focus this proof on a single component (\ie, the $a[k,l]$-related term) of the loss function. Ignoring the position indices $[k,l]$ for simplicity, we can obtain
	\begin{align}   \label{eqn: SE} 
		& \, \, \mathbb{E} \left[ \varphi\Big(a\big(f_\theta(x)\big) - a(y)\Big) \right]
		\notag
		\\
		& = \int_{-\infty}^{\infty} \varphi\Big(a\big(f_\theta(x)\big) - a(z) - a(n)\Big) p\big(a(n)\big) \,{\rm d}a(n)
		\notag
		\\
		& = \phi\Big(a\big(f_\theta(x)\big) - a(z)\Big),
	\end{align}
	where $\phi(t) = \varphi(t) * p(t)$. This equation indicates that utilizing a function $\varphi(t)$ with noisy coefficients is statistically equivalent to utilizing its blurred version $\phi(t)$ with clean coefficients. 
	Extending this equation to all components of the loss function, we can obtain the statistical equivalence as 
	\begin{equation}    \label{eqn: DFT-SE-appendix}
		\mathbb{E}\left[L_{\varphi}\big(f_{\theta}(x), y\big)\right] 
		= 
		L_{\phi}\big(f_{\theta}(x), z\big).
	\end{equation}
	
	\textbf{For the statement (ii):}
	Since $p(t)$ denotes the probability density function for a zero-mean Gaussian distribution, its derivative $p'(t)$ satisfies $p'(t)=-p'(-t)$ and $p'(t)<0$ for $t>0$. Using this property, we can get 
	\begin{align}
		\phi'(t) 
		& = \varphi(t) * p'(t)
		\notag
		\\
		& = \int_{-\infty}^{\infty} \varphi(t - \tau) p'(\tau) \,{\rm d}\tau
		\notag
		\\
		& = \int_{0}^{\infty} \varphi(t - \tau) p'(\tau) \,{\rm d}\tau
		+ \int_{-\infty}^{0} \varphi(t - \tau) p'(\tau) \,{\rm d}\tau
		\notag
		\\
		& = \int_{0}^{\infty} \varphi(t - \tau) p'(\tau) \,{\rm d}\tau 
		- \int_{0}^{\infty} \varphi(t + \tau) p'(\tau) \,{\rm d}\tau
		\notag
		\\
		& = \int_{0}^{\infty} \big(\varphi(t - \tau) - \varphi(t + \tau)\big) p'(\tau) \,{\rm d}\tau,
	\end{align}
	which is negative for $t<0$, zero for $t=0$, and positive for $t>0$. This means that  $t^{\star} = 0$ is the minimal point of the function $\phi(t)$. Consequently, $a\big(f_\theta^{\star}(x)\big) = a(z)$ and $f_{\theta}^{\star}(x)=z$ are the minimal points of (\ref{eqn: SE}) and (\ref{eqn: DFT-SE-appendix}), respectively. 
\end{proof}

\section*{Acknowledgment}
This work is supported in part by the Research Grants Council of Hong Kong (GRF 17200321, 17201822). The authors would like to thank Mr. Tao Liu, Mr. Chutian Wang, Mr. Chao Qu and Dr. Shuo Zhu for useful discussions about this paper. \revision{The source codes and trained models are publicly available at the linkage https://github.com/haosennn/IR-NSF.}

\ifCLASSOPTIONcaptionsoff
  \newpage
\fi



\bibliographystyle{IEEEtran}
\bibliography{ref,IEEEabrv}
\end{document}